\newcommand{\mnras}{MNRAS}
\newcommand{\aap}{A\&A}
\newcommand{\prd}{PRD}
\newcommand{\apj}{ApJ}
\newcommand{\aj}{AJ}
\newcommand{\araa}{ARA\&A}
\newcommand{\apjs}{ApJS}
\newcommand{\apjl}{ApJ}
\newcommand{\revmodphys}{Rev. Mod. Phys.}
\newcommand{\prl}{PRL}
\newcommand{\compphyscomm}{Comp. Phys. Comm.}
\newcommand{\nat}{Nat}
\newcommand{\sci}{Sci}
\newcommand{\baas}{BAAS}
\newcommand{\modsec}[1]{#1}
\newtheorem{defn}{Definition}
\newtheorem{thm}{Theorem}
\newtheorem{rem}{Remark}
\newtheorem{corrol}{Corollary}
\begin{document}
\title[Parameters from the XCS]{The {\it XMM} Cluster Survey: Forecasting cosmological and cluster scaling-relation parameter constraints}
\author[M.~Sahl\'en et al.]{Martin
Sahl\'en,$^1$\thanks{m.sahlen@sussex.ac.uk} Pedro T.~P.~Viana,$^{2,3}$
Andrew R.~Liddle,$^{1}$ A.~Kathy Romer,$^{1}$ \cr Michael
Davidson,$^{4}$ Mark
Hosmer,$^{1}$ Ed Lloyd-Davies,$^{1}$ Kivanc Sabirli,$^{5}$ \cr
Chris A.~Collins,$^6$ Peter E.~Freeman,$^5$ Matt Hilton,$^{7,8}$ Ben
Hoyle,$^9$
Scott T.~Kay,$^{10}$ \cr
Robert G.~Mann,$^{4}$ Nicola Mehrtens,$^1$
Christopher J.~Miller,$^{11}$ Robert C.~Nichol,$^9$  \cr
S.~Adam Stanford$^{12,13}$ and
Michael J.~West$^{14,15}$ (The XCS Collaboration)\thanks{http://xcs-home.org} 
\\
\vspace*{-6pt} {\small \em $^1$Astronomy Centre, University of Sussex,
Falmer, Brighton BN1 9QH, UK}\\
\vspace*{-6pt} {\small \em $^2$Departamento de Matem\'{a}tica Aplicada da
Faculdade de Ci\^{e}ncias da Universidade do Porto, Rua do Campo
Alegre, 687, 4169-007 Porto, Portugal}\\
\vspace*{-6pt} {\small \em $^3$Centro de Astrof\'{\i}sica da
Universidade do Porto, Rua das Estrelas, 4150-762 Porto, Portugal}\\
\vspace*{-6pt} {\small \em $^4$SUPA, Institute for Astronomy, University of Edinburgh, Blackford Hill, Edinburgh, EH9 9HJ, UK}\\
\vspace*{-6pt} {\small \em $^{5}$Department of Physics, Carnegie Mellon
University, 5000 Forbes Avenue, Pittsburgh, PA-15217, USA}\\
\vspace*{-6pt} {\small \em $^{6}$Astrophysics Research Institute,
Liverpool John Moores University, Twelve Quays House, Egerton Wharf,
Birkenhead CH41 1LD, UK}\\
\vspace*{-6pt} {\small \em $^{7}$Astrophysics \& Cosmology Research Unit, School of Mathematical Sciences, University of KwaZulu--Natal, Private Bag X54001, Durban 4000, S. Africa} \\
\vspace*{-6pt} {\small \em $^{8}$South African Astronomical Observatory, PO Box 9, Observatory, 7935, Cape Town, S. Africa} \\
\vspace*{-6pt} {\small \em $^9$ICG, University of Portsmouth,
Portsmouth PO1 2EG, UK}\\
\vspace*{-6pt} {\small \em $^{10}$Jodrell Bank Centre for Astrophysics,
School of Physics and Astronomy, The University of Manchester, Manchester M13 9PL, UK}\\
\vspace*{-6pt} {\small \em $^{11}$Cerro-Tololo Inter-American Observatory,
National Optical Astronomy Observatory, 950 North Cherry Avenue, Tucson,
AZ 85719, USA}\\
\vspace*{-6pt} {\small \em $^{12}$Department of Physics, University of
California at Davis, 1 Shields Avenue, Davis, CA 95616-8677, USA}\\
\vspace*{-6pt} {\small \em $^{13}$Institute of Geophysics \&
Planetary Physics,
Lawrence Livermore National Laboratory, L-413, P.O. Box 808, 7000 E.
Avenue, Livermore, CA 94551, USA}\\
\vspace*{-6pt} {\small \em $^{14}$Gemini Observatory, Casilla 603, La
Serena, Chile}\\
\vspace*{-6pt} {\small \em $^{15}$European Southern Observatory,
Alonso de C\'ordova 3107, Vitacura, Casilla 19001, Santiago 19, Chile}
}
\maketitle
\begin{abstract}
We forecast the constraints on the values of $\sigma_8$, $\Omega_{\rm
m}$, and cluster scaling relation parameters which we expect to obtain from
the {\it XMM} Cluster Survey (XCS). We assume a flat $\Lambda$CDM Universe and
perform a Monte Carlo Markov Chain analysis of the evolution of the number density of
galaxy clusters that takes into account a detailed simulated selection
function. Comparing our current observed number of clusters shows good agreement with predictions.
We determine the expected degradation of the constraints
as a result of self-calibrating the luminosity--temperature relation
(with scatter), including temperature measurement errors, and relying on
photometric methods for the estimation of galaxy cluster redshifts.
We examine the effects of systematic errors in scaling relation and
measurement error assumptions. Using only $(T,z)$ self-calibration, we expect to measure
$\Omega_{\rm m}$ to $\pm 0.03$ (and $\Omega_\Lambda$ to the
same accuracy assuming flatness), and $\sigma_8$ to $\pm 0.05$, also constraining the normalization and slope of the luminosity--temperature relation to $\pm 6$ and $\pm 13$ per cent (at $1\sigma$) respectively in the process. Self-calibration fails to jointly constrain the scatter and redshift evolution of the luminosity--temperature relation significantly. Additional archival and/or follow-up data will improve on this. We do not expect measurement errors or imperfect knowledge of their distribution to degrade constraints significantly. Scaling-relation systematics can easily lead to cosmological constraints $2\sigma$ or more away from the fiducial model. Our treatment is the first exact treatment to this level of detail, and introduces a new `smoothed ML' estimate of expected constraints.
\end{abstract}
\begin{keywords}
cosmological parameters -- cosmology: observations -- cosmology: theory -- galaxies: clusters: general -- methods: statistical --  X-rays: galaxies: clusters
\end{keywords}

\section{Introduction}
\label{Introduction}
The abundance of galaxy clusters as a function of mass and redshift
can give a powerful constraint on cosmological models.  Specifically,
data on the evolution of the number density of galaxy clusters with
redshift has been used to obtain direct estimates for both
$\sigma_{8}$, the dispersion of the mass field smoothed on a scale of
$8\,h^{-1}\;{\rm Mpc}$, and on $\Omega_{\rm m}$, the
present mean mass density of the Universe (\citealt{FWED,OB,VL96,OB97,H97}; \citealt*{BFC}; \citealt{Ekeetal,Retal,DV,
VL99,BSBL,H00,Betal}; \citealt*{RVP}; \citealt{Henry04,Getal06,Rozomaxbcg}).
Furthermore, such data could be used to constrain the present energy density of a dark energy
component, $\Omega_{w}$, and its equation of state
(\citealt{WS}; \citealt*{HMH}; \citealt{HT}; \citealt*{LSW}; \citealt*{WBK}; \citealt{BW,Hu,MJ03,MJ04,Wang04,LH,Mantz:2007qh}), or more
simply the present vacuum energy density associated with a
cosmological constant, $\Omega_{\Lambda}\equiv\Lambda/3H^{2}_{0}$
\citep*{HHM}.
Others have
suggested using galaxy clusters to constrain particle physics beyond
the Standard Model (e.g. \citealt{Wang05}; \citealt*{EGW}), or modified-gravity models
where it has been shown that e.g. the Dvali--Gabadadze--Porrati (DGP) modified-gravity model should be testable in coming surveys \citep{Tang,Schaefer}.  An alternative method to abundance evolution using X-ray galaxy clusters to constrain cosmology, is based on the gas mass fraction (e.g. \citealt*{Allen02}; \modsec{\citealt{Viketal03};} \citealt*{ER03}; \citealt*{RAW}; \citealt{Vik05b,Allen:2007ue, Rapetti:2007mw}).

Galaxy cluster measurements
are complementary to other cosmological constraints derived from the
Cosmic Microwave Background (CMB) and distant Type Ia Supernovae
observations, and thus help break degeneracies amongst the various
cosmological parameters
\citep{BOPS,HMH,HT,LSW,BW,MBBS,Wang04}.

Several surveys have
been proposed with the explicit aim of significantly increasing the
number of known distant clusters of galaxies. These proposals rely on
a variety of detection methods across a wide range of wavelengths: the
Sunyaev--Zel'dovich (SZ) effect in the millimeter (see \citealt*{CHR} for a
review, and \citealt{Juin} for a list of proposed surveys); galaxy
overdensities in the visible/infrared (e.g. \citealt{GY05,Hsieh,Rozooptical}); bremsstrahlung emission by the intracluster medium (ICM) in the X-rays \citep[e.g.][]{DUET,HaimanDETF,XMMLSSfuture}.  Galaxy
cluster identification using weak lensing techniques is another
possibility \citep[e.g.][]{Wittman}, but is still in its
infancy. Many of these proposals, in particular those
regarding the detection of distant clusters through their
X-ray emission, imply the building of new observing facilities such as
{\it eROSITA} \citep{erosita}, that will likely take many years to yield
results. The cluster X-ray temperature
is one of the best proxy observables in lieu of mass; it is a better estimator
of the cluster mass than the cluster X-ray luminosity
\citep[e.g.][]{Bal,Zhang} (but more difficult to determine), and galaxy clusters are also most unambiguously identified in X-ray images. This makes X-ray-based
galaxy cluster surveys those with the most accurately determined
selection function.  For all these reasons, we have undertaken to construct a
 galaxy cluster catalogue, called {\em XCS: XMM Cluster
Survey}, based on the serendipitous identification of galaxy clusters
in public {\it XMM--Newton} ({\it XMM}) data \citep{RVLM}.

The aim of this paper is to forecast the expected galaxy cluster samples from the XCS and, based on those, its ability to constrain cosmology and cluster scaling relations using only self-calibration. Specifically, we consider the expected constraints on $\Omega_{\rm m}$, $\sigma_8$ and the luminosity--temperature relation for a flat Universe. Our results represent the statistical power expected to be present in the full {\it XMM} archive.  This work builds upon previous efforts in several ways, and to a large extent constitutes the first coherent treatment of effects and methods previously only considered separately. Specifically, we combine all the following characteristics:
\begin{enumerate}
\item we use a Monte Carlo Markov Chain (MCMC) approach and can thus characterize all
  degeneracies exactly (in contrast to Fisher matrix analyses),
\item we include scatter in
  scaling relations in the parameter estimation (enabled by MCMC),
\item we include a detailed, simulated selection function (essentially that of the {\it XMM} archive), not a simple hard flux/photon-count/mass limit,
\item we include realistic photometric redshift errors, including degradation and catastrophic errors,
\item we include temperature measurement errors, partly based on detailed
  simulations of {\it XMM} observations, and propagate the redshift errors to the temperature, and,
\item we investigate quantitatively the effect on
  cosmological constraints from systematic errors in cluster scaling relation
  and measurement error characterization.
\end{enumerate}

Our work builds on the galaxy cluster survey exploitation methods developed and studied primarily in \citet{HMH,HHM,LSW,HK,Hu,BW,MJ03,MJ04,LHI,Wang04,LH}. Forecasted cosmological constraints from {\it XMM} data have also been considered for the {\it XMM}--LSS survey in \citet{RVP}, but they did not take into account scaling-relation scatter or measurement errors, and used the Press--Schechter mass function.
The most relevant precursors to this paper are \citet{HMH} and
\citet{MJ04}, who consider cosmological constraints expected from the {\it Dark Universe Exploration Telescope} ({\it DUET})
\citep{DUET} -- a $10000$ deg$^2$ X-ray survey with flux limit $\sim 5
\times 10^{-14}$ erg s$^{-1}$ cm$^{-2}$ in the $0.5$–-$2$ keV
band.
We extend the methodology of both papers through each of the six points above, either by more detailed modeling or by obtaining more robust results through the use of MCMC.
Other relevant works are \citet{Hut04,Hut06} and \citet{LimaHuphotoz}, who discuss photometric redshifts. We particularly complement these analyses through our detailed treatment/inclusion of measurement errors and selection effects.  The recent work by \citet{Rapetti:2007mw} takes an approach similar to ours in that they employ MCMC, include scaling-relation scatter and consider measurement errors,  but focuses on combining future X-ray gas mass fraction measurements with SZ cluster and CMB power spectrum data.

The structure of this paper is as follows. We begin by reviewing the progress to date of the XCS and present the survey selection function (Sect.~\ref{XCS}).  Next, we present the models and methodology we use to
derive constraints on cosmological parameters from the simulated XCS sample (Sects.~\ref{Connect} \& \ref{Methodology}). We then go on to the expected cluster distributions and, our estimates
for the constraints on $\sigma_{8}$, $\Omega_{\rm m}$, and cluster scaling
relation parameters that we expect to obtain from the XCS using
self-calibration, including the effect of temperature measurement
errors and relying on photometric methods to obtain XCS galaxy cluster
redshifts (Sect.~\ref{Results}).  We discuss and summarize our findings in Sect.~\ref{Conclusions}.  Additional material setting out modeling details is provided in the Appendix.

\section{The XMM Cluster Survey}
\label{XCS}

\subsection{Background and current status}
{\it XMM--Newton} is the most sensitive X-ray spectral imaging telescope
deployed to date. It is typically used in pointing mode, whereby it observes a single central target for a long period of time (the typical exposure time being $\sim 20$ kilo-seconds).
The field of view of the {\it XMM} cameras is roughly half a degree across, so that a considerable area around the central target is observed `for free' during these long pointings. Already many thousands of these pointings are available in the public {\it XMM} archive. The XCS is exploiting this archive by carrying out a systematic search for serendipitous detections of clusters of galaxies in the outskirts of {\it XMM} pointings \citep{RVLM}. Once a cluster candidate has been selected from the archival imaging data, it is then followed up using optical imaging and/or optical spectroscopy, to confirm the indentification of the X-ray source and to measure redshifts (see Sect.~\ref{Photozs}). For those XCS clusters that were detected with sufficient counts, an X-ray spectroscopy analysis is carried out, again using the archival data, in order to measure the temperature of the hot intracluster medium (ICM). These temperatures can then be used to study cluster scaling relations and/or to estimate the mass of the cluster (see Sects.~\ref{tempmass}~\&~\ref{lumtemp}).

\modsec{The XCS project is ongoing, but already more than $2000$ {\it XMM} pointings have been analysed, yielding a cluster candidate catalogue numbering almost $2000$ entries. So far, the XCS covers a combined area of $132$ deg$^{2}$ suitable for cluster searching and for which optical follow-up has been completed; i.e. this area excludes overlapping and repeat exposures, regions of low Galactic latitude, the Magellanic clouds, and pointings with very extended central targets. Around $75$--$100$ clusters with $>500$ photons and $T>2$ keV are present in this initial area.}
With many thousand more {\it XMM} pointings waiting to be analysed by the XCS, and a mission lifetime extending to $2013$, a conservative estimate for the final XCS area for cluster searching is $500$ deg$^{2}$. We use $500$ deg$^{2}$ herein for XCS cosmology forecasting (see Table~\ref{tab:surveys}), assume a redshift range of $0.1 \le z \le 1$, and temperatures of $2\,{\rm keV} \le T \le 8\,{\rm keV}$. We further limit our representative survey to clusters with photon counts $>500$ ($^{500}$XCS hereafter), so that we can be sure to estimate X-ray temperatures with reasonable accuracy (see Sect.~\ref{Terror}).  The lower redshift limit is associated with cluster extents becoming too large, and the cosmic volume also becoming small. The maximum redshift is chosen so that the luminosity--temperature relation can still be reliably modelled/estimated (see Sect.~\ref{ltscatter}). The temperature range is chosen such that we can expect i) a small contamination from galaxy groups (which typically have temperatures $T < 2\,{\rm keV}$), yet include as many of the numerous low-temperature clusters as possible, and ii) that clusters above the high-temperature limit are sufficiently rare that none can be expected.  The final cluster catalogue (without the cut-offs defined above for $^{500}$XCS) will contain several thousand clusters out to a redshift of $z \approx 2$. The highest-redshift cluster discovered by the XCS so far is XMMXCS~J2215.9-1738 at $z=1.457$  \citep{stanford2215,hilton2215}.

In addition to producing one of the largest samples of X-ray clusters ever compiled, the XCS will also be a valuable resource for cosmology studies (see Sect.~\ref{Methodology}). This is because the catalogue will be accompanied by a complete description of the selection function. In this work we make use of an initial XCS selection function that assumes simple models for the distribution of the ICM, and flat cosmologies (see below). Future cosmology analyses will take advantage of more sophisticated selection functions that are based on hydrodynamical simulations of clusters \citep{Kay07}.

\begin{table}
\centering
\begin{threeparttable}
\begin{tabular}{|l|c|}
  \hline
  Survey & $^{500}$XCS \\
  \hline
  Sky coverage & 500 deg$^2$ (serendipitous) \\
  Redshift coverage & 0.1 -- 1.0 \\
  X-ray temperature coverage & 2 -- 8 keV \\
  Min. photon count & 500\\
  X-ray flux limit &
   By selection function\tnote{a}
     \\
  \hline
\end{tabular}
\begin{tablenotes}
  \item [a] The flux limit is $\sim 3.5 \times 10^{-13}\;{\rm erg\,s^{-1}\,cm^{-2}}$ in the $[0.1,2.4]$ keV band, if defined as a probability of detection greater than or equal to $50$ per cent. See also Sect.~\ref{Distribnoerr} and Fig.~\ref{fig:xlimits}.
\end{tablenotes}
\caption{Survey specifications.}
\label{tab:surveys}
\end{threeparttable}
\end{table}

\begin{figure*}
\centering
\subfloat[Constant $L$--$T$
relation]{\includegraphics[width=8cm]{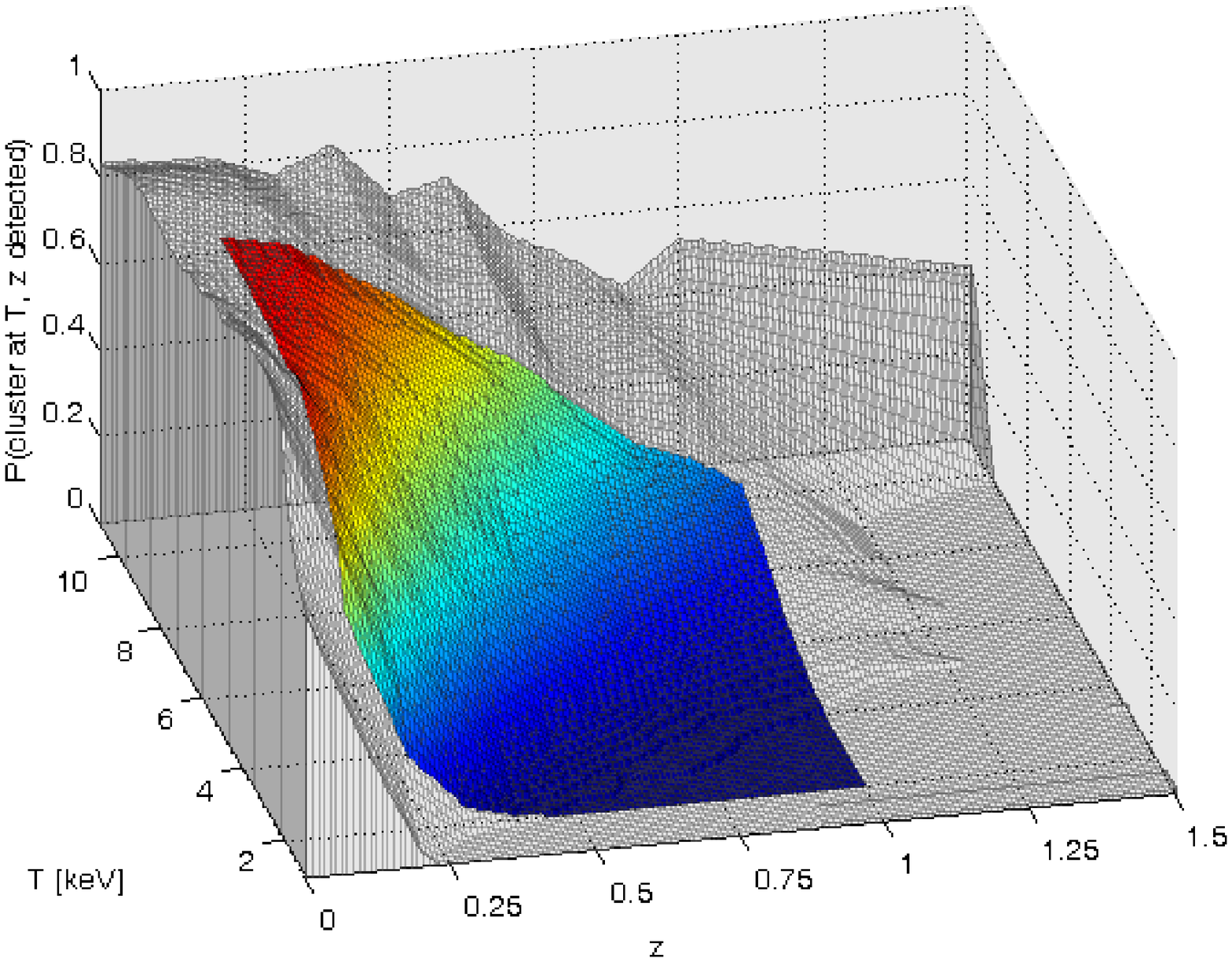}}
\subfloat[Self-similar $L$--$T$
relation]{\includegraphics[width=8cm]{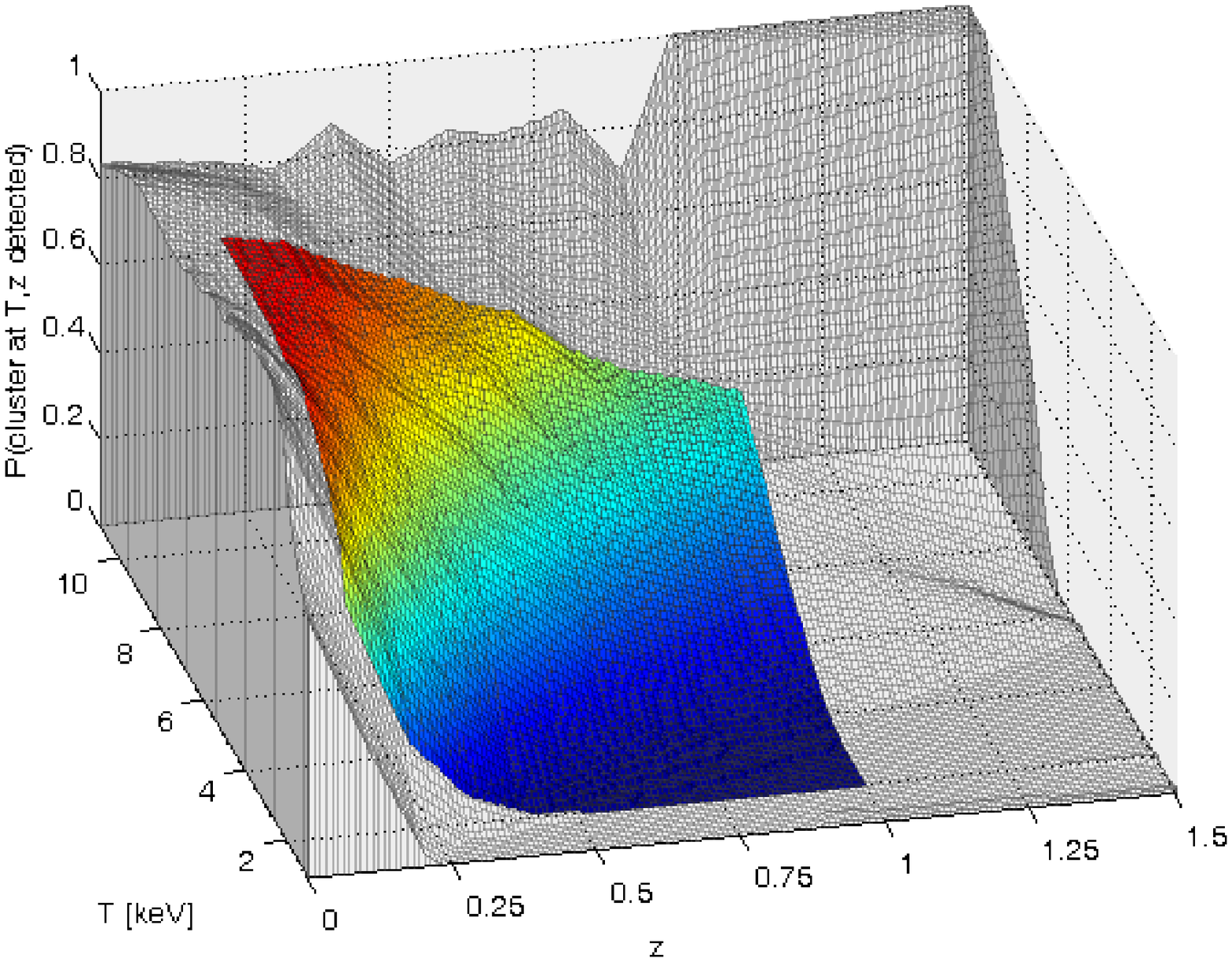}}
\caption{Selection function for our fiducial cosmology and different
$L$--$T$ evolution. Values in the shaded region are extrapolated from
those in the coloured region ($0.1~\le~z~\le~1.0,\,2\,{\rm keV}\le T
\le 8\,{\rm keV}$), for which the selection function has been
calculated explicitly.}
\label{fig:selfun}
\end{figure*}

\subsection{The XCS selection function}
\label{Selfun}

\subsubsection{Model}
\label{selfunmodel}
In order to properly model the selection function of a survey like the XCS, it is important to account for all of the observational variations present in real data. We can achieve this by placing a sample of fake surface-brightness profiles into real {\it XMM} Observation Data Files (ODFs). This ensures that our simulated images re-create real-life issues such as clusters lying on chip gaps and point-source contamination. The fake surface-brightness profiles are created as follows.  We use an isothermal $\beta$ model with $\beta = 2/3$, core radius $r_{\rm c} = 160\,{\rm kpc}$ (close to the mean values of $\beta = 0.64$, $r_{\rm c} =163\,{\rm kpc}$ obtained from a uniform {\it ROSAT} analysis of clusters from $0.1<z<1.0$; \citealp{OM04}), and plasma metallicity $Z = 0.3{\rm Z_{\sun}}$.
For a given cosmology we simulate $700$ sets of cluster parameters:
\begin{itemize}
\item $10$ redshifts (linearly spaced $0.1$--$1.0$)
\item $10$ luminosities (log. spaced $0.178$--$31.623 \times 10^{44}$ erg s$^{-1}$)
\item $7$ temperatures (linearly spaced $2$--$8$ keV)
\end{itemize}
For selection function determination, we drew on a list of $1764$ ODFs that have already been processed by the XCS and have been deemed to be suitable for cluster searching (see above). Before each selection function run, a smaller list of $100$ ODFs is selected at random from the full set of $1764$. These $100$ ODFs are then copied from the main XCS archive to local processing nodes for temporary storage, to speed up the analysis. Tests have shown that with $100$ ODFs it is still possible to reproduce the variance in exposure time, target type, point source density, etc., inherent to the XCS.  In the following we define a `selection function run' as the analysis over the $700$ sets of cluster parameters and $100$ ODFs -- a total of $70000$ combinations.

For each of the $700$ different combinations of cluster parameters, the process proceeds as follows. First, to account for the fact that the XCS searches the entire field of view for serendipitous cluster detections, the centre of the fake surface-brightness profile is randomly positioned into a blank {\it XMM}--style ODF, with a uniform probability across the field of view. The profile is then convolved with the appropriate PSF model. For this purpose we use the two-dimensional medium-accuracy model\footnote{http://xmm.vilspa.esa.es/external/xmm\_sw\_cal/calib/}.
At this stage, an ODF is chosen at random from the list of $100$ stored locally, into which the fake source will later be added. The profile is then assigned an absorbed count rate using a series of arrays calculated using X{\sc spec} \citep{arnaud96a}. The arrays tabulate conversions from unabsorbed bolometric luminosity to absorbed count-rate as a function of temperature, redshift, hydrogen column density, and {\it XMM} camera/filter combination. The fake count-rate image is then multiplied by the exposure map of the chosen ODF to account for vignetting, masking and chip gaps. Finally, the fake cluster image is added to the original ODF at the chosen position, and the ODF is run through our source detection/classification pipeline to determine if the fake cluster passes our automated cluster-candidate selection process. For more details on the detection/classification pipeline, refer to Davidson et al. (in preparation).
The process is repeated a total of one hundred times, so that we can build up an average XCS detectability for that parameter combination. Once the full set of $700$ combinations has been tested $100$ times each, the run is complete. We then change the cosmology inputs and start the entire sequence again. The process is very CPU intensive; each selection function run (of $700 \times 100$ combinations) takes several weeks to run on a single node.  For the forecasting work presented herein, we carried out seven selection function runs over the flat $\Lambda$CDM cosmologies with $\Omega_{\rm m} = 0.22, 0.26, 0.28, 0.30, 0.32, 0.34\,{\rm and}\,0.38$. We limit ourselves to flat cosmologies as we use a flatness prior in the forecasting of cosmological constraints.

The resulting selection function is shown in Fig.~\ref{fig:selfun} for the two
luminosity--temperature relations (see Sect.~\ref{lumtempevol}) we consider. Note that the selection function in
regions where we have not calculated it explicitly is extrapolated
from the region where we have done so. Hence, its features in those
extrapolated regions should only be considered a rough indication of
its behaviour, particularly in the high-redshift, high-temperature region. This region is only relevant for including
measurement errors, and since such high-temperature clusters are
exceedingly rare, the uncertainty in this part of the selection
function has no significant impact on our results\footnote{We have subsequently verified the validity of the extrapolation to this level of accuracy with new calculations.}.

\subsubsection{Uncertainty}
\modsec{
The shape of the selection function is dependent on the cluster model employed, as described above.
It is well known that clusters of galaxies have a range of morphologies, with core radii varying from many tens of kpc to a few hundred kpc and $\beta$ values varying generally between $0.45$ and $0.85$ \citep[e.g.][]{RB,OM04,MJFVS08}.}

\modsec{To include the variation of cluster-model parameters in our analysis in a realistic manner, one would require i) a model for the distribution of such parameters among the cluster population (including correlations among parameters), and ii) a characterization of the selection function dependence on such parameters. Lacking either or both of these will produce some level of uncertainty in cluster number predictions and cosmological parameter constraints. However, assessing the level of such uncertainty of course requires a fiducial model (realizing i) and ii)) to compare with. As we do not currently have a realistic model for the model-parameter distribution among the cluster population, it is somewhat premature to carry out such an analysis. In actual data analysis, we intend to model this in detail.  What we can currently do is to compare our standard selection function to one assuming that all clusters have the most extreme values of cluster-model parameters, leading to a gross overestimation of the overall uncertainty in cluster number predictions. We have carried this calculation out for clusters with temperatures typical for the underlying distribution at different redshifts, and describe it below. We again stress that its usefulness for estimating the actual uncertainty in the selection function is limited, as it does not take into account the actual distribution of clusters and their model parameters. Ultimately, we expect that the cosmological constraints we obtain would change little, even if a more realistic model/selection function was used. This is because the changes in the selection function would have a similar impact for the fiducial cosmological model, and for models in its neighbourhood.}

\modsec{We have tested the sensitivity of the selection
function for clusters with $>500$ photons to variations in the cluster core radius $r_{\rm c}$, between the values of $60$~kpc and $260$~kpc (recall the fiducial value used in this work is $160$~kpc, see Sect.~\ref{selfunmodel}). For this we use mock clusters with typical temperatures of $T~=~3\,{\rm keV}$ (and hence luminosities) for a given redshift (as predicted by our models, see Sects.~\ref{Connect}~\&~\ref{Methodology}). In the following, we refer to the relative difference in the selection function detectability, as this is most relevant to the relative difference in numbers of clusters.
Our results show that, for most of the redshift range tested ($0.1\le z\le 1$), clusters with a core radius of $\sim140$ kpc are easier to detect (as extended {\it XMM} sources), than those with smaller or larger core radii. However, the dependence is shallow; the relative uncertainty in the detectability is less than $10$ per cent up to a redshift of $z \sim 0.4$, across the entire $r_{\rm c}$ range. At higher redshifts, the relative uncertainty approaches $30$--$40$ per cent. However, this subset of clusters constitutes only $\sim 30$ per cent of the total population. For higher-temperature clusters, the relative uncertainty drops back to around $10$ per cent at $0.1\le z \le 1$ for $4\,{\rm keV}\le T \le 5\,{\rm keV}$.}

\modsec{In summary, our model for the cluster population is a simplification based on mean observational values of cluster-model parameters, and as a result will have somewhat differing detection properties compared to a real sample. To characterize such uncertainty requires modelling of the cluster-model parameter distribution, and the selection function dependence on those parameters. However, once such information becomes
available, it will be included in the analysis and hence remove/reduce such uncertainty. As we do not currently have
a realistic model of the cluster parameter distribution, we have determined the impact of assuming extreme structural cluster parameters, and found that at most the typical selection function uncertainty is of the order of $10$ per cent.  These results agree with those of \citet{Burenin07} for the 400d survey, which show that reasonable variations in cluster size, morphology and scaling relations induce an uncertainty in the detectability for a given flux of typically less than $5$ per cent.}

\section{From X-ray observables to mass}
\label{Connect}

\subsection{Modeling summary}
Making predictions for X-ray cluster observations requires the modeling of scaling relations to relate temperature to mass, and temperature to luminosity. In addition, the observables will have uncertainties associated with them, which need to be taken into account. The following subsections detail our modeling assumptions, but we summarize them here for reference and orientation.

We first assume that we know a priori exactly how
the cluster X-ray temperature relates to luminosity at the present time, and
how this relation evolves with redshift.  We then study how the constraints on cosmological parameters
degrade if such an assumption is dropped. We consider four extra free
parameters: two parameters to characterize the present-day, power-law,
relation between cluster X-ray temperature and luminosity, another to
describe its redshift evolution as a power of ($1+z$), and lastly one for the
logarithmic dispersion in the (assumed) Gaussian distribution of the intrinsic
(redshift-independent) scatter in the relation between cluster X-ray
temperature and luminosity.

In addition, we evaluate the full impact on the XCS's ability to impose constraints on cosmological parameters that arises from
assuming a dispersion in the Gaussian photometric redshift
distribution of either $5$ or $10$ per cent about the true redshift, both with and
without the presence of unaccounted-for catastrophic errors in the
photometric redshift estimation procedure. Further, we will also
determine the impact of a systematic mis-estimation of the assumed
true dispersion in the photometric redshifts about the true
redshift. Our aim is to test the impact of realistic assumptions
regarding the distribution of photometric redshifts around the true
redshift, and then determine by how much such impact increases by
considering a worst-case scenario.

Similarly, we consider the impact of realistic X-ray temperature
errors obtained from simulations based on the relevant {\it XMM} fields, as
well as significantly larger errors corresponding to a worst-case
scenario. Lastly, we consider the impact of incorrect assumptions about the
cluster scaling relations on cosmological constraints.

Summary tables with our main cluster scaling relation and measurement error assumptions are given in Sect.~\ref{Results}.  Detailed information on the mathematical treatment is given in the Appendix.

\subsection{The X-ray temperature to mass relation}
\label{tempmass}

We need to assume a relation between cluster X-ray temperature and mass to be able to predict cluster distributions. The reason is that presently the effect of cosmological parameters on the galaxy cluster population can only be
accurately predicted as a function of cluster mass \citep[e.g.][]{RB}. The X-ray temperature is one of the best proxy observables, as explained in the Introduction.

\subsubsection{Evolution}  We assume the self-similar prediction
\citep[e.g.][]{Kaiser,BN,Voitb},
\begin{equation}
T\propto M_{\rm v}^{2/3}\,[\Delta_{\rm v}(z)E^{2}(z)]^{1/3}\,,
\end{equation}
for the redshift dependence
of the relation between cluster X-ray temperature and virial mass
to hold for any combination of cosmological parameters, given that
it is consistent with the most recent analyses of observational
data (\citealt{Ettorib,Ettoria}; \citealt*{APP}; \citealt{KVa,KVb,Vik05b,Zhang}).
Here $M_{\rm v}$ is the cluster virial mass, while $\Delta_{\rm
v}(z)$ is the mean overdensity within the cluster virial radius with
respect to the critical density. If the only relevant energy densities
in the Universe are those associated with non-relativistic matter and
a cosmological constant, then
\begin{equation}
E^{2}(z) = \Omega_{\rm m}(1+z)^{3}+\Omega_{k}(1+z)^{2}+\Omega_{\Lambda}\,,
\end{equation}
with $\Omega_{k}=1-\Omega_{\rm m}-\Omega_{\Lambda}$. (Note that we will restrict ourselves to a flat Universe, $\Omega_{k}=0$, in our analysis -- see Sect.~\ref{Cosmology}). Deviations from a self-similar mass--temperature relation will be considered in Sect.~\ref{systbiases}, as explained in the following Section.

\subsubsection{Normalization} The constant of proportionality is set by demanding that for our fiducial
cosmological model (with $\sigma_8 = 0.8$, see Sect.~\ref{Cosmology})
\begin{equation}
M_{500}=3\times10^{14}\,h^{-1}\,{\rm M_{\sun}}\,
\end{equation}
at $z=0.05$ for an X-ray temperature of 5 keV, where $M_{500}$ is
the mass within a sphere centered on the cluster within which its mean
density falls to 500 times the critical density at the cluster
redshift.  In this way, our fiducial cosmological model reproduces the
local abundance of galaxy clusters as given by the HIFLUGCS catalogue
(\citealt{RB}; \citealt*{psw}; \citealt{Viana03}).  Note that such a normalization of the cluster
X-ray temperature to mass relation happens to be very close to that
directly derived from X-ray data by \citet{APP} and \citet{Vik05b}.

The conversion between $M_{500}$ and the halo mass, $M_{180\Omega_{\rm
m}(z)}$, will be carried out by using the formulae derived by
\citet{HK} under the assumption that the halo density profile is of
the NFW type \citep*{NFW95,NFW96,NFW97}, and we will take the
concentration parameter to be 5. This has been shown to provide a good
description of the typical density profile in galaxy clusters (see
\citealt{Arnaud} or \citealt{Voita} and references therein;
\citealt{Vik05b}).

The normalization of the $M$--$T$ relation is subject to a number of uncertainties, the most important of which are the possible violation of hydrostatic equilibrium (\citealt*{RTM04}; \citealt*{NKV07}) and the possible difference between the spectroscopic X-ray temperature and the temperature of the electron gas \citep{Mazzotta04,Rasia05,Vik06}. The precise level of these effects remains to be firmly established, but could be of the order $50\%$ in the normalization mass \citep[e.g.][]{Vik06,NKV07}. The scatter, as well as slope, could also be under-estimated due to these effects \citep{Vik06, NKV07}. We make some estimates of all these systematic effects on cosmological constraints in Sect.~\ref{systbiases}.

\subsubsection{Scatter} We assume that the intrinsic scatter in the relation between
cluster X-ray temperature and mass has a Gaussian distribution
(truncated at $3\sigma$ and re-normalized) with a redshift-independent
dispersion of 0.10 about the logarithm of the temperature. This is motivated
by both cluster X-ray data analysis \citep[e.g.][]{APP,Vik05b,Zhang}
and results from $N$-body hydrodynamic simulations
(e.g. \citealt{Viana03,Borgani,Bal}; \citealt*{KVN}). As explained in the preceding Section, we consider systematic deviations in the scatter in Sect.~\ref{systbiases}.

\subsection{The X-ray luminosity to temperature relation}
\label{lumtemp}

In order to understand how the XCS selection function depends on
cluster mass, we need to know how cluster X-ray luminosity and
temperature relate to cluster mass (see Sect.~\ref{Selfun}).  In
practice, we will use the relation between luminosity
and temperature instead of that between luminosity and
mass, in effect relating these two quantities via the
temperature. This makes sense because the estimation of cluster mass
from X-ray data is always based on the X-ray temperature,
via the assumption of hydrostatic equilibrium, and not on the
luminosity. Thus, while we always need, at least implicitly, to know
how the cluster luminosity relates to temperature to derive the
relation between the luminosity and mass from X-ray
data, the reverse is not true.

As for the mass--temperature relation, assuming self-similarity leads to a specific prediction \citep{Kaiser},
\begin{eqnarray}
L(z,T)  = L(0.05,T)
 \left[\frac{\Delta_{\rm v}(z)E^{2}(z)}
{\Delta_{\rm v}(0.05)E^{2}(0.05)}\right]^{1/2}\,,
\end{eqnarray}
under which clusters with the same X-ray temperature are predicted to
be more X-ray luminous if they have a higher redshift. We have chosen here to normalize the relation with respect to the local ($z=0.05$) relation.
Based on this expression, we write the $L$--$T$ relation in the general form
\begin{eqnarray}
\label{eq:ltrel}
 \log_{10}\left(\frac{L_{\rm X}}{10^{44} h^{-2}\;{\rm erg}\,{\rm
s}^{-1}}\right) = \alpha + \beta \log_{10} \left(\frac{kT}{1\rm
\,keV}\right) + & & \\
\nonumber
 & \!\!\!\!\!\!\!\!\!\!\!\!\!\!\!\!\!\!\!\!\!\!\!\!\!\!\!\!\!\!\!\!\!\!\!\!\!\!\!\!
 \!\!\!\!\!\!\!\!\!\!\!\!\!\!\!\!\!\!\!\!\!\!\!\!\!\!\!\!\!\!\!\!\!\!\!\!\!\!\!\!
 \!\!\!\!\!\!\!\!\!\!\!\!\!\!\!\!\!\!\!\!\!\!\!\!\!\!\!\!\!\!\!\!\!\!\!\!\!\!\!\!\!\!\!\!\!\!\!\!\!\!
 \!\!\!\!
 \gamma_{\rm s} \log_{10} \left[\Delta_{\rm v}(z)E^{2}(z)\right] +
\gamma_z \log_{10}\left(1+z\right) + {\rm N}(0,\sigma_{\log L_{\rm X}})\,. &
\end{eqnarray}
and discuss below the assumptions made for the different parameters.

\subsubsection{Evolution $(\gamma_s, \gamma_z)$}
\label{lumtempevol}
We consider two possible fiducial scenarios, which bracket most
observational results and theoretical expectations: either
\begin{itemize}
  \item no evolution ($\gamma_s=\gamma_z=0$) \,\,\, \textit{or}
  \item self-similar evolution ($\gamma_s=1/2,\gamma_z=0$)
\end{itemize}
for the fiducial combination of cosmological parameters. The parameters $\gamma_s$ and $\gamma_z$ are defined above in equation~(\ref{eq:ltrel}).
  Presently, there is some uncertainty surrounding the redshift evolution of the
relation between cluster X-ray luminosity and temperature.
Essentially, what we know is how that relation behaves for redshifts below 0.3
(e.g. \citealt{Ikebe}; \citealt*{NSH}; \citealt{Ota,Zhang}).
For higher redshifts, the data is still sparse, and the
evidence contradictory, from claims that the relation between
cluster X-ray luminosity and temperature barely evolves at all with redshift
\citep{Holden,Ettorib,Ettoria,Ota,Branchesietal},
to claims that its evolution is close to the self-similar prediction
\citep{NSH,Vik02,Lumb,KVa,Mau,Zhang,Hicks:2007ip}. Some authors argue that self-similarity remains viable at all redshifts, and that at least some of the observed discrepancies could be due to selection effects, as the Malmquist bias from scaling-relation scatter (also discussed below) could distort the deduced evolution if the sample selection is not sufficiently understood \citep[e.g.][]{Branchesietal,MauYX,XMMLSSfirst,Nord}. On the other hand,  \citet{hilton2215} argue for deviation from the self-similar prediction based on a set of high-redshift clusters combined with the recently discovered XCS cluster XMMXCS~J2215.9-1738 at $z = 1.457$.

When the XCS catalogue becomes available, the relation between
cluster X-ray luminosity and temperature, as a function of redshift,
will be estimated jointly with the cosmological parameters, but for now we will have to rely on the limited information available.

\subsubsection{Normalization \& slope $(\alpha, \beta)$}  We assume the local ($z=0.05$) relation between the cluster X-ray
luminosity in the {\it ROSAT} $[0.1,2.4]$ keV band and temperature to be
\begin{equation}
\log_{10}\left(\frac{L_{\rm X}}{h^{-2}\;{\rm erg}\,{\rm s}^{-1}}\right)=
42.1+2.5\log_{10}\left(\frac{kT}{1\,{\rm keV}}\right)\,,
\end{equation}
as was derived in \citet{Viana03} for a combination of cosmological
parameters similar to those assumed for our fiducial cosmological
model. The X-ray data used in \citet{Viana03} was that of galaxy clusters present in the HIFLUGCS catalogue \citep{RB}, and
therefore the conversion between $L_{\rm X}$ and X-ray bolometric
luminosity is performed through a fit (derived by us) based on the
values both quantities take for the galaxy clusters in HIFLUGCS,
\begin{equation}
L_{\rm bol} = \frac{L_{\rm X}}{0.25 + 0.7\exp\left(-0.23kT/1\,
{\rm keV}\right) \,.}
\end{equation}
As in \citet{Ikebe}, the relation between the cluster X-ray luminosity
and temperature derived in \citet{Viana03} takes into account the fact
that any flux-limited sample of galaxy clusters will be composed of
objects which are on average more X-ray luminous than the mean
luminosity of all existing galaxy clusters with the same redshift and
X-ray temperature. This Malmquist type of bias increases with
decreasing temperature, and thus ignoring it leads not only to an
overestimation of the normalization of the relation between luminosity and temperature,  but also to an underestimation of its
slope.

\subsubsection{Scatter $(\sigma_{\log L_{\rm X}})$}
\label{ltscatter}
We assume that the intrinsic scatter in the relation between
cluster X-ray luminosity (in the 0.1 to 2.4 keV band) and temperature
has a redshift-independent Gaussian distribution (truncated at
$3\sigma$ and re-normalized) about the logarithm of the X-ray
luminosity, with $1\sigma$ dispersion $\sigma_{\log L_{\rm X}} = 0.30$
\citep{Ikebe,Viana03}. This is also close to what was found by
\citet{Kay07} in the CLEF simulation. Although Kay et al. also observe an
evolution of the scatter with redshift, there is no strong
observational evidence for or against such an evolution at present, and therefore we do not include it in our analysis.

The existence of intrinsic scatter in the relation between cluster
luminosity and mass will effectively increase the observed
number of galaxy clusters above any X-ray luminosity (or flux)
threshold, relative to the case without scatter. This results from the steepness of the cluster mass function,
due to which significantly more clusters have their X-ray luminosity
scattered up than down across any given luminosity
threshold. Therefore, intrinsic scatter between
X-ray luminosity and mass can have a
considerable impact on the predicted number of XCS clusters and on the
estimation of the constraints the XCS will impose on
cosmological parameters. This scatter can be considered as the combination of the scatter in the luminosity to
temperature and temperature to mass relations, with clear
observational evidence that the former dominates over the latter
\citep{stanekl,Zhang}.

As higher redshifts are considered, it is expected that an increasing
number of galaxy clusters will have undergone recent major mergers,
not only leading to increased scatter in the cluster scaling relations
but also making its distribution highly non-Gaussian, with long tails
developing towards both high X-ray luminosity and, to a lesser degree,
temperature, at fixed mass \citep*{RSR}. This has the potential to
substantially affect the estimation of the constraints the XCS will be
able to impose on cosmological parameters. There is a lack of high-redshift observational data in this regard and we are also not confident that we will detect, for the purposes of understanding this behaviour, many useful clusters at $z>1$. We therefore chose to consider in the estimation procedure only those clusters in the mock XCS catalogues which have a redshift $z \le 1$.

\subsection{Photometric redshifts}
\label{Photozs}

\subsubsection{The role of photometric redshifts}

Redshifts are required for XCS clusters, both to place them correctly in the evolutionary sequence and to allow the measurement of X-ray temperatures from {\it XMM} spectra. With regard to the latter point, pure thermal bremsstrahlung spectra are essentially featureless (barring a high-energy cut-off), making them degenerate in temperature and redshift. Therefore, in the absence of independent redshift information, all one can measure from a typical XCS cluster spectrum would be a so-called apparent X-ray temperature, i.e. one scaled by  $(1+z)$, see Appendix~\ref{clustcnterr}. As shown by \citet{LVRM}, these apparent temperatures are not sufficient to allow one to measure cosmological parameters from cluster catalogues. As a result, optically-determined redshifts will be required for almost all clusters in the XCS catalogue (the exception being a tiny number that are detected with sufficient signal to noise to allow X-ray emission features, such as the Iron K complex at $\sim 7$ keV, to be resolved over the thermal continuum).

As is now typical for cluster surveys \citep[e.g.][]{GY05}, the XCS is relying heavily on the photometric redshift technique for its optical follow-up. This is because photometric redshifts are much more efficient, in terms of telescope time requirements, than spectroscopic redshifts. However, they have the disadvantage, over spectroscopic redshifts, that the redshift errors are larger and sometimes poorly understood. The XCS is using both public-domain photometry (e.g. from SDSS and 2MASS) and proprietary data from the NOAO--XCS Survey (NXS, \citealp{nxs}) to both optically confirm (as clusters) XCS candidates and to measure photometric redshifts. To date, more than $400$ XCS candidates have been optically confirmed in this way.

Errors on photometric redshifts must be accounted for when determining cosmological parameters from cluster surveys, and so we have included prescriptions for such errors in the forecasting work presented herein. Our prescriptions include both purely statistical errors and so-called catastrophic systematic errors. As shown by previous work \citep{Hut04,Hut06,LimaHuphotoz}, purely statistical errors have a negligible impact on cosmological parameter constraints. By contrast, if catastrophic errors are not accounted for properly in the fitting, they could have a significant impact on cosmological parameter constraints.  We note that previous work has concentrated only on the impact of redshift errors on the evolutionary sequence, whereas we have also included the impact of photometric errors on X-ray temperature determinations.

\begin{figure}
\includegraphics[width=0.95\linewidth]{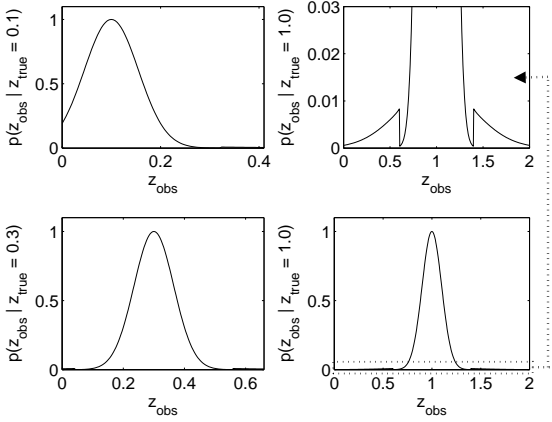}
\caption{Realistic redshift error distributions at various redshifts. The upper right panel shows a magnification of the bottom-right distribution, highlighting the catastrophic-error part of the distribution.}
\label{fig:zerrdists}
\end{figure}

\subsubsection{Distribution} Following \citet{Hut04} we assume that the statistical error in the
photometric redshifts of individual galaxy clusters has a Gaussian
distribution about the true redshift, $z_{\rm t}$. In an attempt to
reproduce the expected degradation with redshift of the absolute
accuracy of cluster photometric redshift estimation methods, and in contrast to \citet{Hut04} (but in the same way as \citealt{LimaHuphotoz}), we will
assume the dispersion to be proportional to $(1+z_{\rm t})$,
normalized at $z_{\rm t}=0$ to either $\sigma_0 = 0.05$ or $\sigma_0 = 0.10$
\citep{GY00,Gladders,GY05,Getal06}.  Unaccounted-for systematic errors
in the photometric redshift estimation procedure are much harder to
model, because they can take a variety of guises.  We will consider
here one such type of error: catastrophic errors in the photometric
redshift estimation procedure.
The existence of unaccounted-for catastrophic errors will be modelled
by assigning a random photometric redshift error to either a fraction $f_{\rm cat} = 0.05$ or $f_{\rm cat}= 0.10$ of the galaxy clusters, drawn
from a Gaussian distribution that has four times the dispersion of the
standard distribution, with the requirement that the photometric
redshift error has to be at least $1\sigma$ away from the true
redshift. The functional form of the redshift error distribution is
given in Appendix~\ref{clustcnt}.

We label the case $\{\sigma_0=0.05, f_{\rm cat}=0.05\}$
`realistic' and the case $\{\sigma_0=0.10, f_{\rm cat}=0.10\}$
`worst-case' redshift errors. Examples of realistic redshift error
distributions are shown in Fig.~\ref{fig:zerrdists}.

\subsection{X-ray temperature}
\label{Terror}

\subsubsection{Estimating the measurement errors}

Initial estimates \citep{LVRM} showed that X-ray temperatures measured for XCS clusters are expected to have an associated measurement uncertainty of less than $20$ per cent at 1$\sigma$. However, these estimates were based on a photon count of $1000$ and assume a single hydrogen column density over the {\it XMM} fields, and are therefore not directly applicable to our $^{500}$XCS sample.  Hence, in order to estimate the temperature errors that will be present in the XCS statistical sample more accurately, we have conducted Monte Carlo simulations using the X{\sc spec} spectral fitting package
\citep{arnaud96a}. We created $1000$ sets of fake
spectra for the {\it XMM--Newton} EPIC PN and MOS instruments, from a
{\sc mekal} plasma model \citep*{mewe86a} multiplied by a {\sc wabs}
photo-electric absorption model \citep{morrison83a}. Responses for a
mean off-axis angle were used and a mean background was added. The
model was then fitted to each of the spectra to derive a
temperature. A plasma metallicity of $0.3{\rm Z_{\sun}}$ was used throughout, in accordance with the assumptions in our selection function calculations (see
Sect.~\ref{Selfun}), and we assume a photon count of $500$. This procedure was repeated for a range of input temperatures, redshifts and absorption columns. We then marginalize over the hydrogen absorption columns using the expected hydrogen column distribution for our {\it XMM} fields.

\begin{figure}
\includegraphics[width=8.5cm]{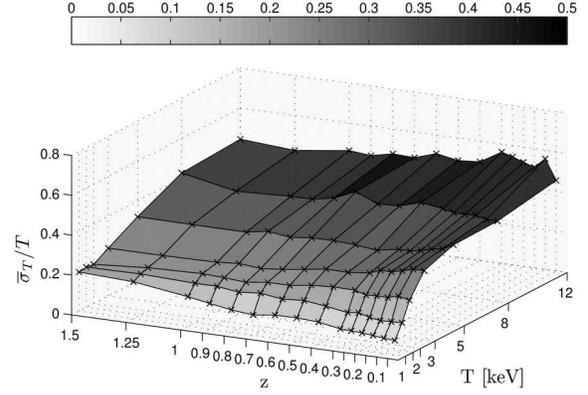}
\caption{Mean fractional temperature errors from the simulations
performed, for $500$ photons, and as marginalized over expected absorption columns for the
XCS.} \label{fig:terrdisted}
\end{figure}

\begin{figure}
\includegraphics[width=0.95\linewidth]{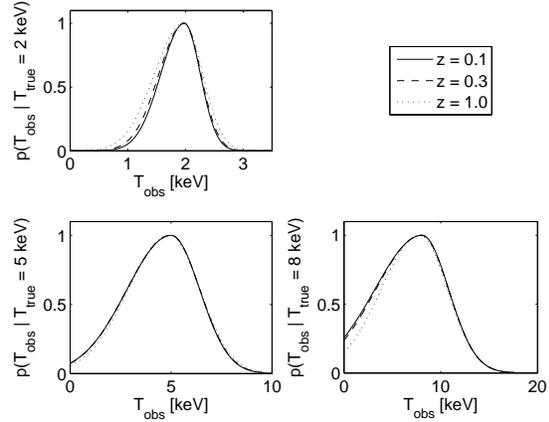}
\caption{Realistic temperature error distributions at various
redshifts and temperatures, based on our {\it XMM--Newton} simulations, for a photon count of $500$.} \label{fig:terrdists}
\end{figure}

The mean fractional temperature errors from our simulations are shown in
Fig.~\ref{fig:terrdisted}.
The largest influence on the temperature errors comes from the
input temperature itself. Since metal lines in the spectrum provide
much better constraints on the temperature than the shape of the
bremsstrahlung continuum, and the fraction of line emission in the
spectrum declines with increasing temperature, the errors are
larger for hotter clusters. The effect of redshift on the errors is
much smaller and itself temperature-dependent. For low-temperature
systems at high redshifts, part of the X-ray spectrum is shifted out of the bottom of
the {\it XMM} passband, increasing the errors. For high-temperature systems, the effect of increasing redshift is to shift the source spectrum to lower energies for which the {\it XMM} effective area is larger, thus decreasing the errors.

\subsubsection{Distribution}
The distribution of temperatures obtained in our simulations was fitted by an
asymmetric Gaussian function to parametrize the temperature error
distribution, with the fractional error given by a two-dimensional quadratic expression in temperature and redshift.  We marginalize over the distribution of
absorption columns found in XCS fields to obtain mean parameters for
our asymmetric Gaussian error distribution.  The exact functional form of the fitted error distribution is given in equation~(\ref{eq:tdist}) in
Appendix~\ref{clustcnt}.

We will label the case with $\sigma_{T}$ according to our simulation
results as `realistic' and the case with three times this dispersion
as `worst-case' temperature errors. Examples of realistic temperature
error distributions are shown in Fig.~\ref{fig:terrdists}.  Note that, as we are assuming that all detected clusters have a photon count of exactly $500$, our error distributions represent a worst-case scenario in this regard.

\section{From mass to cosmology and constraints}
\label{Methodology}

\subsection{The mass function}
\label{Massfunction}
Having connected our direct X-ray observables to cluster mass using the methodology in the preceding Section, we can then combine these relations with the mass function (below) to find the cluster distribution as a function of temperature and redshift.
The differential number density of haloes in a mass
interval ${\rm d}M$ about $M$ at redshift $z$ can be written as
\begin{equation}
n(M,z)\,{\rm d}M = -F(\sigma)\,\frac{\rho_{\rm m}(z)}{M\sigma(M,z)}\,
\frac{{\rm d}\sigma(M,z)}{{\rm d}M}\,{\rm d}M\,,
\end{equation}
where $\sigma(M,z)$ is the dispersion of the density field at some
comoving scale $R=(3M/4\pi\rho_{\rm m})^{1/3}$ and redshift $z$, and
$\rho_{\rm m}(z) = \rho_{\rm m}(z=0)(1+z)^3$ the matter density.

\subsubsection{Parametrization}
It has been shown by \cite{jetal} that a good fit (accurate to better
than 20 per cent) to the mass functions recovered from various large
$N$-body simulations, in the regime $-1.2\leq\ln\sigma^{-1}\leq1.05$,
is given by
\begin{equation}
F_{\rm J}(\sigma) = 0.315 \, \exp\left[-|\ln\sigma^{-1}+
0.61\,|^{3.8}\right]\,,
\end{equation}
where the halo mass is defined at a mean
overdensity of 180 with respect to the background matter density,
independently of the cosmological parameters assumed, or equivalently
to a mean overdensity of $180\Omega_{\rm m}(z)$ with respect to the
critical density. This result has been confirmed by
\citet{Evrard02,HK,Klypin,LJ,Reed03}; \citet*{LBH}; \citet{Warren,Reed07}, and we will
therefore use this fit to estimate the expected comoving
number density of haloes for any given combination of cosmological
parameters. This also makes the like-for-like comparison with other
cluster constraints straightforward, as most rely on the Jenkins mass
function. The dispersion $\sigma$ is calculated using a fit in analogue to \citet{VL99}, which is accurate to within two percent for the range of halo masses relevant for this work, compared to the exact expression employing the BBKS transfer function \citep{BBKS}. \modsec{(In a real data analysis, this prescription would not be sufficiently accurate, however for forecasting purposes it is acceptable.)}

\subsection{Cosmology}
\label{Cosmology}
We have already seen that cosmology enters into the prediction of cluster numbers as a function of temperature and redshift through the selection function, the cluster scaling relations, and the mass dispersion. Additionally, the cosmic volume $dV/dz$ will also enter as we need to multiply the differential distribution by this quantity (discussed in the following Section).

\begin{figure*}
\centering
\includegraphics[width=13cm]{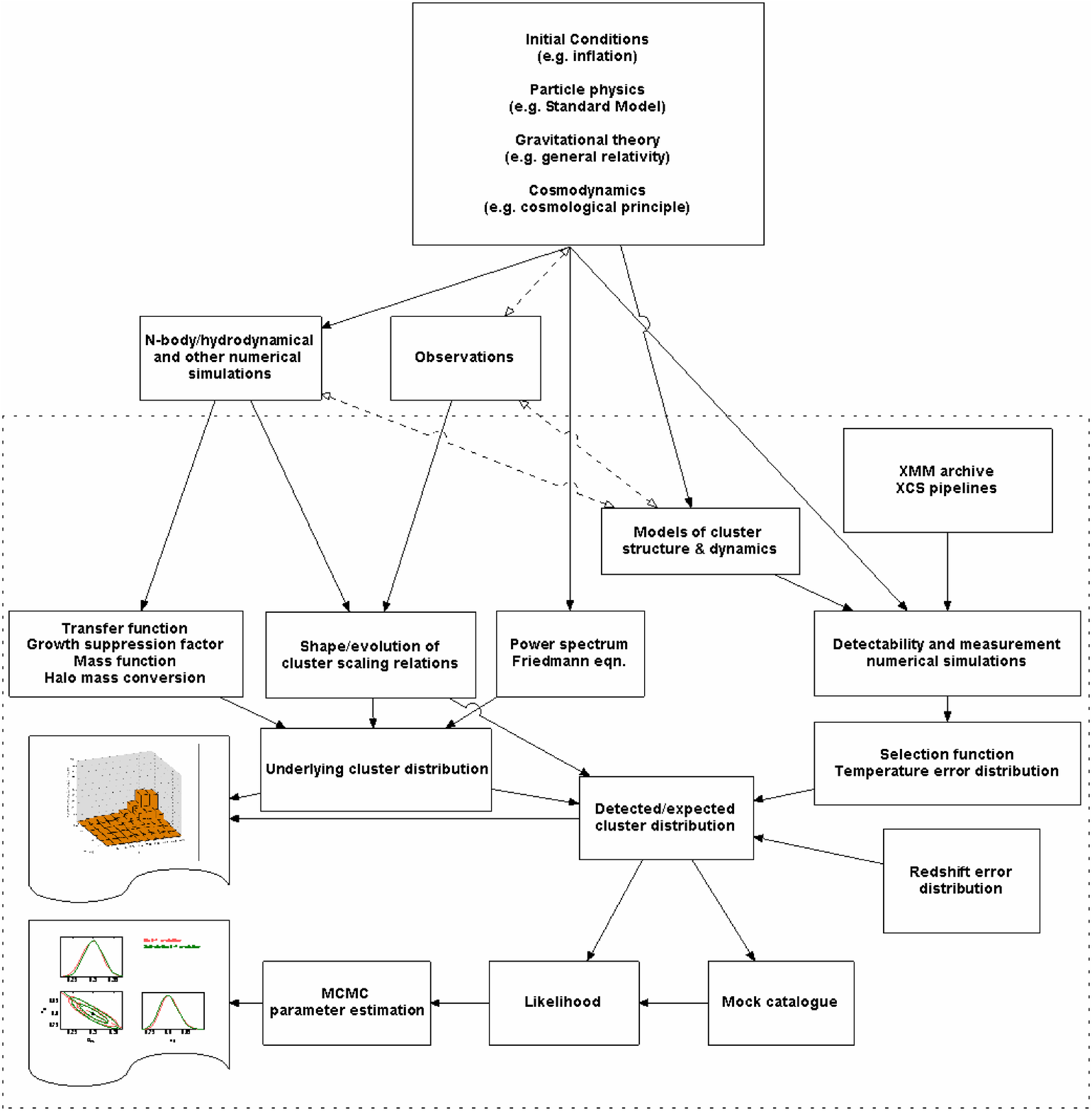}
\caption{Flowchart for cluster predictions and forecast parameter
estimation. The dash-enclosed area indicates the processes that enter
in our calculations. Bi-directional dashed arrows are used to indicate the main circular relations, where information from one part is used to inform another, which then informs the first, and so on.} \label{fig:flowchart}
\end{figure*}

\subsubsection{Parameters} We work within the Cold Dark Matter (CDM) paradigm, with a
spectrum of primordial adiabatic Gaussian density perturbations.
We assume that $\Omega_{\rm m}=0.3, \Omega_{\Lambda}=0.7$, $\sigma_8
= 0.8$, $\Omega_{\rm b} = 0.044$ and $h=0.75$. As we do not
expect the XCS to have particularly competitive constraining power on
$\Omega_k$, we restrict our analysis to the case of a flat universe,
$\Omega_k=0$, in accordance with observations of e.g. the cosmic
microwave background \citep{wmap06}. We take
the present-day shape of the matter power spectrum to be well
approximated by a CDM model with scale-invariant primordial
density perturbations whose transfer function shape parameter is
$\Gamma \approx \Omega_{\rm m}h\times\exp[-\Omega_{\rm
b}(1+\sqrt{2h}/\Omega_{\rm m})] = 0.18$. This is the mean value
obtained from different analyses of SDSS data
\citep{szetal,Pope,tegmark,SDSS05,Blake,padman06} and also perfectly
compatible with the 3-year WMAP data \citep{wmap06}.  We have checked
that assuming $\Gamma$ is either 0.16 or 0.20 does not change our
results. (In a real data analysis, using the shape parameter is too
simplistic, but for forecasting purposes it is sufficient.)  A summary of our cosmological parameter assumptions is given in Table~\ref{tab:cosmology}.

\begin{table}
\centering
\begin{tabular}{|l|c|c|}
  \hline
  Parameter & Value & Prior \\
  \hline
  $\Omega_{\rm m}$ & $0.3$ & $[0.1,1]$ \\
  $\Omega_{\Lambda}$ &  $0.7$ & $1-\Omega_{\rm m}$ \\
  $\sigma_8$ & $0.8$ & $[0.3,1.3]$ \\
  $\Omega_{\rm b}$ & $0.044$ & $0.044$ \\
  $h$ & $0.75$ & $0.75$ \\
  $n_{\rm s}$ & $1$ & $1$ \\
  \hline
\end{tabular}
\caption{Cosmology assumptions used. Fiducial values are given first, followed by priors assumed in parameter estimation.}
\label{tab:cosmology}
\end{table}

\subsection{Combining observables and cosmology}
As we have seen, our cluster distribution calculations involve many different steps and components. Importantly, they rely on both simulation and observational data, as well as direct theoretical input. A schematic overview of the relevant inputs, processes and outputs is shown in a flowchart form in Fig.~\ref{fig:flowchart}.
Collecting all components, the number of clusters in ${\rm d}T {\rm d}z$ around $(T,z)$ is given by
\begin{eqnarray}
  n\left(M(T,z), z\right)\frac{{\rm d}M}{{\rm d}T} f_{\rm sky}(L(T,z), T, z) \frac{{\rm d}V}{{\rm d}z}
   {\rm d}T {\rm d}z
\end{eqnarray}
where $f_{\rm sky}$ combines survey area and selection function. This expression ignores scatter in the scaling relations and measurement errors. A complete treatment is given in Appendix~\ref{clustcnt}.  The remaining component for arriving at parameter constraints is the likelihood, which is described next.

\begin{table*}
\centering
\begin{tabular}{|l|c|c|c|c|c|}
   \hline
   Parameter & Description & \includegraphics[width=2.6cm]{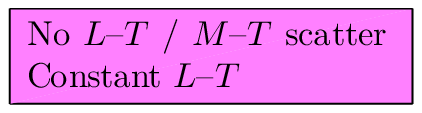}
   & \includegraphics[width=2.6cm]{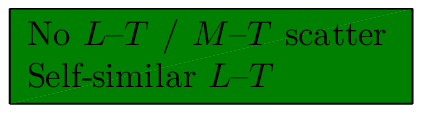} & \includegraphics[width=2.6cm]{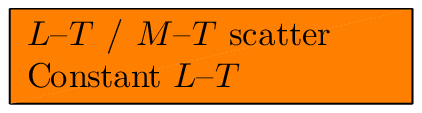} &  \includegraphics[width=2.6cm]{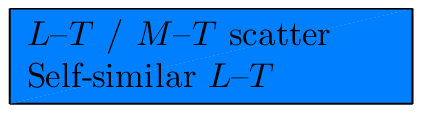} \\
  \hline
  Colouring & & Pink & Green & Orange & Blue \\
  \hline
  $L$--$T$: \cr $\alpha$ & Normalization & \begin{tabular}{c}
                                             $-1.90$ \\
                                             $[-1.90]$
                                           \end{tabular}
   & \begin{tabular}{c}
                                             $-1.92$ \\
                                             $[-1.92]$
                                           \end{tabular}
   & \begin{tabular}{c}
                                             $-1.90$ \\
                                             flat, unrestricted
                                           \end{tabular}
                                           & \begin{tabular}{c}
                                             $-1.92$ \\
                                             flat, unrestricted
                                           \end{tabular}
    \\
    \\
  $\beta$ & Slope & \begin{tabular}{c}
                                             $2.5$ \\
                                             $[2.5]$
                                           \end{tabular}
  & \begin{tabular}{c}
                                             $2.5$ \\
                                             $[2.5]$
                                           \end{tabular} & \begin{tabular}{c}
                                             $2.5$ \\
                                             flat, unrestricted
                                           \end{tabular}
                                           & \begin{tabular}{c}
                                             $2.5$ \\
                                             flat, unrestricted
                                           \end{tabular} \\ \\
  $\gamma_s$ & Self-similarity exp. & \begin{tabular}{c}
                                             $0$ \\
                                             $[0]$
                                           \end{tabular}
  & \begin{tabular}{c}
                                             $1/2$ \\
                                             $[1/2]$
                                           \end{tabular} & \begin{tabular}{c}
                                             $0$ \\
                                             $[0]$
                                           \end{tabular} & \begin{tabular}{c}
                                             $1/2$ \\
                                             $[1/2]$
                                           \end{tabular} \\ \\
  $\gamma_z$ & Redshift exp. & \begin{tabular}{c}
                                             $0$ \\
                                             $[0]$
                                           \end{tabular} & \begin{tabular}{c}
                                             $0$ \\
                                             $[0]$
                                           \end{tabular} &
                                           \begin{tabular}{c}
                                             $0$ \\
                                             $[-1,1.5]$
                                           \end{tabular} &
                                           \begin{tabular}{c}
                                             $0$ \\
                                            $[-1,1.5]$
                                           \end{tabular} \\ \\
  $\sigma_{\log L_{\rm X}}$ & Scatter & \begin{tabular}{c}
                                             $0$ \\
                                             $[0]$
                                           \end{tabular}
  & \begin{tabular}{c}
                                             $0$ \\
                                             $[0]$
                                           \end{tabular} & \begin{tabular}{c}
                                             $0.3$ \\
                                             $[0.2,0.4]$
                                           \end{tabular}
                                           & \begin{tabular}{c}
                                             $0.3$ \\
                                             $[0.2,0.4]$
                                           \end{tabular} \\ \\
  $m_L$ & Max. scatter in units of $\sigma_{\log L_{\rm X}}$ &
  \begin{tabular}{c}
                                             --
                                           \end{tabular} &
                                           \begin{tabular}{c}
                                             --
                                           \end{tabular} & \begin{tabular}{c}
                                             $3$ \\
                                             $[3]$
                                           \end{tabular}
  & \begin{tabular}{c}
                                             $3$ \\
                                             $[3]$
                                           \end{tabular} \\
  \hline
  $M$--$T$: \cr evolution & & \multicolumn{4}{c}{self-similar,
   normalized to HIFLUGCS} \\ \\
  $\sigma_{\log T}$ & Scatter &
  \begin{tabular}{c}
                                             $0$ \\
                                             $[0]$
                                           \end{tabular} &
                                           \begin{tabular}{c}
                                             $0$ \\
                                             $[0]$
                                           \end{tabular} &
                                           \begin{tabular}{c}
                                             $0.1$ \\
                                             $[0.1]$
                                           \end{tabular} &
                                           \begin{tabular}{c}
                                             $0.1$ \\
                                             $[0.1]$
                                           \end{tabular} \\ \\
  $m_T$ & Max. scatter in units of $\sigma_{\log T}$ & \begin{tabular}{c}
                                             --
                                           \end{tabular}
  & \begin{tabular}{c}
                                             --
                                           \end{tabular} &
                                           \begin{tabular}{c}
                                             $3$ \\
                                             $[3]$
                                           \end{tabular} & \begin{tabular}{c}
                                             $3$ \\
                                             $[3]$
                                           \end{tabular} \\
  \hline
\end{tabular}
\caption{Cluster scaling relation assumptions and their labelling. Fiducial values are given first, followed by priors assumed in parameter estimation below (usually in brackets). Note that the colour coding at the top of the table is used to indicate these fiducial models throughout. See Sects.~\ref{tempmass}~\&~\ref{lumtemp} and the Appendix for details.}
\label{tab:scalrel}
\end{table*}

\subsection{Likelihood}
Turning our attention to using the cluster distribution prediction for cosmological constraints and forecasting, we need an expression for the likelihood of an observed catalogue of galaxy clusters. The likelihood ${\cal L}$ for a given observed catalogue is simply the
product of the Poisson probabilities of observing $N_i$ XCS clusters
in the bin with widths $\Delta T, \Delta z$ centered at each of the
$(T_i,z_i)$ positions,
\begin{eqnarray}
{\cal L}\, &=& \, \prod_{i}\left[\frac{\lambda_i^{N_i}}{N_i!}  \,
e^{-\lambda_i}\right]\,
\end{eqnarray}
where
\begin{equation}
\lambda_i = N(T_i-\Delta T/2,
T_i+\Delta T/2, z_i-\Delta z/2, z_i+\Delta z/2)
\end{equation}
is the expected number of XCS clusters in bin $i$, taking into account
sky coverage, survey selection function, and any uncertainties in
scaling relations or measurements (see equations in Appendix
\ref{clustcnt}).  We do not take into account the fact that the
positions of galaxy clusters are spatially correlated, because the
mean distance between XCS clusters is typically much larger than the
observationally determined correlation length in the range
$10-20\,h^{-1}\,{\rm Mpc}$ (see e.g. \citealt{Nicholcorrel,Romercorrel,collinscorrel}; \citealt*{Gonzalezcorrel}; \citealt{Brodwincorrel}), as a result of the {\it XMM} pointings
being scattered all over the sky.
Even if the XCS area was contiguous, given the
very large depth of the XCS, the impact of cluster spatial
correlations on the estimation of cosmological parameters with the XCS
galaxy cluster abundance data would be small
\citep{WhiteM,HK,Holder,HC}.

\begin{figure}
\includegraphics[width=8.5cm]{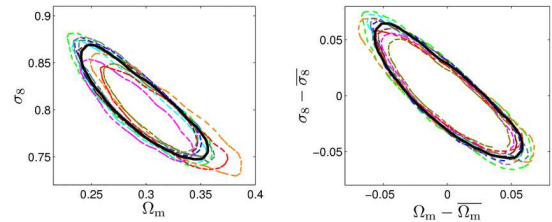}
\caption{Parameter constraints ($95$ per cent confidence level) for a set of 10 random realizations of
the catalogue Poisson distribution (dashed coloured lines) compared to
the average-catalogue parameter constraint (solid black line). In the
right-hand panel, each contour has been re-centered around its
distribution mean. A constant $L$--$T$ relation and no $L$--$T$ or
$M$--$T$ scatter was assumed.} \label{fig:contoursrand}
\end{figure}

\begin{table}
\begin{tabular}{|c|c|}
  \hline
  Quantity & Labels/assumptions \\
  \hline
  Redshift $z$ & \includegraphics[width=5.5cm]{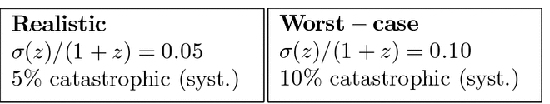} \\
  X-ray temperature $T$ & \includegraphics[width=5.5cm]{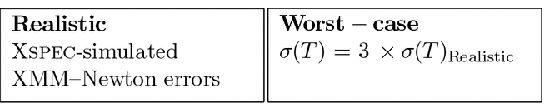} \\
  \hline
\end{tabular}
\caption{Summary of measurement error assumptions and their labelling. See Sects.~\ref{Photozs}~\&~\ref{Terror} and the Appendix for details.}
\label{tab:errscheme}
\end{table}

As we are seeking to obtain expected/typical constraints, in a sense
a Maximum Likelihood (ML) point estimate, we use $N_i = \lambda_i^*$, where the asterisk
denotes fiducial-model values. Using this `average-catalogue' construction, we obtain an
excellent estimate of the size and shape of the expected likelihood contours, but
avoid the offset in the best fit away from the fiducial parameter
values that is associated with, e.g., the most likely catalogue. Any random realization of a Poisson sample will exhibit such an offset. Examples can be seen in the right-hand panel of Fig.~\ref{fig:contoursrand}, where the results for the average-catalogue method is compared to those for random catalogue realizations. We wish to
avoid offsets of this type as we are mainly interested in the shape and size
of contours, or wish to separate possible biases from such an
offset. This methodology is explained and motivated in detail in
Appendix~\ref{explik}. As stated above, Fig.~\ref{fig:contoursrand} compares
constraints derived using this method to constraints derived from a
Poisson sample of mock catalogues. The results confirm that
constraints derived using our methodology provide an excellent
estimate of the expected constraints. Note that in future real data
analyses this methodology cannot be used, and there will in
general be some offset.

The exploration of the likelihood function in parameter space was
carried out using a custom code based on standard Monte Carlo Markov Chain techniques
(\citealt{GelmanRubin}; \citealt*{Gilks}; \citealt{Lewis:2002ah,Verde:2003ey,Tegmark:2003ud,
Dunkley:2004sv}). The calculation of the integrals involved in the
likelihood was done with the state-of-the-art numerical integration
packages \textsc{CUBPACK} \citep{CUBPACK} and \textsc{Cuba} \citep{CUBA}.

\begin{figure*}
\centering
\subfloat[Underlying cluster distribution. Note that only the $M$--$T$ relation is relevant for the underlying distribution, and we therefore colour according to both $L$--$T$ assumptions with the same $M$--$T$ relation.]{\label{fig:2dhistsnosel}
\includegraphics[width=9.3cm]{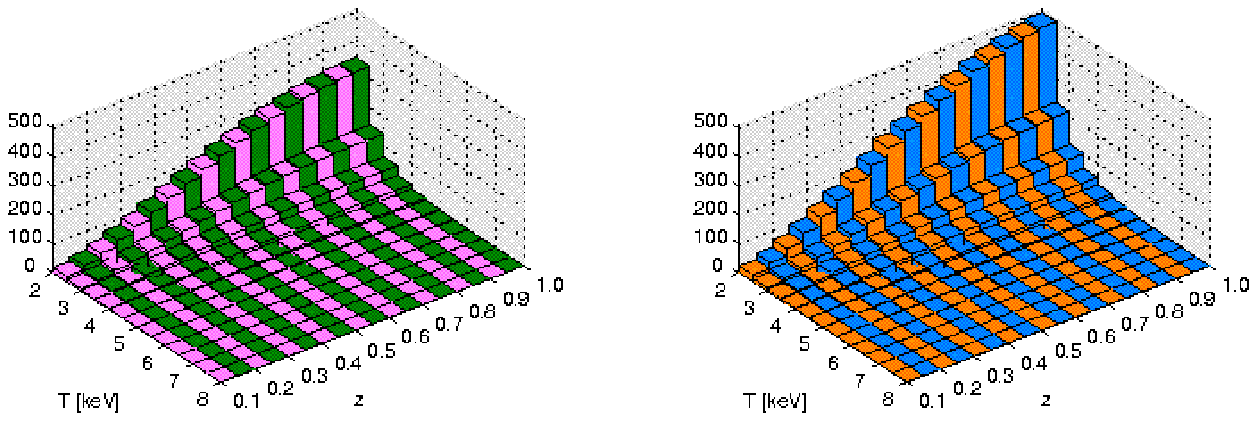}} \\
\subfloat[Expected detections using selection
function.]{\label{fig:2dhists}
\includegraphics[width=9.3cm]{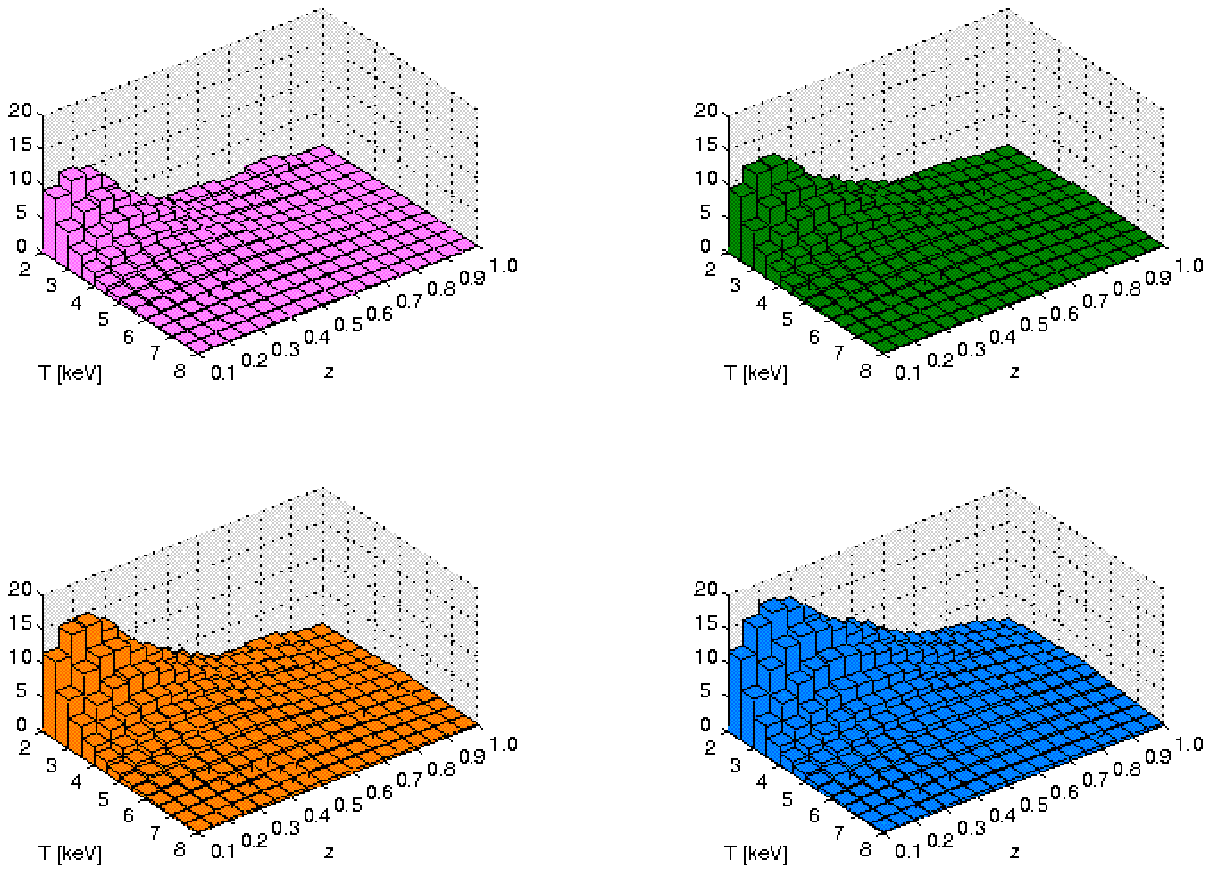}}
\subfloat[Detected fraction per bin.]{\label{fig:2dhistscf}
\includegraphics[width=9.3cm]{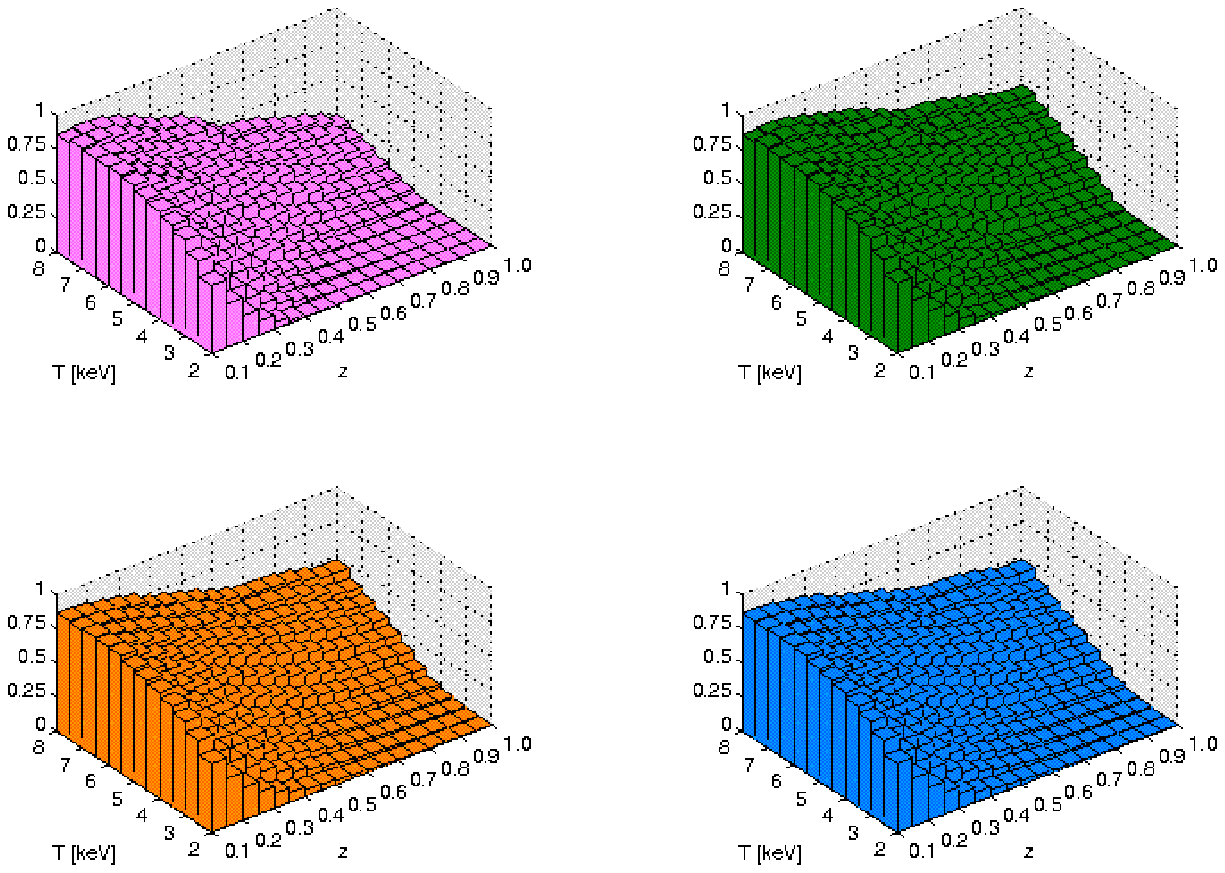}}
\caption{Expected cluster number count distributions for $^{500}$XCS, for no
$L$--$T$ nor $M$--$T$ scatter and no $L$--$T$ evolution (pink), no
$L$--$T$ nor $M$--$T$ scatter and self-similar $L$--$T$ evolution
(green), $L$--$T$ and $M$--$T$ scatter and no $L$--$T$ evolution
(orange), and $L$--$T$ and $M$--$T$ scatter and self-similar $L$--$T$
evolution (blue). Bin sizes are $\Delta z = 0.05$ and $\Delta T = 0.5\,{\rm keV}$.}
\end{figure*}

\begin{figure}
\centering
\subfloat[Underlying cluster
distribution. Note that only the $M$--$T$ relation is relevant for the underlying distribution, and we therefore colour according to both $L$--$T$ assumptions with the same $M$--$T$ relation.]{\label{fig:histogramsnosel}
\includegraphics[width=\linewidth]{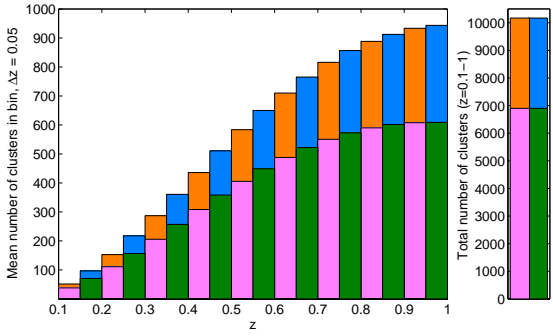}} \\
\subfloat[Expected detections using selection
function.]{\label{fig:histograms}
\includegraphics[width=\linewidth]{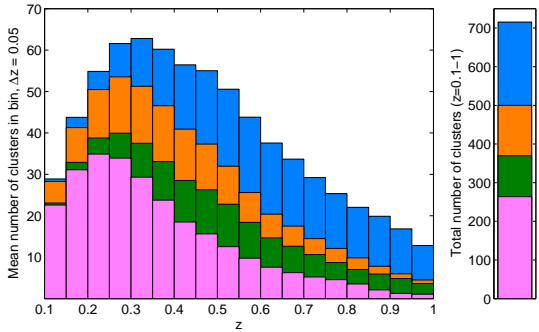}} \\
\subfloat[Detected fraction of clusters per
bin.]{\label{fig:histogramscf}
\includegraphics[width=\linewidth]{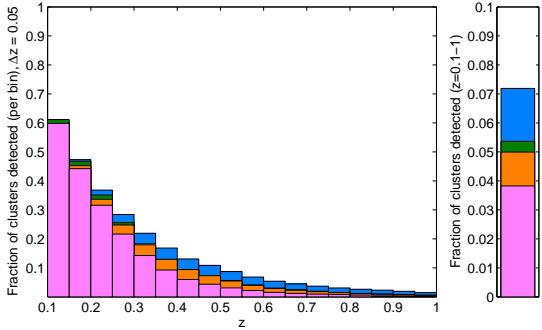}}
\caption{Expected cluster distributions for the $^{500}$XCS, for our
four different cluster scaling relation cases.}
\label{fig:histogramsall}
\end{figure}

\begin{figure}
\centering
\includegraphics[width=\linewidth]{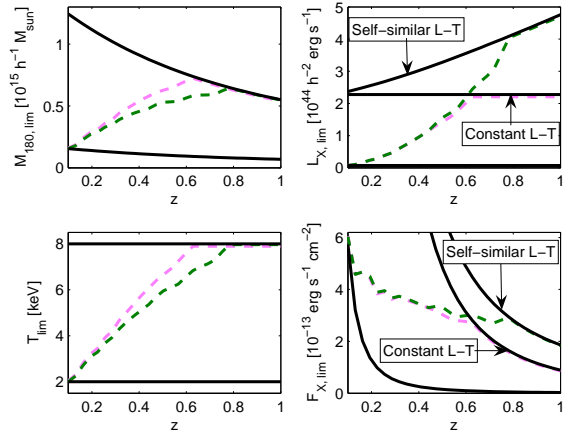}
\caption{Expected observational limits for the $^{500}$XCS (defined as ${\rm P}({\rm detection})\ge 0.5$), for the simplest case of no scatter in the cluster scaling relations and a constant (pink dashed line) or self-similar (green dashed line) $L$--$T$ relation. Solid lines correspond to the hard temperature cut $2\,{\rm keV} \le T \le 8\,{\rm keV}$. }
\label{fig:xlimits}
\end{figure}

\begin{figure}
\centering
\includegraphics[width=\linewidth]{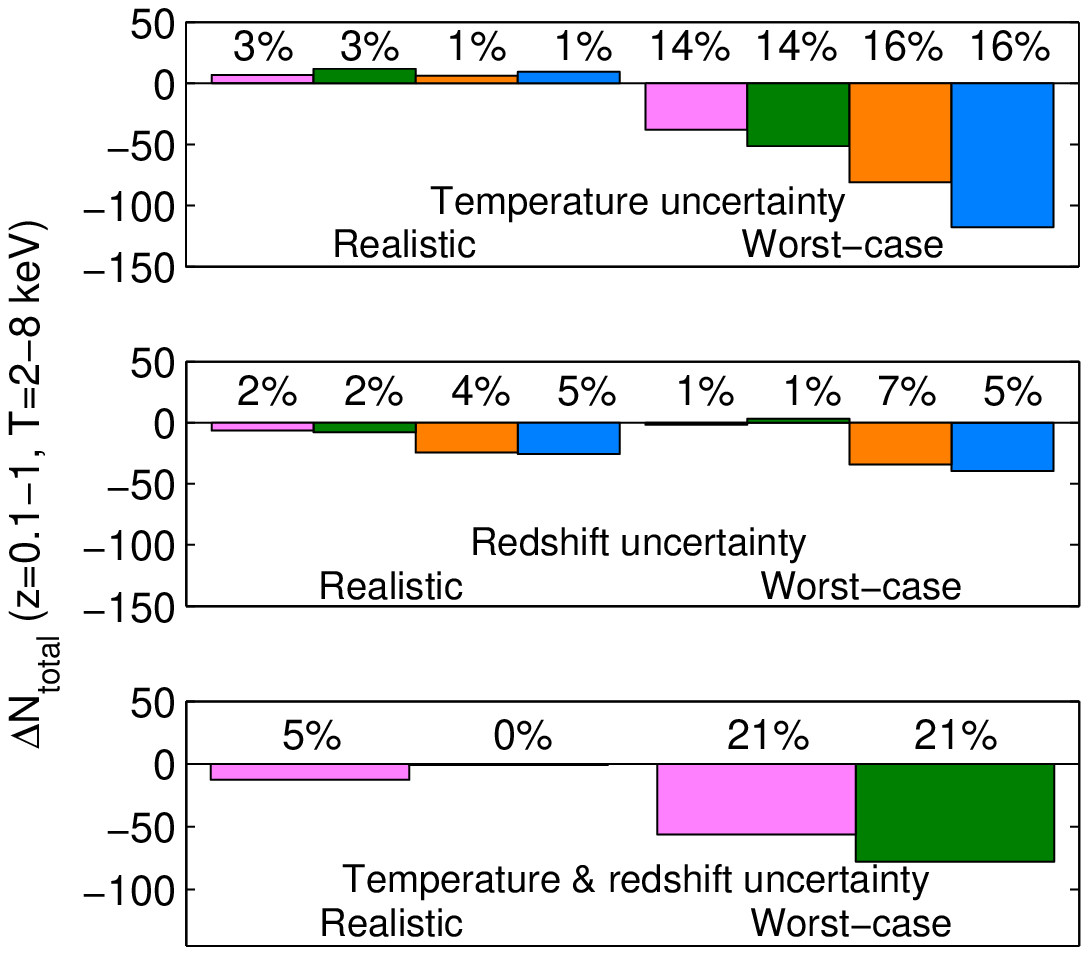}
\caption{Changes in total number of clusters due to our different error assumptions, compared to no-errors
distributions in Fig.~\ref{fig:histograms}.}
\label{fig:2derrdistcfcombo}
\end{figure}

\section{Results}
\label{Results}

\subsection{Labelling}

Generally, the colouring scheme in Table~\ref{tab:scalrel} will be
used to indicate the fiducial cluster scaling relation model; see
Sects.~\ref{tempmass} and \ref{lumtemp} as well as the Appendix.  X-ray temperature and redshift errors will be indicated as `realistic'
or `worst-case' according to Table~\ref{tab:errscheme}; see Sects.~\ref{Photozs} and \ref{Terror} as well as the Appendix.

\subsection{Expected cluster distributions}
\subsubsection{Without measurement errors}
\label{Distribnoerr}

The expected $2$D $(T, z)$ distributions of clusters for our four standard models are
shown in Fig.~\ref{fig:2dhistsnosel} (underlying distributions),
Fig.~\ref{fig:2dhists} (expected detections) and
Fig.~\ref{fig:2dhistscf} (detection efficiency), where the selection function has been used to go from Fig.~\ref{fig:2dhistsnosel} to Fig.~\ref{fig:2dhists}.  The expected
redshift distributions and total cluster number counts are shown similarly in
Fig.~\ref{fig:histogramsall}.  Note that as the $L$--$T$ relation changes, so does the expected number of detected clusters, since we are more likely to detect a cluster the more luminous it is (and for a given temperature, the cluster luminosity increases with redshift for self-similar $L$--$T$ evolution). The underlying distribution however is of course not dependent on the $L$--$T$ relation.
We find that
$^{500}$XCS can be expected to find somewhere in the range of $250$--$700$
clusters for its projected area of $500$ deg$^2$ and $0.1\le z \le 1.0$, $2\,{\rm keV} \le T \le 8\,{\rm keV}$. This corresponds to around $20$ per cent of the $1500$--$3300$ total number of clusters we would expect to detect with no photon count cut-off (effectively a $\sim 50$-photon cut-off). This full set of XCS clusters will constitute a significant sample (relative to previous studies), representing around a quarter to a third of the actual $7000$--$10000$ clusters present in the observed fields.  Going to higher redshifts, we roughly estimate that a minimum of $250$ clusters will be found at $z>1$, of which at least $10$ should have $>500$ photons.

\modsec{
The XCS DR1 currently covers an area of $132$ deg$^2$, for which $125$ clusters/groups with measured redshifts and $>500$ photons have been identified from $164$ candidate extended sources (with $>500$ photon counts). No temperature, redshift or other cuts have been applied to this set. In the current redshift sample of $125$ clusters with more than $500$ photons, approximately $40$ per cent have temperatures below $2$ keV and are therefore classified as groups. We therefore expect
the final number of genuine $T>2$~keV clusters we detect in this area to be in the range $75$--$100$, depending on the fraction of `extended' sources detected by the survey that turn out not to be clusters (blended point sources, etc.). We have here assumed that selection effects in the redshift follow-up do not significantly bias this number.  For our fiducial cosmology, we find that the corresponding expected number of clusters is $80$--$235$, the range corresponding to the lower and upper limits from our set of scaling-relation assumptions. These two ranges are clearly consistent. The lower predicted limit corresponds to no $L$--$T$ evolution, and hence -- all else equal -- the observational result might indicate a near-constant $L$--$T$ relation.}

An overview of the expected observational limits of the $^{500}$XCS for mass, X-ray temperature, X-ray luminosity and X-ray flux (in the $[0.1,2.4]$ keV band), is given in Fig.~\ref{fig:xlimits}. We have there defined the detection limit, through the selection function, as ${\rm P}({\rm detection})\ge 0.5$. These limits are thus the values above which we expect to detect, with a photon count of $500$ or above, at least half of the clusters.  It is worth noting that the change in detection probability is slow as a function of X-ray temperature, and hence the concept of e.g. a single flux limit (which would correspond to a sharp transition between one and zero in the probability) is not suitable for defining the XCS sample. The underlying reason for this is that the {\it XMM} archive images occupy a range of different exposure times, hence individual flux limits. Caution is therefore advised when comparing Fig.~\ref{fig:xlimits} to similar plots based on a single flux or mass limit. For comparison, using ${\rm P}({\rm detection})\ge 0.05$ to define the detection limit leads to a flux limit of $\sim 5 \times 10^{-14}$~erg~s$^{-1}$~cm$^{-2}$, considerably lower than that shown in Fig.~\ref{fig:xlimits}.

\subsubsection{With measurement errors}
\label{Distriberr}
Introducing measurement errors for redshift and X-ray temperature
will introduce scattering of clusters across the redshift and
temperature cut-offs. As the cluster distribution is not symmetric
with respect to these cut-offs, there may be a net increase/decrease in
the expected number of clusters as a result (a type of
Malmquist bias). Furthermore, the measurement error distributions may
also be asymmetric, as is our temperature error distribution. Note that the relevant `underlying' cluster distributions for these purposes are the expected detections, shown in Fig.~\ref{fig:2dhists}.

The change in the expected total number of clusters as
a result of different measurement error assumptions are shown in
Fig.~\ref{fig:2derrdistcfcombo}.
We find that the effect of measurement errors on the number count is significantly less than the effect of intrinsic scaling-relation scatter (cf.~Fig.~\ref{fig:histograms}). This is not surprising since the scaling-relation scatter is based on the true underlying cluster distribution in Fig.~\ref{fig:2dhistsnosel}, a much steeper function than the expected detections in Fig.~\ref{fig:2dhists}.

We also see that only in the case of worst-case temperature errors is the
Malmquist bias significant, and as we shall see later only in this
case do the measurement errors give a significant bias in cosmological
constraints, if unaccounted for.  For realistic temperature errors, a net increase in clusters is seen, as the skewness of the temperature error distribution toward low temperatures (Fig.~\ref{fig:terrdists}) is compensated by the somewhat larger number of low-temperature clusters scattering up in temperature at the low-temperature end. For worst-case temperature errors the temperature is very poorly constrained, and this compensatory effect is not sufficient to counteract the net decrease in number of clusters. Redshift errors tend to cause a loss of clusters at the low-redshift end, as the smaller cosmic volume at lower redshifts means more clusters scatter down in redshift than scatter up. However, the redshift errors also affect the temperature determination, so that low-temperature clusters scattering up could give a net increase. For realistic redshift errors the size of this induced error in temperature is $5$ per cent, which is too small to have a significant impact. For worst-case redshift errors, we see that for the case with no scaling-relation scatter, the induced temperature error of $10$ per cent reduces the loss of clusters compared to that for realistic redshift errors. For the case with scaling-relation scatter, this effect is not significant, presumably due to the much sharper drop in cluster numbers at low redshifts seen for these models (Fig.~\ref{fig:histograms}), leading to the direct redshift error dominating.

\begin{figure*}
\centering
\subfloat[Known scaling relations, no scatter]{\label{fig:contours}
\includegraphics[width=9.3cm]{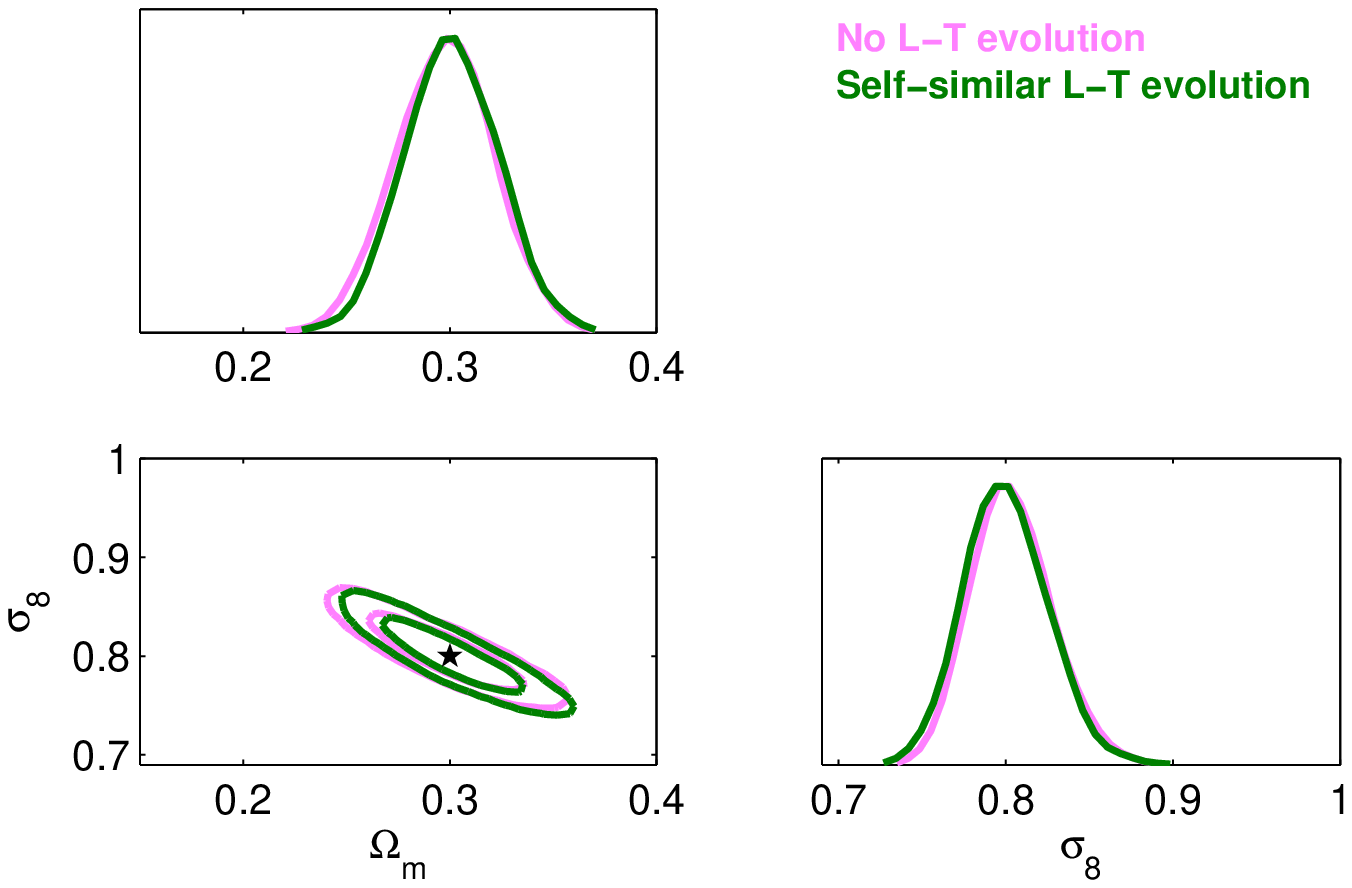}}
\subfloat[Self-calibration of $L$--$T$ relation, with
scatter]{\label{fig:scsmlcon}\includegraphics[width=9.3cm]{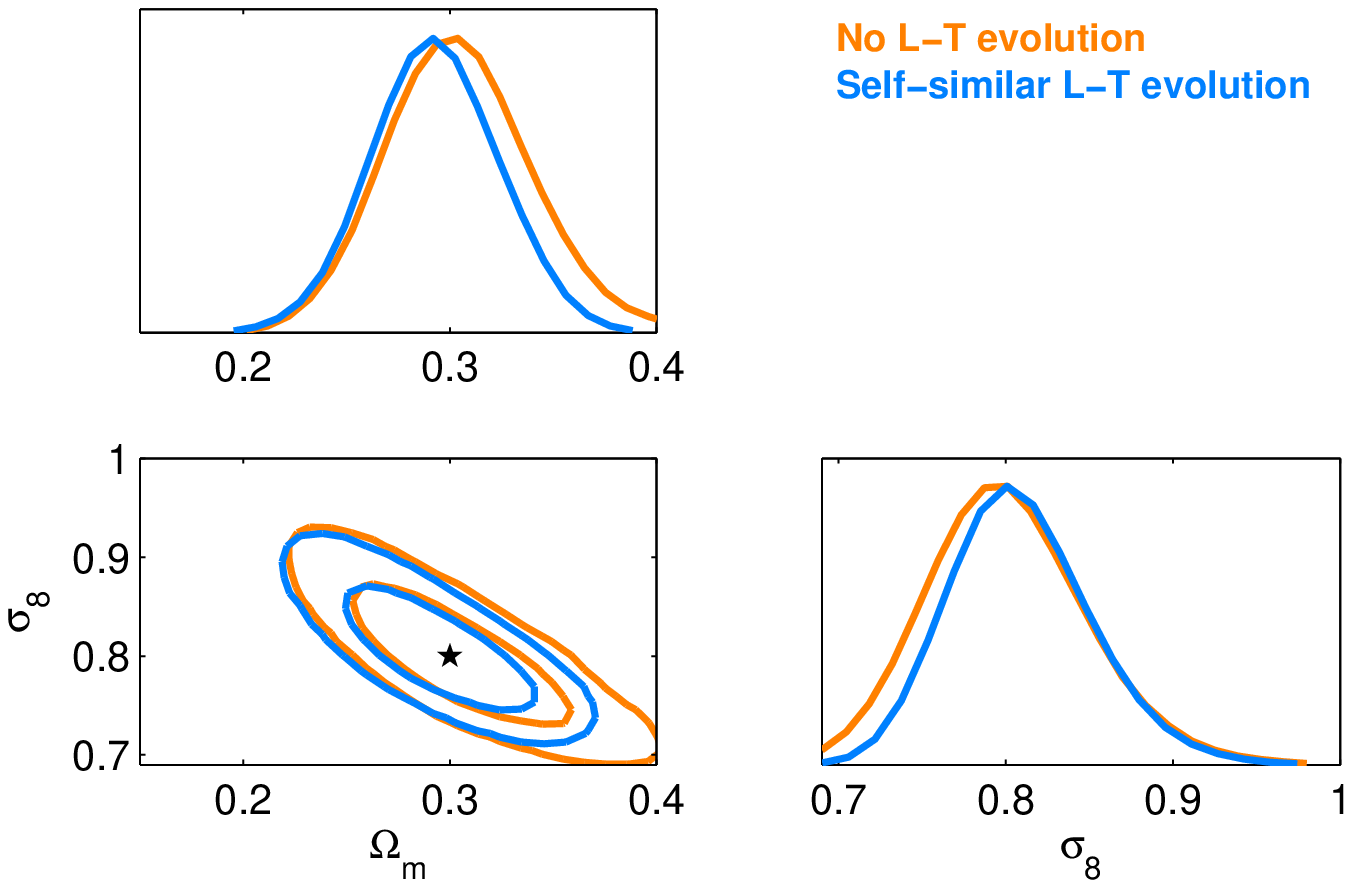}}
\caption{Expected $68$ and $95$ per cent parameter constraints for $^{500}$XCS, without measurement errors.  Stars denote the fiducial model assumed.}
\end{figure*}

The fractional change in the number of clusters is very similar
for the case with scaling-relation scatter as
without such scatter. Hence, for the case with scatter, the statistical effect will tend to be larger since the difference to the $N_{\rm ideal}$ clusters with no measurement errors relative to the Poisson error bars,
\begin{equation}
\frac{\delta N_{\rm ideal}}{\sigma((1+\delta)N_{\rm ideal})} = \frac{\delta}{\sqrt{(1+\delta)}}\sqrt{N_{\rm ideal}}\,,
\end{equation}
grows with the number of clusters (and scatter increases the number). Here, $\delta$ is the fractional change in the number of clusters. Based on this, we estimate that for all the models we consider, an upper limit on the fractional change in cluster count for a less than $1\sigma$ ($2\sigma$) bias in constraints is around $4$ ($8$) per cent, which compares favourably with the results for realistic errors in Fig.~\ref{fig:2derrdistcfcombo}. (This comparison could be made more rigorous using the Kolmogorov--Smirnov test as in \citet{HMH}, but this treatment is sufficient for our purposes.)  Due to computational limitations we have not calculated the change in number count for the case with scaling-relation scatter and both types of measurement errors, but based on the results obtained would expect them to be very similar (in fractional terms) to the results for the no-scatter case.

\subsection{Constraints: without measurement errors}
\subsubsection{Known scaling relations, no scatter}

For both choices of $L$--$T$ relation (constant and self-similar), the expected constraints are shown in Fig.~\ref{fig:contours}. We
expect $^{500}$XCS to measure $\Omega_{\rm m} = 0.3 \pm 0.02$,
$\sigma_8 = 0.8 \pm 0.02$ in each case.  The $\sigma_8$--$\Omega_{\rm m}$ degeneracy
differs somewhat between the two $L$--$T$ cases, for a constant
$L$--$T$ approximately given by
\begin{equation}
\sigma_8 = 0.8\left(\frac{\Omega_{\rm m}}{0.3}\right)^{-0.36} \,,
\end{equation}
and for a self-similar $L$--$T$ by
\begin{equation}
\sigma_8 = 0.8\left(\frac{\Omega_{\rm m}}{0.3}\right)^{-0.40}\,.
\end{equation}
These degeneracies are somewhat different from previous studies,
e.g. $\sigma_8 \propto \Omega_{\rm m}^{-0.47}$ in \citet{VL99}. That
study however used only the total number of clusters above a certain
temperature threshold to arrive at constraints. The orientation also depends on redshift depth \citep{LSW}. These constraints are better than what has been forecast for {\it XMM}--LSS \citep{RVP}, but the comparison is not entirely like-for-like as they employ the Press--Schechter mass function and assume a rather different fiducial $\sigma_8$ and $\Gamma$. The constraints are also fairly competitive with what can be expected from other surveys using e.g. the South Pole Telescope (SPT), {\it Planck} or {\it DUET} \citep{MJ04,Geisbuesch_planck}, but in making this comparison one should note that we employ much more restrictive priors; the set of free parameters is not exactly the same.

\begin{figure}
\includegraphics[width=\linewidth]{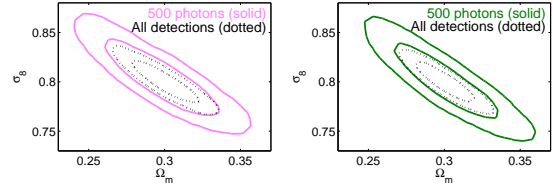}
\caption{Comparison of $^{500}$XCS to the case where all detections
are used. A constant (left) or self-similar (right) $L$--$T$ relation,
and no $L$--$T$ or $M$--$T$ scatter was assumed. Contours correspond to $68$ and $95$ per cent confidence levels.}
\label{fig:contoursxcombo}
\end{figure}

\begin{table*}
\centering
\begin{tabular}{|c|c|c|c|}
  \hline
  $L$--$T$ evolution & & Known scaling relations, no scatter &
  Self-calibration of $L$--$T$, with scatter   \\
  \hline
  Constant & {\begin{tabular}{c}
              $\Omega_{\rm m}$ \\
              $\sigma_8$ \\
              $\sigma_{\log L_{\rm X}}$ \\
              $\alpha$ \\
              $\beta$ \\
              $\gamma_z$
            \end{tabular}
  } & {\begin{tabular}{c}

                  $0.30 \pm 0.02$ \\
                  $0.80 \pm 0.02$ \\
                  -- \\
                  -- \\
                  -- \\
                  --
              \end{tabular}} &
{\begin{tabular}{c}

                $0.30 \pm 0.03$\\
                $0.80 \pm 0.05$ \\
                  $[0.2,0.4]$ \\
                  $-1.91 \pm 0.12$ \\
                  $2.50 \pm 0.33$ \\
                  $[-1,1.5]$
              \end{tabular}}
                 \\
                 \hline
  Self-similar & {\begin{tabular}{c}

              $\Omega_{\rm m}$ \\
              $\sigma_8$ \\
              $\sigma_{\log L_{\rm X}}$ \\
              $\alpha$ \\
              $\beta$ \\
              $\gamma_z$
            \end{tabular}
  }
  &
{\begin{tabular}{c}

                $0.30 \pm 0.02$ \\
                $0.80 \pm 0.02$ \\
                  -- \\
                  -- \\
                  -- \\
                --
              \end{tabular}}
  & {\begin{tabular}{c}

                $0.30 \pm 0.03$ \\
                $0.80 \pm 0.04$ \\
                 $[0.2,0.4]$ \\
                 $-1.92 \pm 0.12$ \\
                  $2.55 \pm 0.31$ \\
                  $[-1,1.5]$
              \end{tabular}}
              \\
  \hline
\end{tabular}
\caption{Expected $1\sigma$ parameter constraints for $^{500}$XCS when marginalized over all other parameters, without measurement errors.}
\label{tab:noerrs}
\end{table*}

The constraints in Fig.~\ref{fig:contours} are for a photon-count
threshold of 500. Lowering the photon-count threshold so that more
clusters are included in the sample should clearly affect
the size of constraints. We find that using all detections
(corresponding to an effective photon-count threshold of typically $\sim 50$ photons) improves 1D
constraints by about $40$ per cent (Fig.~\ref{fig:contoursxcombo}). This corresponds to an increase in the number of clusters used of around $1200$--$1700$ ($400$--$500$ per cent).
For clusters with few photon counts the temperature errors will become very large
\citep{LVRM}. Contamination from e.g. galaxy groups will also rise sharply with decreasing
photon-count threshold, partly because clusters with low photon count will tend to have a low temperature. Hence, these estimates provide only upper
limits on the possible constraint improvement.
Taking error and contamination effects into account, it is likely that there would be only a weak improvement by including those XCS clusters expected to have a photon count below $500$. However, follow-up observations with e.g. {\it XMM} or {\it IXO} (formerly {\it XEUS}) could improve the photon statistics of those clusters enough to make their inclusion in the analysis worthwhile. We discuss this in more detail in Sect.~\ref{Conclusions}.

\subsubsection{Self-calibration of $L$--$T$ relation, with scatter}
Self-calibration is the process by which e.g. the $L$--$T$ relation
can be constrained jointly with cosmological parameters using only the
($T,z$) cluster number counts \citep{Hu,LHI,MJ04,LH}.

\begin{figure*}
\centering
\subfloat[Constant $L$--$T$ relation]{\label{fig:scnoevolcon}\includegraphics[width=14cm]{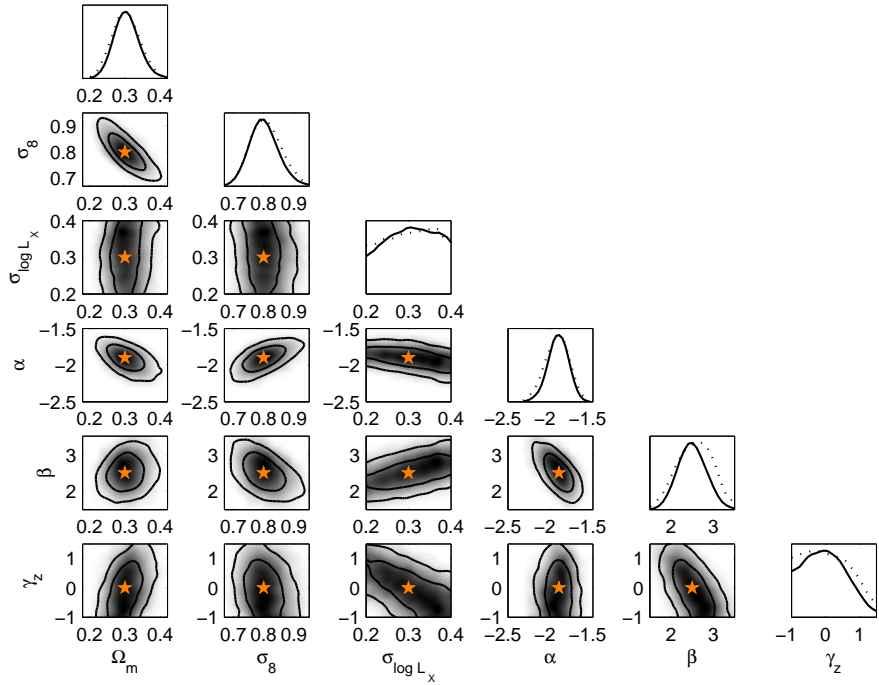}} \\
\subfloat[Self-similar $L$--$T$ relation. (As can be surmised from some of the $1$D distributions, the marginalized and mean likelihoods approach each other very slowly in the MCMC due to the prior cutting the distribution, however the statistical properties of the distribution have converged appropriately.)]{\label{fig:scselfsimcon}\includegraphics[width=14cm]{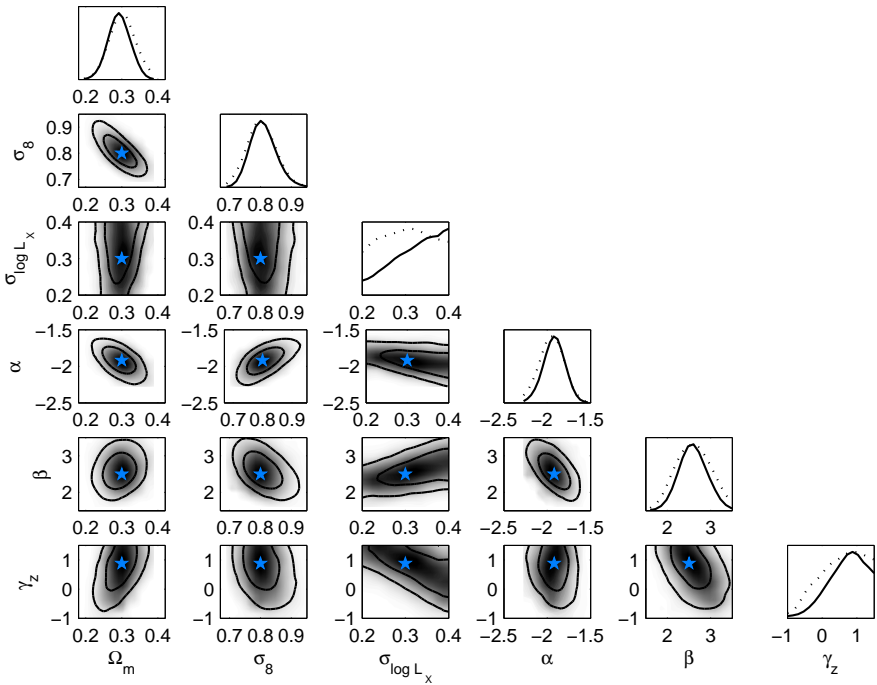}}
\caption{Expected $68$ and $95$ per cent parameter constraints for
$^{500}$XCS, with scaling-relation scatter
and no measurement errors, and fitting jointly with $L$--$T$ relation
(self-calibration) for which reasonable priors on scatter and redshift
evolution have been adopted. Solid lines correspond to marginalized
likelihood, dotted lines and shading to mean likelihood.  Stars denote the fiducial model assumed.}
\label{fig:selfcalall}
\end{figure*}

\begin{figure*}
\centering
\subfloat[Accounted-for single measurement errors]{\label{fig:terrsfitted}\includegraphics[width=9.3cm]{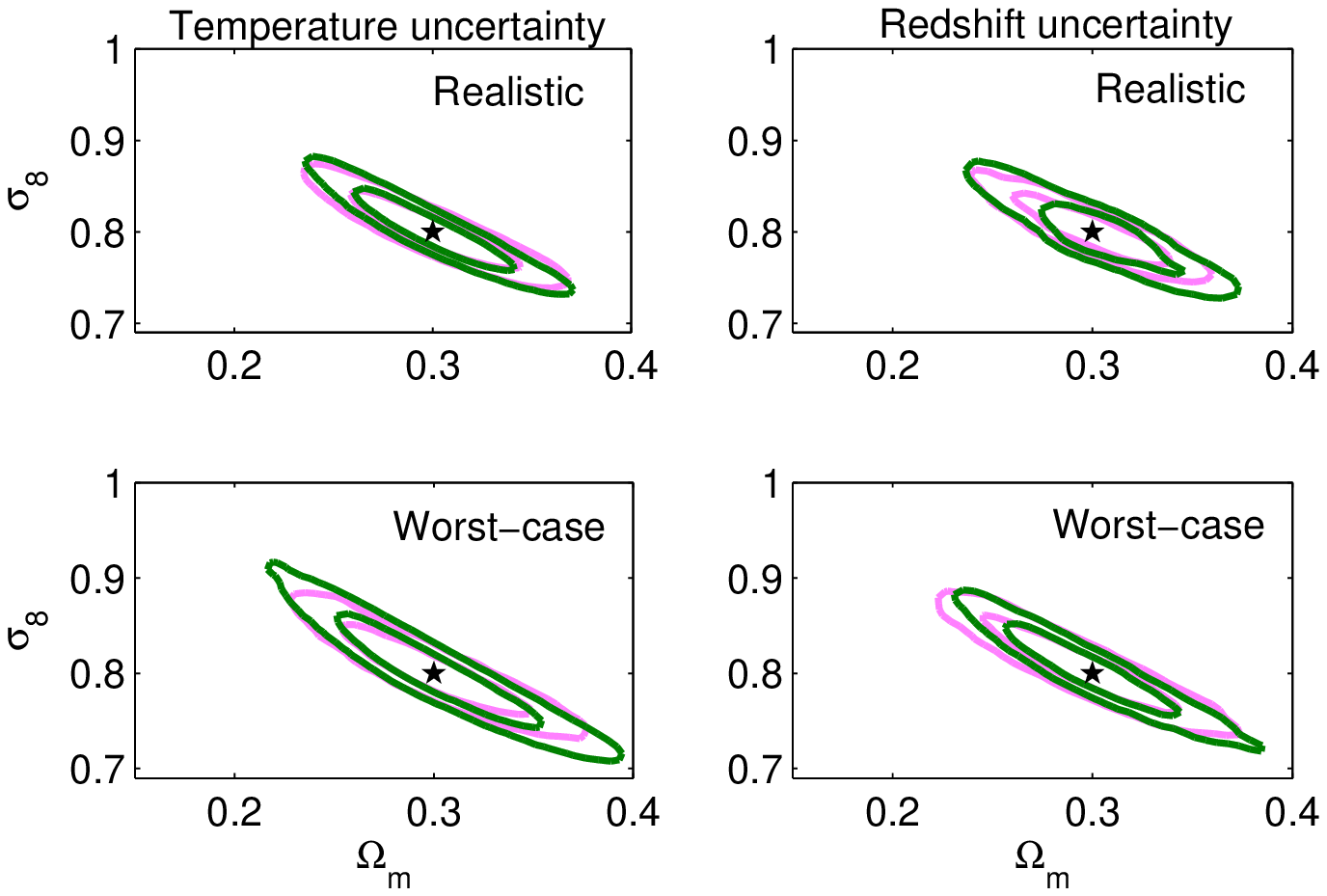}}
\subfloat[Unaccounted-for
single measurement
errors]{\label{fig:terrsnotfitted}
\includegraphics[width=9.3cm]{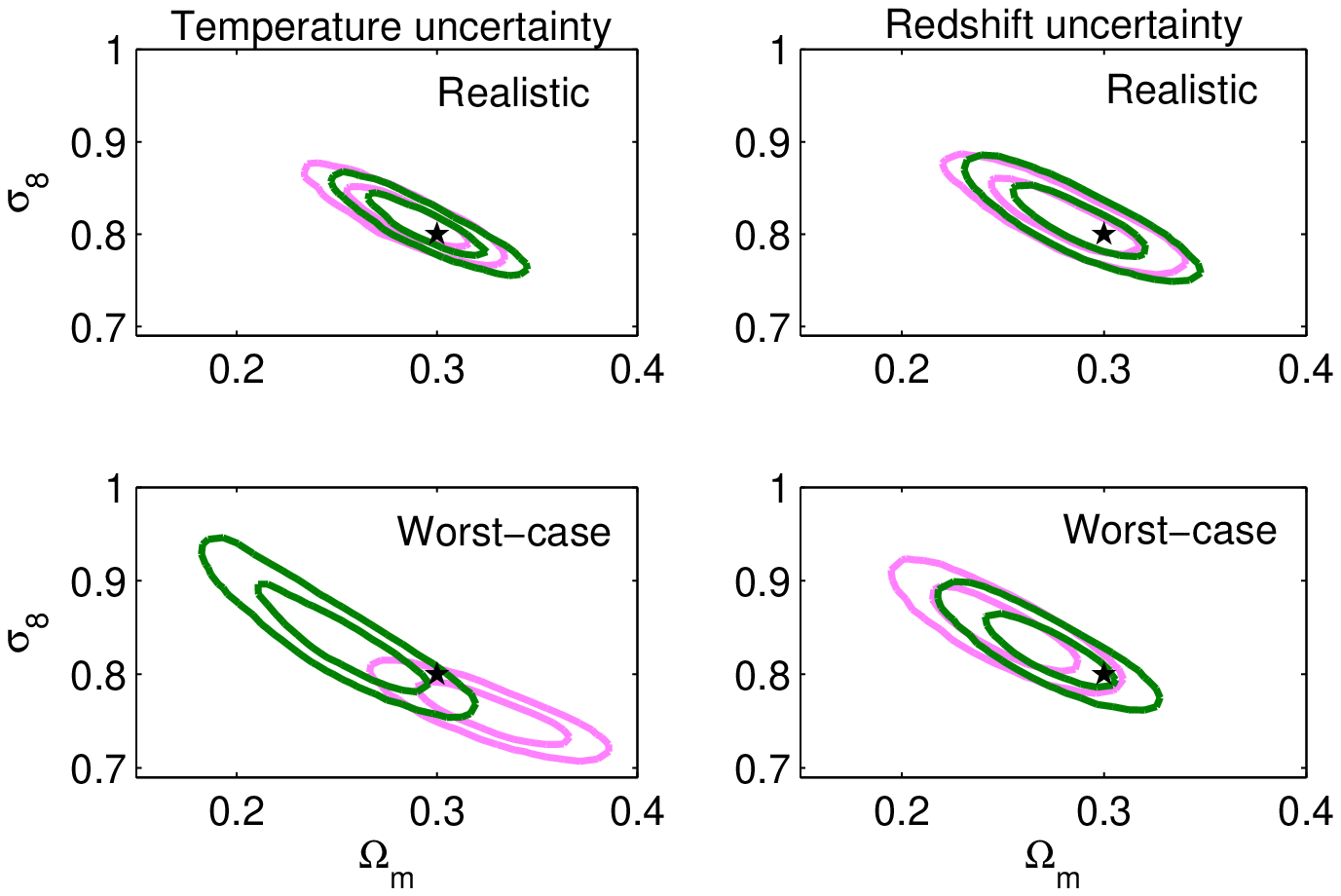}}
\\ \subfloat[Unaccounted-for
combined measurement errors]{\label{fig:terrsdblnotfitted}
\includegraphics[width=9.3cm]{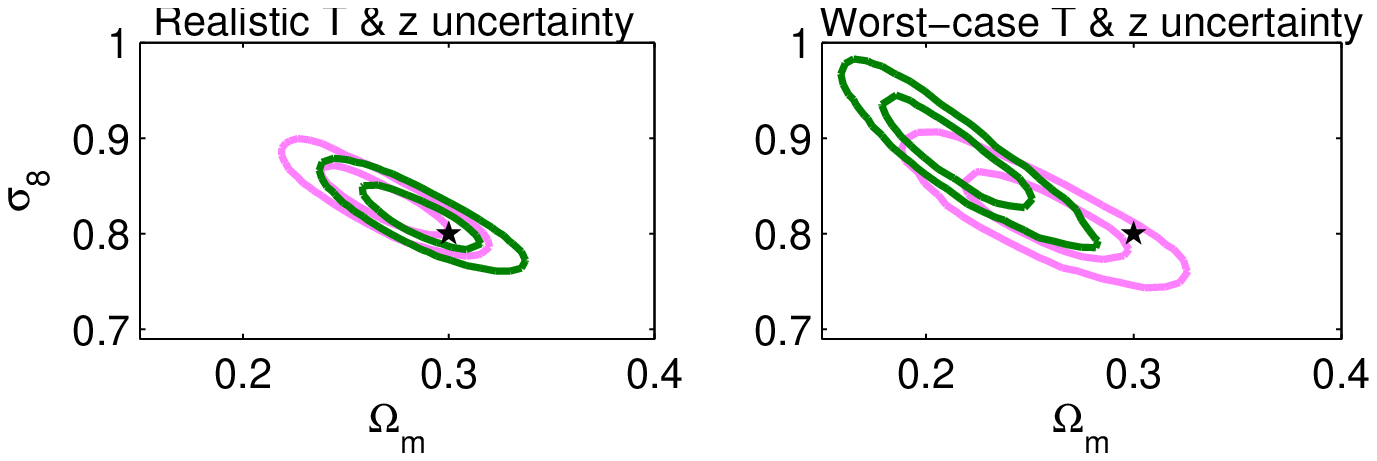}}
\caption{Expected $68$ and $95$ per cent parameter constraints for $^{500}$XCS, for
known scaling relations, no scatter, and with measurement errors.  Stars denote the fiducial model assumed.}
\end{figure*}

\begin{table*}
\centering
\begin{tabular}{|c|c|c|c|c|c|}
  \hline
  $L$--$T$ evolution & & Realistic $T$ errors & Worst-case $T$ errors
  & Realistic $z$ errors & Worst-case $z$ errors \\
  \hline
  Constant & {\begin{tabular}{|c|}
              $\Omega_{\rm m}$ \\
              $\sigma_8$ \\

            \end{tabular}
  } & {\begin{tabular}{|c|}

                  $0.30 \pm 0.03$ \\
                  $0.80 \pm 0.03$ \\

              \end{tabular}} &
{\begin{tabular}{|c|}

                $0.30 \pm 0.03$\\
                $0.80 \pm 0.03$ \\

              \end{tabular}}
               &
{\begin{tabular}{|c|}

                $0.30 \pm 0.02$ \\
                $0.80 \pm 0.02$ \\

              \end{tabular}}
                & {\begin{tabular}{|c|}

                $0.30 \pm 0.03$\\
                $0.80 \pm 0.03$ \\

              \end{tabular}}
                 \\
  Self-similar & {\begin{tabular}{|c|}

              $\Omega_{\rm m}$ \\
              $\sigma_8$ \\

            \end{tabular}
  }
  &
{\begin{tabular}{|c|}

                $0.30 \pm 0.03$ \\
                $0.80 \pm 0.03$ \\

              \end{tabular}}
  & {\begin{tabular}{|c|}

                $0.30 \pm 0.03$ \\
                $0.80 \pm 0.04$ \\

              \end{tabular}}
              & {\begin{tabular}{|c|}

                $0.30 \pm 0.03$\\
                $0.80 \pm 0.03$ \\

              \end{tabular}}
               &
              {\begin{tabular}{|c|}

                $0.30 \pm 0.03$ \\
                $0.80 \pm 0.03$ \\

              \end{tabular}} \\
  \hline
\end{tabular}
\caption{Expected $1\sigma$ parameter constraints for $^{500}$XCS when marginalized over the other parameter, for
known scaling relations, no scatter, and with accounted-for measurement errors.}
\label{tab:noscerrs}
\end{table*}

\begin{table*}
\centering
\begin{tabular}{|c|c|c|c|c|c|c|c|}
  \hline
  $L$--$T$ evolution\hspace{-0.5cm} & \hspace{-0.5cm} &
  {\begin{tabular}{c}
     Realistic \\
    $T$ errors \end{tabular}}
    \hspace{-0.5cm}
     &
{\begin{tabular}{c}
     Worst-case \\
   $T$ errors \end{tabular}}
   \hspace{-0.5cm}
     &
{\begin{tabular}{c}     Realistic \\
   $z$ errors \end{tabular}}
   \hspace{-0.5cm}
     &
{\begin{tabular}{c}
    Worst-case \\
     $z$ errors \end{tabular}}
     \hspace{-0.5cm}
  & {\begin{tabular}{c}
    Realistic \\
    $T$ \& $z$ errors \end{tabular}}
    \hspace{-0.5cm}
  & {\begin{tabular}{c}
    Worst-case \\
   $T$ \& $z$ errors \end{tabular}}
   \hspace{-0.5cm}
    \\
  \hline
  Constant \hspace{-0.5cm} & {\begin{tabular}{c}
              $\Omega_{\rm m}$ \\
              $\sigma_8$
            \end{tabular}
  }\hspace{-0.5cm} & {\begin{tabular}{c}
                  $0.28 \pm 0.02$ \\
                  $0.82 \pm 0.02$
              \end{tabular}}
                   \hspace{-0.5cm}
               &
{\begin{tabular}{c}
                $0.33 \pm 0.02$\\
                $0.76 \pm 0.02$
              \end{tabular}}
                   \hspace{-0.5cm}
               &
{\begin{tabular}{c}
                $0.28 \pm 0.02$ \\
                $0.82 \pm 0.03$
              \end{tabular}}
                   \hspace{-0.5cm}
                & {\begin{tabular}{c}
                $0.25 \pm 0.02$\\
                $0.85 \pm 0.03$
              \end{tabular}}
                   \hspace{-0.5cm}
& {\begin{tabular}{c}
                  $0.27 \pm 0.02$ \\
                  $0.83 \pm 0.02$
              \end{tabular}} \hspace{-0.5cm}
              &
{\begin{tabular}{c}
                $0.26 \pm 0.03$\\
                $0.82 \pm 0.03$
              \end{tabular}}
                   \hspace{-0.5cm}
                 \\
  Self-similar\hspace{-0.5cm} & {\begin{tabular}{c}
              $\Omega_{\rm m}$ \\
              $\sigma_8$
            \end{tabular}
  }\hspace{-0.5cm}
  &
{\begin{tabular}{c}
                $0.29 \pm 0.02$ \\
                $0.81 \pm 0.02$
              \end{tabular}}
                   \hspace{-0.5cm}
  & {\begin{tabular}{c}
                $0.25 \pm 0.03$ \\
                $0.84 \pm 0.04$
              \end{tabular}}
                   \hspace{-0.5cm}
              & {\begin{tabular}{c}
                $0.29 \pm 0.02$\\
                $0.81 \pm 0.03$
              \end{tabular}}
                   \hspace{-0.5cm}
               &
              {\begin{tabular}{c}
                $0.27 \pm 0.02$ \\
                $0.83 \pm 0.03$
              \end{tabular}}
                \hspace{-0.5cm}
            &
{\begin{tabular}{c}
                $0.29 \pm 0.02$ \\
                $0.82 \pm 0.02$
              \end{tabular}}
              \hspace{-0.5cm}
  & {\begin{tabular}{c}
                $0.22 \pm 0.02$ \\
                $0.88 \pm 0.04$
              \end{tabular}}
              \hspace{-0.5cm}
              \\
  \hline
\end{tabular}
\caption{Expected $1\sigma$ parameter constraints for $^{500}$XCS when marginalized over the other parameter, for known scaling relations, no scatter, and with unaccounted-for measurement errors.}
\label{tab:noscerrnoerrs}
\end{table*}

We find that jointly fitting for the cosmological parameters and the
$L$--$T$ relation, $^{500}$XCS will measure $\Omega_{\rm m} = 0.3 \pm 0.03,
\sigma_8 = 0.8 \pm 0.05$ under our assumptions. The marginalized $\Omega_{\rm m}$--$\sigma_8$ likelihood distributions are shown in Fig.~\ref{fig:scsmlcon}, and the full set of likelihood distributions in Fig.~\ref{fig:selfcalall} (note that Fig.~\ref{fig:scsmlcon} is just the top triangles of these plots). The $1$D parameter constraints are listed in Table~\ref{tab:noerrs}. The constraints for the case of self-similar $L$--$T$ evolution appear narrower than for a constant $L$--$T$. This is due to the redshift-evolution prior, explained below, significantly cutting the distribution. We thus believe the constant $L$--$T$ case to be most representative of the constraints we can expect.
As expected, the constraints on $\Omega_{\rm m}$ and $\sigma_8$ degrade when marginalizing over the four $L$--$T$ parameters (compared to Fig.~\ref{fig:contours}), but still remain relatively small. In comparison to the South Pole Telescope, {\it Planck}, and {\it DUET} \citep{MJ04,Geisbuesch_planck}, our constraints are still competitive (we lack comparable results for {\it XMM}--LSS, but expect to do better given our larger survey area and depth). However, if we were to consider self-calibration of the $M$--$T$ relation as well (rather than using an external description, as described in Sect.~\ref{tempmass}), those surveys would have more power than the XCS (using only archival {\it XMM} data) through the use of the cluster power spectrum \citep{MJ04,LHI}. In fact, we do not expect XCS to have any significant constraining power if the $M$--$T$ relation is self-calibrated as well. We show examples of the effects of $M$--$T$ systematics in Sect.~\ref{systbiases}.
It has been shown \citep[e.g.][]{MJ04} that small follow-up samples can dramatically improve the situation. Therefore, weak-lensing/SZ follow-up and/or a contiguous e.g. {\it XMM} survey would be highly advantageous \citep[see also][]{Berge,XMMLSSfuture}.
Comparing to Fig.~\ref{fig:contours}, although we lose constraining power due to an increase in the number of parameters, since we are including scaling-relation scatter the number of clusters increases significantly which mitigates the degradation.  Note that, as shown in Table~\ref{tab:scalrel}, we fit the data to a power-law $L$--$T$ relation $\sim (1+z)^{\gamma_z}$. Although the functional form for a self-similar $L$--$T$ used to generate data is different in principle, we have checked that a power law can approximate its redshift evolution very well.

Using $(T, z)$ number-count self-calibration, based only on archival {\it XMM} data (Fig.~\ref{fig:selfcalall}), we can constrain the $L$--$T$ normalization $\alpha$ to $\pm 0.12$ (or $\pm 6$ per cent) and the $L$--$T$ slope $\beta$ to $\sim \pm 0.3$ (or $\pm 13$ per cent). The self-calibration procedure is not able to jointly constrain the scatter $\sigma_{\log L_{\rm X}}$ and redshift evolution $\gamma_z$ significantly.
We have therefore imposed flat
priors on these parameters, $0.2 \le \sigma_{\log L_{\rm X}} \le 0.4$
and $-1 \le \gamma_z \le 1.5$ to limit the distribution within
reasonable bounds of a size reflecting the minimum accuracy to which we would hope to measure these parameters from our direct $L$--$T$ data, i.e. also taking into account the measured X-ray flux (see also Table~\ref{tab:scalrel}).

Thus, the self-calibration power to constrain the $L$--$T$ relation is present in the data, but as can be seen in Fig.~\ref{fig:selfcalall} there are strong degeneracies between parameters.
The main degeneracy is that between $\gamma_z$ and
$\sigma_{\log L_{\rm X}}$; increasing $\sigma_{\log L_{\rm X}}$ can
easily be offset by reducing $\gamma_z$, which also is easy to understand physically as they both effectively scale the cluster luminosities up or down, and corresponds to the observation by several authors \citep[e.g.][]{Branchesietal,MauYX,Nord,XMMLSSfirst} that $L$--$T$ scatter can mimic $L$--$T$ evolution (also discussed in Sect.~\ref{lumtempevol}).
The redshift evolution $\gamma_z$ is also degenerate with the
$L$--$T$ slope $\beta$, which is thus itself degenerate with $\sigma_{\log
L_{\rm X}}$. Interestingly though, the cosmological parameters show little degeneracy with $\sigma_{\log L_{\rm X}}$.
It is the result of these degeneracies that all four $L$--$T$ parameters cannot be
jointly constrained. Bayesian Complexity \citep*{Kunzcomplexity}
suggests that at most five parameters (including $\Omega_{\rm m}$ and $\sigma_8$) can be fully constrained, which is also what we find in practice. As one might expect, we will therefore have to rely on our direct $L$--$T$ measurement to constrain the $L$--$T$ scatter and evolution (as proposed by \citealt*{VHS}; \citealt{Hu,BW,Wang04,LH}).

The fact that our relatively generous priors on the $L$--$T$ scatter and evolution still restricts the distribution, affecting the size of cosmological constraints, also serves to illustrate a slightly different point of view: turning the problem around, and using complementary cosmological data to constrain e.g. $\Omega_{\rm m}$ and $\sigma_8$, thereby possibly also improving constraints on astrophysical parameters \citep[as noted by e.g.][]{LSW,HK,Hu}.

\subsection{Constraints: with measurement errors}
\subsubsection{Known scaling relations, no scatter}

The effect on derived cosmological constraints from measurement errors
in X-ray temperature and redshift is small.
Taking into account knowledge of the error distributions in the data analysis,
we find that the size of uncertainties increases
somewhat compared to the no-errors case (see
Fig.~\ref{fig:terrsfitted} and Table~\ref{tab:noscerrs}, cf. Fig.~\ref{fig:contours} and column~1 in Table~\ref{tab:noerrs}). Interestingly, even with temperature or redshift errors of an unrealistically large magnitude, the effect on the constraints is small.  As such, we expect the broadening of constraints due to measurement errors to be a minor effect compared to the effects of possible systematic errors. These findings are in agreement with what has already been found by e.g. \citet{Hut04,Hut06,LimaHuphotoz}.

\begin{figure}
\centering
\includegraphics[width=9cm]{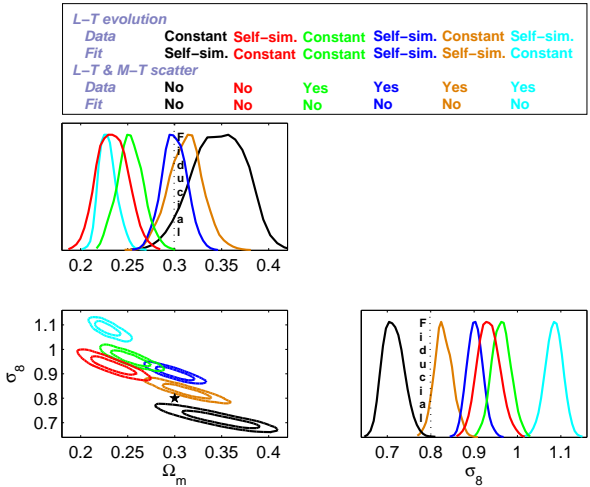}
\caption{Expected $68$ and $95$ per cent parameter constraints from the $^{500}$XCS, for various cluster scaling relation assumptions inconsistent with the fiducial model used
for generating the data. The different data and fitting assumptions are colour coordinated with the contours, and listed in the panel above the plot. The model parameters are the same as previously, and listed in Table~\ref{tab:scalrel}. The corresponding cluster distributions in redshift and temperature can be found in Fig.~\ref{fig:2dhists}.} \label{fig:biascontours}
\end{figure}

\begin{table*}
\centering
\begin{tabular}{|c|c|c|c|c|c|c|}
  \hline
  &
  {\begin{tabular}{c} 1
     \end{tabular}}
     &
{\begin{tabular}{c} 2
     \end{tabular}}
     &
{\begin{tabular}{c}  3 \end{tabular}}
     &
{\begin{tabular}{c}
    4 \end{tabular}}
  & {\begin{tabular}{c}
    5 \end{tabular}}
  & {\begin{tabular}{c}
    6 \end{tabular}}
    \\
  \hline
  {\begin{tabular}{c}
              $\Omega_{\rm m}$ \\
              $\sigma_8$
            \end{tabular}
  } & {\begin{tabular}{c}
                  $0.35 \pm 0.03$ \\
                  $0.71 \pm 0.02$
              \end{tabular}}
               &
{\begin{tabular}{c}
                $0.23 \pm 0.02$\\
                $0.93 \pm 0.02$
              \end{tabular}}
               &
{\begin{tabular}{c}
                $0.25 \pm 0.01$ \\
                $0.96 \pm 0.02$
              \end{tabular}}
                & {\begin{tabular}{c}
                $0.30 \pm 0.01$\\
                $0.90 \pm 0.02$
              \end{tabular}}
& {\begin{tabular}{c}
                  $0.31 \pm 0.02$ \\
                  $0.83 \pm 0.02$
              \end{tabular}}
              &
{\begin{tabular}{c}
                $0.23 \pm 0.01$\\
                $1.08 \pm 0.02$
              \end{tabular}}
  \\
  \hline
\end{tabular}
\caption{Expected $1\sigma$ parameter constraints for $^{500}$XCS when marginalized over the other parameter, for systematic errors in the scaling-relation assumptions (Fig.~\ref{fig:biascontours}). The different scenarios are numbered according to the order in which they are listed in Fig.~\ref{fig:biascontours}.}
\label{tab:ltmtbias}
\end{table*}

The effect of ignoring
temperature and redshift errors in the fitting procedure can to some
extent model one such systematic; poor knowledge of the measurement error distributions.  As can be seen in Fig.~\ref{fig:terrsnotfitted}, we find that when ignoring measurement
errors in the fitting, for all combinations of single measurement
errors (i.e. only $z$ or $T$ at a time), the difference in
cosmological constraints compared to the fiducial model
is within $2\sigma$ (and most are within $1\sigma$). For
combined $z$ and $T$ measurement errors, the same is still true for
realistic errors, but for a self-similar $L$--$T$ and worst-case
errors the bias is larger than $2\sigma$ (see
Fig.~\ref{fig:terrsdblnotfitted}). These results agree well with the expectations presented in Sect.~\ref{Distriberr}, and thus suggest that a good estimate of the bias in cosmological constraints due to Malmquist-bias effects can be obtained by comparing the net Malmquist bias to the Poisson error of the total cluster number count (at least to roughly discriminate $>2\sigma$ bias from $<2\sigma$ bias). This is not that surprising as the shape of the cluster distribution does not differ much between such models, and thus the total number count carries most of the information \citep[also noted in][]{HMH}.
The $1$D constraints corresponding to Figs.~\ref{fig:terrsnotfitted}~\&~\ref{fig:terrsdblnotfitted} are listed in
Table~\ref{tab:noscerrnoerrs}.

\subsubsection{Self-calibration of $L$--$T$ relation, with scatter}

Because of computational limitations we have not explicitly calculated
cosmological constraints for self-calibration with measurement
errors. We have however checked that when scatter is
included in the data, the effect of temperature and redshift errors on
the expected cluster distribution is very similar to the case where no
scatter is included, see Fig.~\ref{fig:2derrdistcfcombo}, and the discussion in the preceding Section and Sect.~\ref{Distriberr}. We thus
expect that the effect from measurement errors on constraints where
scatter is included, with or without self-calibration, can be expected
to be small or negligible -- both in terms of bias if the errors are
ignored, or broadening of error contours when errors are taken into
account. We therefore believe that the self-calibration results for
the case without measurement errors (Figs.~\ref{fig:scsmlcon}~\&~\ref{fig:selfcalall}, Table~\ref{tab:noerrs}) should provide a good rough
approximation of the expected self-calibration constraints with
measurement errors.  Note that this situation is bound to change once direct $L$--$T$ data is added to the procedure, as the temperature errors will then have a significant impact on the accuracy to which the evolution of the $L$--$T$ relation can be determined, hence setting the size of the constraints on $\sigma_{\log L_{\rm X}}$ and $\gamma_z$. One can therefore \emph{not} conclude that temperature errors are largely unimportant for the cosmological constraints we will ultimately produce from the data, but an upper limit on the size is set by this work \citep[see e.g.][]{VHS,Hu,BW}.

\subsection{Constraints: systematic biases}
\label{systbiases}
It is clear from the above sections that measurement errors in the guises we consider are not expected to be a major source of bias or degradation of constraints
vis-\`{a}-vis the underlying cluster distribution. However, if incorrect
assumptions as to the characteristics of the $M$--$T$ and $L$--$T$
relations are used when fitting the data, significant bias may occur,
as seen in Figs.~\ref{fig:biascontours}~\&~\ref{fig:biascontoursmt} \modsec{and Tables~\ref{tab:ltmtbias}~\&~\ref{tab:mtbias}}.

Looking first at Fig.~\ref{fig:biascontours} \modsec{(and Table~\ref{tab:ltmtbias} for the 1D marginalized constraints)}, the figure shows how both the size and best-fitting values of cosmological constraints are affected when ignoring scatter in the scaling relations, using an inappropriate $L$--$T$ relation, or both. The first case (from left in the panel above the plot) shows how using a self-similar $L$--$T$ to fit data coming from a constant $L$--$T$ leads to an overestimation of $\Omega_{\rm m}$. Comparing the second, third and fourth cases, we can see that $L$--$T$ evolution and scaling-relation scatter all have a similar effect when unaccounted for, all leading to an overestimation of $\sigma_8$ (and consequently underestimation of $\Omega_{\rm m}$). As they both have a similar effect, the self-similar evolution in the fifth case can mimic some of the unaccounted-for scatter, leading to a lesser overestimation than for the previous cases. On the other hand, the sixth and last case combines the two effects thus leading to a dramatic overestimation of $\sigma_8$. As such, this last case represents a worst-case scenario for this type of bias.

\begin{figure}
\centering
\includegraphics[width=9cm]{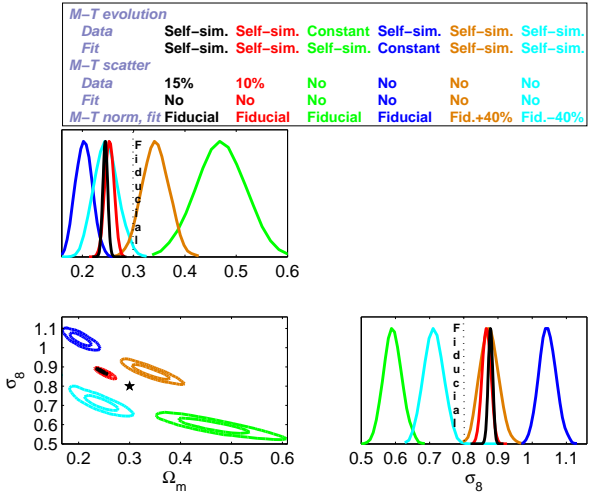}
\caption{Expected $68$ and $95$ per cent parameter constraints from the $^{500}$XCS, for various mass--temperature-relation assumptions inconsistent with the fiducial model used
for generating the data. The different data and fitting assumptions are colour coordinated with the contours, and listed in the panel above the plot. Throughout, a self-similar $L$--$T$ relation with scatter (as specified in Table~\ref{tab:scalrel}) has been assumed.} \label{fig:biascontoursmt}
\end{figure}

\begin{table*}
\centering
\begin{tabular}{|c|c|c|c|c|c|c|}
  \hline
  &
  {\begin{tabular}{c}
    1 \end{tabular}}
     &
{\begin{tabular}{c}
     2 \end{tabular}}
     &
{\begin{tabular}{c}  3 \end{tabular}}
     &
{\begin{tabular}{c}
    4 \end{tabular}}
  & {\begin{tabular}{c}
    5 \end{tabular}}
  & {\begin{tabular}{c}
    6 \end{tabular}}
    \\
  \hline
  {\begin{tabular}{c}
              $\Omega_{\rm m}$ \\
              $\sigma_8$
            \end{tabular}
  } & {\begin{tabular}{c}
                  $0.245 \pm 0.005$ \\
                  $0.878 \pm 0.007$
              \end{tabular}}
               &
{\begin{tabular}{c}
                $0.253 \pm 0.008$\\
                $0.87 \pm 0.01$
              \end{tabular}}
               &
{\begin{tabular}{c}
                $0.47 \pm 0.05$ \\
                $0.59 \pm 0.03$
              \end{tabular}}
                & {\begin{tabular}{c}
                $0.20 \pm 0.01$\\
                $1.04 \pm 0.02$
              \end{tabular}}
& {\begin{tabular}{c}
                  $0.34 \pm 0.02$ \\
                  $0.87 \pm 0.03$
              \end{tabular}}
              &
{\begin{tabular}{c}
                $0.24 \pm 0.03$\\
                $0.72 \pm 0.04$
              \end{tabular}}
              \\
  \hline
\end{tabular}
\caption{Expected $1\sigma$ parameter constraints for $^{500}$XCS when marginalized over the other parameter, for systematic errors in the $M$--$T$ relation assumptions (Fig.~\ref{fig:biascontoursmt}). The different scenarios are numbered according to the order in which they are listed in Fig.~\ref{fig:biascontoursmt}.}
\label{tab:mtbias}
\end{table*}

The other figure, Fig.~\ref{fig:biascontoursmt} \modsec{(and Table~\ref{tab:mtbias} for the 1D marginalized constraints)}, shows how constraint size and best-fitting values vary with systematic errors in the $M$--$T$ relation only. The first two cases (from left in the panel above the plot) illustrates significantly underestimating a scatter of $10$ or $15$ per cent (deviations similar to what might be expected according to \citealt{Vik06}). This leads to an overestimation of $\sigma_8$, and relatively narrow constraints, since scatter significantly increases the number of detected clusters. The largest impact seen in this figure comes from poor knowledge of the redshift evolution of the $M$--$T$ relation, seen in the second pair of contours. We consider a self-similar $M$--$T$ analyzed as constant in redshift, and a constant $M$--$T$ analyzed as self-similar. In both cases the deviation from the fiducial model is very significant, with the size of constraints also affected, due to the fiducial-model assumptions having a significant impact on the number of detected clusters.  The third, and last, pair of contours show the effect of over- or underestimating the normalization mass by $40\%$ (this value agrees with what might be expected according to e.g. \citealt{Vik06}). Overestimation of the mass leads to an overestimation of $\sigma_8$, since the higher the assumed mass for a given temperature, the fewer the number of clusters at that temperature. Underestimation of the mass consequently also leads to an underestimation of $\sigma_8$.

In most cases, the constraints
are more than $3\sigma$ away from the fiducial model. Referring back to the discussion on Poisson errors in Sect.~\ref{Distriberr} and applying that to the relevant cluster distributions (see Fig.~\ref{fig:histograms}), this result is not surprising. We find that in terms of total number count Poisson error bars, the discrepancy between data and fitting assumptions are at least $\sim 6\sigma$.  These limitations will apply to any galaxy cluster survey employing cluster scaling relations to arrive at results, certainly all X-ray surveys, with the exact susceptibility to bias given by the combination of true cluster distribution and survey selection function. This stresses the importance of knowledge of the behaviour of the scaling relations in the form of self-calibration and/or separate follow-up information. For this, accurate knowledge of the selection function is necessary, so that scaling-relation scatter and evolution can be correctly distinguished \citep[as pointed out in e.g.][]{XMMLSSfirst}.

\section{Discussion and conclusions}
\label{Conclusions}

\subsection{The XCS forecast}
The XMM Cluster Survey (XCS) will cover $500$ deg$^2$ and is expected to produce one of the largest catalogues of galaxy clusters so far, with $\sim 1500$--$3300$ clusters
having $0.1\le z \le 1, 2\,{\rm keV} \le T \le 8\,{\rm keV}$. Around $20$ per cent of these will belong to the $^{500}$XCS sample that have sufficient photons ($>500$) for their X-ray temperature to be reliably estimated. In a rough approximation, we expect to find an additional $250$ or more clusters at $z>1$, of which at least $10$ should have $>500$ photons.  We have proven the potential of the XCS with the recent discovery of the most distant galaxy cluster known, XMMXCS~J2215.9-1738 at $z = 1.457$ \citep{stanford2215,hilton2215}.  Cluster redshifts are obtained from both public-domain photometry and the NOAO--XCS Survey (NXS, \citealp{nxs}). To date, more than $400$ XCS candidates have been optically confirmed. \modsec{An initial observational area of $132$ deg$^2$ (XCS DR1) contains in the range $75$--$100$ detected clusters with $T>2$~keV and $>500$ photons. This number is consistent with the theoretical expectations presented here for our fiducial models.}

We have shown the power in determining both cosmological and astrophysical parameters  expected from the {\it XMM} archive, using only self-calibration from the ($T,z$) distribution and taking detailed selection function, cluster distribution and
measurement error modeling into account in a Monte Carlo Markov Chain (MCMC) setting. Inclusion of the selection function requires the specification of the luminosity--temperature relation, and thus enables us to also self-calibrate this relation. We also introduce and motivate a new `smoothed Maximum Likelihood estimate' of the expected constraints, which can be regarded as intermediate between a Fisher matrix analysis and a full mock catalogue ensemble averaging in MCMC.

We expect the $^{500}$XCS to measure
\begin{eqnarray}
\nonumber
\begin{array}{lcrrclcrr}
  \sigma\left(\Omega_{\rm m}\right) & < & 0.03 & (10\%)\,, & & \sigma\left(\alpha\right) & < & 0.12 & (6\%)\,, \\
  \sigma\left(\sigma_8\right) & < & 0.05 & (6\%)\,, & & \sigma\left(\beta\right) & < & 0.33 & (13\%)\,,
\end{array}
\end{eqnarray}
for a flat $\Lambda$CDM Universe, the uncertainty on $\Omega_{\rm m}$ also being that
on $\Omega_\Lambda$. The cosmological constraints are similar to those already obtained using gas mass fraction measurements \citep[e.g.][]{Allen02,Allen:2007ue}.
They are better than those that can be expected from {\it XMM}--LSS \citep{RVP}, because XCS covers more area than {\it XMM}--LSS (predicted maximum area of $64$ deg$^2$, but so far results for only $5$ deg$^2$ have been published) and has a higher average exposure time. Our constraints are also somewhat competitive compared to expected constraints from e.g. the SPT, {\it Planck}, and {\it DUET} \citep{MJ04,Geisbuesch_planck}, except if self-calibration of the mass--temperature relation is also considered.  The scatter and redshift evolution of the luminosity--temperature relation cannot be jointly constrained to a significant degree by the self-calibration data alone; additional data -- archival {\it XMM} and/or follow-up -- is needed to distinguish e.g. no evolution from self-similar evolution if the scatter is left as a free parameter. Like e.g. \citet{LSW,HK,Hu}, we note that there is also potential to use this conversely, to let complementary cosmological data help constrain astrophysical parameters. We may return to this in future work.

\subsection{Measurement errors}
We include for the first time realistic temperature measurement errors, based on detailed X\textsc{spec} simulations of the {\it XMM} fields, and propagate redshift errors to the temperature determination.  The presence of realistic or worst-case measurement errors in X-ray temperature and redshift will have only a small impact on the accuracy to which cosmological parameters can be expected to be measured, of order $0.01$ in $1$D confidence limits. Furthermore, we find that imperfect knowledge of the variances of measurement errors, or the presence of catastrophic photometric redshifts, should not produce significant bias in
the cosmological constraints. We conclude that, under these assumptions, even ignoring the expected realistic measurement errors in the data analysis will provide a reasonable estimate of the true constraints.  For the case where direct $L$--$T$ data is included in the analysis, the impact of measurement errors (including susceptibility to systematics) will be larger \citep{VHS,Hu,BW}. The size of constraints forecast here provide an upper limit for that scenario.

It is already known \citep{Hut04,Hut06,LimaHuphotoz} that irreducible systematic errors in redshift estimation is a potential problem for cluster surveys, but we leave for future work the specific requirements for the XCS.

We do not yet take into account the variation of photon count with temperature/luminosity, and how that affects the size of temperature errors. Including this effect, instead of employing a lower threshold only, may well improve the size of our constraints. However the maximum improvement for self-calibration is small. For inclusion of direct $L$--$T$ data the importance will be larger.

\subsection{Cluster scaling relations}
\label{conclscalrel}
The choice of $L$--$T$ relation itself has no significant impact on the size of cosmological constraints.  In our considerations, we do not yet
take into account the separate $L$--$T$ measurement to be performed by
the XCS. In the final data analysis,
the $L$--$T$ measurements will be jointly fitted with the cluster
distribution. Hence, our expected constraints represent a worst-case
scenario of no direct data on the $L$--$T$ relation.  We plan to revisit the issue of the XCS $L$--$T$ measurement in the future. As an example, estimates for the {\it DUET} survey \citep{MJ04} show that follow-up information on the $M$--$T$ relation can improve constraints by more than a factor of three.

We quantitatively show that making incorrect assumptions about the
cluster scaling relations can typically result in at least a
$2\sigma$--$3\sigma$ bias in cosmological constraints, a result which can be considered generic for all X-ray and SZ cluster surveys, and those optical surveys relying on cluster scaling relations. Thus,
parametrizing the scaling relations appropriately and using
self-calibration and/or follow-up information is crucial to arrive at
robust results. This places high demands on precise characterization of the survey selection function to accurately distinguish scaling-relation evolution and scatter.
That is not a problem for X-ray cluster surveys (as they generally have the best-understood selection functions), and shows the importance of the XCS measurement of the $L$--$T$ relation for cosmological applications. The {\it XMM}--LSS collaboration have already pointed this out, and obtained some first results \citep{XMMLSSfirst}. A potential pitfall however is the possible redshift evolution of the $L$--$T$ scatter, as observed in the CLEF simulation \citep{Kay07}. This has not so far been considered in the literature, but is a possible source of bias that should be better understood.  The future {\it IXO} mission \citep{XEUS} will be of great importance for precision measurements of all details of the $L$--$T$ relation. The XCS will provide thousands of clusters for {\it IXO} to target.

An important source of uncertainty is the mass--temperature relation. We have shown quantitatively that, as for the luminosity--temperature relation, imperfect knowledge can easily lead to significant bias. Joint estimation of the mass--temperature relation will lead to broader constraints, and we do not expect the XCS to be able to constrain both the $L$--$T$ and $M$--$T$ relations simultaneously. Generally, it has been found that an accuracy of less than $10$ per cent in the $M$--$T$ relation will be needed, and that self-calibration (particularly if making use of the power spectrum, which XCS can not do) and/or small follow-up samples can achieve that \citep{HHM,HMH,LSW,MJ03,MJ04,Wang04,LHI,LH}.  A recent development is the claim that the X-ray luminosity is a better mass proxy than previously thought \citep{MauYX}. This remains somewhat controversial, but could be worthwhile to consider. Its potentially low scatter and the prospect of including low-temperature clusters, for which the temperature cannot be accurately measured, makes this interesting.  Likewise, employing the quantity $Y_{\rm X}$ \citep[e.g.][]{KVN} would also be an interesting option to consider. We leave the XCS-specific details for future work.

It has also been noted by, amongst others, \citet{Younger} and \citet{Ascasibar}, that the choice of parametrization for the cluster scaling relations can have a significant impact on the size of cosmological constraints, and they argue that a physically-motivated form is beneficial. As also noted by \citet{LimaHuphotoz}, efforts in correlating physical properties of clusters, such as that of \citet{Shawetal}, could therefore be of great importance for the size of cosmological constraints, not just biases or astrophysics. However, as the observed dependence on parametrization appears to largely come from an $\Omega_{\rm m}$--$\Omega_{\Lambda}$ degeneracy, and in this work we assume that $\Omega_{\Lambda}=1-\Omega_{\rm m}$, we do not expect this to be of importance for our results here.

\modsec{\subsection{Selection function}
The cluster model assumptions made in our calculation of the selection function could have an impact on cluster detectability and thus cosmological constraints. For this reason, we have studied the selection function dependence on cluster structure parameters (for the beta model assumed). For core radii within reasonable bounds, the relative difference compared to our standard beta model is of the order $10$ per cent.
This number is however an overestimation to the resulting overall uncertainty in cluster number predictions, as it does not include any detailed knowledge of the cluster population, in particular the detailed distribution of cluster-model parameters. Therefore, the significance is limited. We expect the real differences to be smaller. These results agree with those of \citet{Burenin07} for the 400d survey, which show that reasonable variations in cluster size, morphology and scaling relations induce an uncertainty in the detectability for a given flux of typically less than $5$ per cent.  In a future, real analysis, the cluster-model parameter distribution would be included along with a selection function dependence on these parameters, thereby also significantly reducing any uncertainty.  We are currently studying the selection function further in this respect. Among other things we are applying the detection pipeline to the simulated clusters from CLEF \citep{Kay07}, to understand the impact of various cluster properties (Hosmer et al., in preparation).}

\subsection{Sensitivity to fiducial model and priors}
\modsec{
Dropping our assumption of spatial flatness, we do not expect particularly strong constraining power on $\Omega_{\Lambda}$, based on results such as \cite{Allen02,MJ04}. One should note though that the utilization of X-ray galaxy clusters over a large range of redshifts has additional constraining power compared to \citet{Allen02}, as those results are based only on the gas mass fraction in nearby clusters; most of the constraining power on $\Omega_{\Lambda}$ comes from $z>0.5$ \citep{HHM,LSW}. The constraints on $\Omega_{\rm m}$ and $\sigma_8$ should broaden as a consequence of dropping the flatness assumption; however \citet{MJ04} find that such increase tends to be fairly marginal.  This increase could arguably also be alleviated by employing appropriate parametrizations for the cluster scaling relations, as discussed above in Sect.~\ref{conclscalrel}. This, as well as the constraining power on modified-gravity models, is a topic for further investigation.}

The assumption of fixed values for the priors of some cosmological parameters,
e.g. the scalar spectral index and the Hubble constant, is not realistic given the uncertainty that still exists regarding their true values. Relaxing those priors would increase the size of constraints, though probably not in a significant manner.

\modsec{\subsection{Requirements for a real analysis}
Our methodology contains a number of simplifications that are sufficient for a forecast, but for a future real analysis will need to be modified.
In particular, this applies to the calculation of the matter-field dispersion $\sigma$, which ought to be calculated employing e.g.~\citet{EH98} or \textsc{CMBFAST} \citep{CAMB} for the transfer function.  The uncertainty in the selection function will have to be better understood and included explicitly. In this context, understanding the distribution and impact of cluster structure parameters and their correlation is important.  New mass function fits which are more accurate than \citet{jetal} should also be used, and the impact of uncertainty included.  Uncertainty in the $M$--$T$ normalization chosen will have to be more carefully considered, as will possible systematic uncertainties in temperature definitions and in photometric redshifts.  We will in the final analysis fit the $L$--$T$ relation jointly with cosmology (i.e. use flux data as well), thereby achieving some improvement on the constraints presented here.}

\subsection{The role of XCS, future surveys and outlook}
\modsec{The XCS is a complete investigation into the galaxy cluster content of the {\it XMM} archive. It will be a pathfinder survey for the many ongoing and planned galaxy cluster surveys, and in particular help guide design of new {\it XMM} observations and the upcoming X-ray missions {\it eROSITA} and {\it IXO}. The $L$--$T$ measurement from the XCS will be the most accurate so far, and will provide an important calibration for potentially all cluster surveys. Although the expected cosmological constraints may not be particularly tight compared to those from the combination of other cosmological data, they will provide important complementary, independent tests of the standard cosmological model from the largest X-ray cluster sample. Particularly, we expect to provide good independent constraints on $\sigma_8$.  }

For example, re-imaging of the XCS sample with {\it XMM} and/or {\it IXO} could improve temperature errors sufficiently, so that at least some of the remaining $\sim 80$ per cent of the XCS clusters not in the $^{500}$XCS could be used for constraints (this corresponds to no photon-count cut-off in our calculations -- effectively a $\sim 50$-photon cut-off). We find that an upper limit on the improvement in constraints is by $1\sigma$ ($2$D) or $\sim 40$ per cent ($1$D), which thus also represents the best one could possibly do with the current {\it XMM} archive \emph{using $(T, z)$ self-calibration only}.  Once we add direct $L$--$T$ data to the procedure, the lever arm from re-imaging will be more significant. The {\it XMM}--LSS collaboration argue \citep{XMMLSSfirst} that the most efficient way to constrain the $L$--$T$ evolution is to increase the sample size, rather than improve temperature errors, and propose a future $200$ deg$^2$ {\it XMM} survey with this rationale \citep{XMMLSSfuture}.  A complementary approach to additional observations would be to also use the luminosity--mass relation as mass proxy for those clusters for which the temperature determination is difficult \citep{MauYX}, or use the relatively new quantity $Y_{\rm X}$ advocated by e.g. \citet{KVN}.

The future for galaxy clusters as a precision and complementary cosmological tool looks increasingly promising, with a range of surveys planned or underway, and numerous simulations undertaken to understand the mass function and cluster physics.  The XCS will produce one of the largest ever catalogues of galaxy clusters, providing valuable information on cosmology and cluster physics through the luminosity--temperature relation, beating a path for the many planned surveys.  The interface between well-understood cluster physics and cosmology, cross-calibration, and complementary cosmological data will surely be important for constraining dark energy, the primordial power spectrum, and cluster physics over the next decade and beyond.

\section*{ACKNOWLEDGMENTS}

M.S.\ was partially supported by the Swedish G{\aa}l\"{o}, Gunvor~\& Josef An\'er and
C.~E.~Levin foundations and the Sir Richard Stapley Educational Trust.  P.T.P.V.\ acknowledges the support of POCI2010 through the project POCTI/CTE-AST/58888/2004.
A.K.R., E.L.-D., M.H.~\& N.M.\ were supported by PPARC/STFC, and M.D. in part by the RAS Hosie Bequest.  M.S., M.D., K.S.~\& M.H.\ acknowledge additional financial support from their respective universities.  A.K.R~\& K.S.\ acknowledge financial support from the {\it XMM} and {\it Chandra} guest observer programmes and from the NASA--LTSA award NAG-11634.  The work by S.A.S.\ was performed under the auspices of the
U.S. Department of Energy, National Nuclear Security Administration by
the University of California, Lawrence Livermore National Laboratory
under contract No.~W-7405-Eng-48.  This work is based on data obtained by {\it XMM--Newton}, an ESA science mission funded by contributions from ESA member states and from NASA.  The research made use of the NASA/GSFC-supported \textsc{Xspec} software, and was conducted in co-operation with SGI/Intel utilizing the Altix 4700 supercomputer (COSMOS), also funded by HEFCE and STFC.  We are grateful to Ronald Cools and the NINES
group at KU~Leuven for making the numerical integration package
\textsc{CUBPACK} available to us.  We thank Ben Maughan and Jochen Weller for useful
discussions.



\appendix
\onecolumn

\section{Equations}
\label{Appendix}

\subsection{Cluster counts}
\label{clustcnt}
\subsubsection{Ideal measurements}
The expected number of clusters with temperatures between $T_1$ and
$T_2$ at redshifts between $z_1$ and $z_2$ when measurements are
assumed to be exact is given by
\begin{eqnarray}
  N_{\rm ideal}(T_1, T_2, z_1, z_2) & = & \int_{z_1}^{z_2}
  \int_{T_1}^{T_2}
  n_{\rm ideal}(T, z) {\rm d}T {\rm d}z
\end{eqnarray}
where $n_{\rm ideal}$ is the actual number density of clusters in
temperature and redshift, given by the convolution of the mass
function $n\left(M_{\rm t}, z\right)$ with cluster scaling relations, their
scatter (through $p\left(L_{\rm t}, M_{\rm t}\right)$), cosmic volume ${\rm d}V/{\rm d}z$ and
the survey selection function $f_{\rm sky}$ (including sky coverage):
\begin{eqnarray}
  n_{\rm ideal}(T, z) & = &
  \int_{M_{\rm t}} \int_{L_{\rm t}}
   n\left(M_{\rm t}, z\right) f_{\rm sky}(L_{\rm t}, T, z) p\left(L_{\rm t}, M_{\rm t} | L(T,
  z), M(T, z)\right)
   \frac{{\rm d}V}{{\rm d}z} {\rm d}L_{\rm t} {\rm d}M_{\rm t} \,.
\end{eqnarray}
The scaling-relation scatter probability distributions are assumed to
be statistically independent,
\begin{eqnarray}
    p\left(L_{\rm t}, M_{\rm t} | L(T, z), M(T, z)\right) & = & p\left(L_{\rm t} | L(T,
    z)\right)
    \times p\left(M_{\rm t} | M(T, z)\right) \,,
\end{eqnarray}
each having a log-normal form:
\begin{eqnarray}
  \nonumber
  p\left(M_{\rm t} | M(T,z), T, z \right) {\rm d}M_{\rm t} & = & p\left(T^M_{\rm t}(M_{\rm t}) | T,
  z\right) \frac{{\rm d}M_{\rm t}}{{\rm d}T^M_{\rm t}} {\rm d}T^M_{\rm t} =
   \frac{1}{{\rm erf}(m_T/\sqrt{2})\sqrt{2\pi}\sigma_{\log T}}
   \exp\left[-\frac{1}{2}\frac{\left(\log_{10}T -
  \log_{10}T^M_{\rm t}\right)^2}{\sigma^2_{\log T}}\right] \\
   & & \times \Theta\left(m_T\sigma_{\log T} - |\log_{10}T -
  \log_{10}T^M_{\rm t}|\right) \frac{{\rm d}M_{\rm t}}{{\rm d}T^M_{\rm t}} {\rm d}\log_{10}T^M_{\rm t} \,,
 \\
 \nonumber
 p\left(L_{\rm t} | L(T,z)\right) {\rm d}L_{\rm t} & = &
    \frac{1}{{\rm erf}(m_L/\sqrt{2})\sqrt{2\pi}\sigma_{\log L_{\rm X}}}
   \exp\left[-\frac{1}{2}\frac{\left(\log_{10}L(T,z)
    - \log_{10}L_{\rm t}\right)^2}{\sigma^2_{\log L_{\rm X}}}\right] \\
    & & \times \Theta\left(m_L\sigma_{\log L_{\rm X}} -
  |\log_{10}L(T,z) - \log_{10}L_{\rm t}|\right)
    {\rm d}\log_{10}L_{\rm t} \,.
\end{eqnarray}
The parameters $m_T$, $m_L$, $\sigma_{\log T}$ and $\sigma_{\log L}$
are described further in Sections \ref{tempmass} and \ref{lumtemp} as
well as Table \ref{tab:scalrel}.

\subsubsection{Measurement errors}
\label{clustcnterr}
When treating the case of measurement errors in $T$ and $z$, we must
distinguish observed and true temperature. The expected number of
clusters between \emph{observed} temperatures $T_1$ and $T_2$, and
redshifts $z_1$ and $z_2$, is given by
\begin{eqnarray}
  N_{\rm obs}(T_1, T_2, z_1, z_2) & = & \int_{z_1}^{z_2} \int_{T_1}^{T_2}
  \overline{n}(T, z) {\rm d}T {\rm d}z
\end{eqnarray}
where $\overline{n}$ represents the cluster distribution marginalized
over the probability distribution for measurements, i.e.
\begin{eqnarray}
  \nonumber
  \overline{n}(T, z) & = &
  \int_{z_{\rm t}} \int_{T_{\rm t}}
  n_{\rm ideal}(T_{\rm t}, z_{\rm t}) p\left(T, z | T_{\rm t}, z_{\rm t}\right) {\rm d}T_{\rm t} {\rm d}z_{\rm t}\\
 & = & \int_{z_{\rm t}} \int_{T_{\rm t}}
  n_{\rm ideal}(T_{\rm t}, z_{\rm t}) p\left(\left. T\left[ \frac{1+z_{\rm t}}{1+z}
  \right]  \right| T_{\rm t}, z_{\rm t}\right)p(z |
  z_{\rm t})\left(\frac{1+z_{\rm t}}{1+z}\right) {\rm d}T_{\rm t} {\rm d}z_{\rm t} \,,
\end{eqnarray}
where $z_{\rm t}$ and $T_{\rm t}$ are true redshift and temperature, and in the
last step the relation
$T_{\rm obs} =
(1+z_{\rm obs})T_{\rm t}/(1+z_{\rm t})$
was used to go from observed to true
temperature.  The temperature and redshift measurement probability
distributions are modelled by
\begin{eqnarray}
    \label{eq:tdist}
     p\left(T | T_{\rm t}, z_{\rm t}\right) {\rm d}T
    & = &
    \frac{1}{\sqrt{\pi/2}\left(\sigma_T^-+\sigma_T^+\right)}
    \exp\left[-\frac{1}{2}\frac{\left(T-T_{\rm med}(T_{\rm t},
     z_{\rm t})\right)^2}{\sigma_{T}(T_{\rm t}, z_{\rm t})^2}\right] {\rm d}T
   \\
    \nonumber
   & & T_{\rm med}(T_{\rm t}, z_{\rm t})/T_{\rm t} = \alpha_c + \alpha_T T_{\rm t} + \alpha_z
     z_{\rm t} + \alpha_{zz}z_{\rm t}^2 + \alpha_{TT} T_{\rm t}^2 +
   \alpha_{zT} z_{\rm t} T_{\rm t} \\
    \nonumber
   & & \sigma_{T}(T_{\rm t}, z_{\rm t}) =
\left\{
  \begin{array}{ll}
    \sigma_T^+ = T_{\rm t}\left(\beta^{+}_c + \beta^{+}_T T_{\rm t} + \beta^{+}_z
     z_{\rm t} + \beta^{+}_{zz}z_{\rm t}^2 + \beta^{+}_{TT} T_{\rm t}^2 + \beta^{+}_{zT}
     z_{\rm t} T_{\rm t}\right), & \hbox{$T_{\rm t} \geq T_{\rm med}(T_{\rm t}, z_{\rm t})$;} \\
    \sigma_T^- = T_{\rm t}\left(\beta^{-}_c + \beta^{-}_T T_{\rm t} + \beta^{-}_z
     z_{\rm t} + \beta^{-}_{zz}z_{\rm t}^2 + \beta^{-}_{TT} T_{\rm t}^2 + \beta^{-}_{zT}
     z_{\rm t} T_{\rm t}\right), & \hbox{otherwise.}
  \end{array}
\right.
   \\
    \nonumber
    & & \mathrm{where\,the\,\alpha \,and\,\beta
     \,are\,determined\,from\,simulations\,(see\,Sect.\,\ref{Terror})}
    \\
    p\left(z | z_{\rm t}\right) {\rm d}z & = &
   \frac{1}{N^{\rm rand}_z(z_{\rm t})(1-f_{\rm cat}) + N_z^{\rm
     cata}(z_{\rm t})f_{\rm cat}}\left\{(1-f_{\rm cat})
     \exp\left[-\frac{1}{2}\frac{\left(z-z_{\rm t}\right)^2}{\sigma_0^2(1+z_{\rm t})^2}\right]\right.
     + \\
   \nonumber
   & & \left. f_{\rm cat}
     \exp\left[-\frac{1}{2}\frac{\left(z-z_{\rm t}\right)^2}{c^2\sigma_0^2(1+z_{\rm t})^2}\right]
   \Theta\left(|z-z_{\rm t}| - nc\sigma_0(1+z_{\rm t})\right) \Theta(z)\right\}
    {\rm d}z
    \\
   \nonumber
    & & N^{\rm rand}_z(z_{\rm t}) = \sqrt{\frac{\pi}{2}}\sigma_0(1+z_{\rm t})\left[
    1 +
    {\rm erf}\left(\frac{z_{\rm t}}{\sqrt{2}\sigma_0(1+z_{\rm t})}\right)\right] \\
   \nonumber
    & & N_z^{\rm cat}(z_{\rm t}) =
    \sqrt{2\pi}c\sigma_0(1+z_{\rm t})\left(
    {\rm erfc}\left(\frac{n}{\sqrt{2}}\right) -
         \frac{1}{2} \min\left[{\rm
     erfc}\left(\frac{z_{\rm t}}{\sqrt{2}c\sigma_0(1+z_{\rm t})}\right),
        {\rm erfc}\left(\frac{n}{\sqrt{2}}\right)\right]\right)
\end{eqnarray}
The temperature error assumptions made in this work are described further in Sect.~\ref{Terror} and summarized in Table~\ref{tab:terrs}.
The redshift error assumptions for parameters $f_{\rm cat}$, $n$, $c$ and $\sigma_0$ are described
further in Section \ref{Photozs} and Table \ref{tab:zerror}.
Note
that the probability distributions of \emph{true} temperatures and
redshifts, the Bayesian "inverses" of the above, are weighted by the
cluster distribution and given by

\begin{eqnarray}
    p\left(T_{\rm t} | T, z_{\rm t}\right) {\rm d}T_{\rm t} & = &
    \frac{p\left(T | T_{\rm t}, z_{\rm t}\right)n_{\rm ideal}(T_{\rm t}, z_{\rm t}) {\rm d}T_{\rm t}}
    {\int p\left(T | T', z_{\rm t}\right)n_{\rm ideal}(T', z_{\rm t}) {\rm d}T'} \,,
   \\
    p\left(z_{\rm t} | T_{\rm t}, z\right) {\rm d}z_{\rm t} & = &
    \frac{p\left(z | z_{\rm t}\right)n_{\rm ideal}(T_{\rm t}, z_{\rm t}) {\rm d}z_{\rm t}}
    {\int p\left(z | z'\right)n_{\rm ideal}(T_{\rm t}, z') {\rm d}z'} \,.
\end{eqnarray}

\begin{table}
\centering
\begin{tabular}{|c|c|}
   \hline
  Realistic $T$ errors & Worst-case $T$ errors \\
  \hline
  Eq.~(\ref{eq:tdist}) & Eq.~(\ref{eq:tdist}) \\
  & with std. dev. $3\times\sigma_T$ \\
  \hline
\end{tabular}
\caption{Temperature error specifications.}
\label{tab:terrs}
\end{table}

\begin{table}
\centering
\begin{tabular}{|l|c|c|c|c|c|}
   \hline
  Parameter & Description & Realistic $z$ errors & Worst-case $z$ errors \\
  \hline
  $\sigma_0$ & Standard deviation at $z=0$ & 0.05 & 0.10 \\
  $c$ & Catastrophic standard deviation in units of $\sigma_0$ & 4 & 4\\
  $n$ & Min. deviation from mean in units of $c\sigma_0$ for
  catastrophic redshifts & 1 & 1\\
  $f_{\rm cat}$ & Fraction of catastrophic redshifts & 0.05 & 0.10 \\
  \hline
\end{tabular}
\caption{Redshift error specifications.}
\label{tab:zerror}
\end{table}

\subsection{Expected likelihood}
\label{explik}

In order to evaluate the expected constraints from a survey, one needs to consider some ensemble of possible outcomes and from that calculate, by ensemble averaging or otherwise (given a specification of `expected'), the expected constraints. We have chosen a type of smoothed Maximum Likelihood (ML) estimate, that captures the most likely shape and size of constraint contours but removes the offset associated with a traditional ML point estimate. In the following we show in detail that our expected constraints can be obtained accurately without averaging over many data realizations, but rather by using only an `average catalogue'.

Having an expression for the single-catalogue likelihood, we seek to estimate the
expected constraints for the survey. We define this as the expected
constraints for a set consisting of a certain fraction $\varepsilon$
most likely catalogues. We start by setting up some formalism and prove our central theorem, and then go on to use this for our application.
\begin{defn}
Let $\{C_j\}$ denote a set of catalogues indexed by $j$. Let $N$ be the number of bins of a catalogue. Let $N_i$ or $N_i^j$ be the observed number count for bin $i$, in catalogue $j$ where superscript present. Let $\lambda_i$ be the Poisson mean for bin $i$ at which the likelihood is evaluated, and $\lambda_i^*$ the same for the fiducial model used to generate the catalogues. Let $\delta_i^j \equiv N_i^j - \lambda_i^*$
measure the deviation of the observed number count from the
fiducial-model mean.
\end{defn}
\begin{defn}
Let the expected likelihood for the fraction $\varepsilon$
most likely catalogues in a Poisson ensemble be given by
\begin{equation}
\left\langle{\cal L}\right\rangle_{\varepsilon}\, \equiv \,
\prod_{i}e^{-\lambda_i}\left\langle\prod_{i}
\left[\frac{\lambda_i^{N_i}}{N_i!}
\right]\right\rangle_{\varepsilon}\,,
\end{equation}
where the product runs over the $N$ bins in a catalogue, and $\langle\cdot\rangle_{\varepsilon}$
denotes a
Poisson ensemble average restricted to catalogues $C_j$ such that
$\sum_j P(C_j)\Theta(P(C_j)-P_{\varepsilon}) = \varepsilon$ (with
$\Theta$ the Heaviside step function). This expression also defines
the probability threshold $P_{\varepsilon}$.
\end{defn}
\begin{corrol}
It follows from the above definition and the Poisson distribution that
\begin{eqnarray}
\left\langle{\cal L}\right\rangle_{\varepsilon}\, = \,
\prod_{i}e^{-\lambda_i}
\sum_j \frac{P(C_j)}{\varepsilon}\prod_{i}
\left[\frac{\lambda_i^{N_i^j}}{N_i^j!}
\right] \Theta(P(C_j)-P_{\varepsilon})\, = \,
e^{-\sum_i(\lambda_i+\lambda_i^*)}
\frac{1}{\varepsilon}\sum_j
\prod_{i}\frac{(\lambda_i\lambda_i^*)^{\lambda_i^*+\delta_i^j}}
{\left[(\lambda_i^*+\delta_i^j)!\right]^2}
\Theta(P(C_j)-P_{\varepsilon})\,.
\end{eqnarray}
\end{corrol}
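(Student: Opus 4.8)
The plan is to unwind the two preceding definitions and then substitute the explicit product form of the fiducial Poisson weight $P(C_j)$; no genuinely new idea is needed, since the statement is a bookkeeping identity. First I would write the restricted ensemble average explicitly: by definition $\langle\cdot\rangle_\varepsilon$ assigns to each catalogue $C_j$ the renormalized weight $P(C_j)/\varepsilon$ when $P(C_j)\ge P_\varepsilon$ and zero otherwise, the factor $1/\varepsilon$ being forced by the normalization $\sum_j P(C_j)\Theta(P(C_j)-P_\varepsilon)=\varepsilon$ so that the truncated weights still sum to one. Applying this to $\prod_i \lambda_i^{N_i}/N_i!$ inside the average, and pulling the $j$-independent factor $\prod_i e^{-\lambda_i}$ outside the sum, gives the first claimed equality at once.

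For the second equality I would insert the fact that a catalogue drawn from the fiducial model factorizes over bins, $P(C_j)=\prod_i (\lambda_i^*)^{N_i^j}e^{-\lambda_i^*}/N_i^j!$. The two exponential prefactors then combine into $e^{-\sum_i(\lambda_i+\lambda_i^*)}$, which is $j$-independent and comes out of the sum; the surviving bin-products merge as $\prod_i (\lambda_i^*)^{N_i^j}\lambda_i^{N_i^j}/(N_i^j!)^2 = \prod_i (\lambda_i\lambda_i^*)^{N_i^j}/(N_i^j!)^2$; and the substitution $N_i^j=\lambda_i^*+\delta_i^j$ in the exponents and factorials yields $\prod_i (\lambda_i\lambda_i^*)^{\lambda_i^*+\delta_i^j}/[(\lambda_i^*+\delta_i^j)!]^2$, which is exactly the stated form.

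The only points requiring care — and the closest thing to an obstacle — are the normalization of the truncated ensemble (the origin of the $1/\varepsilon$) and keeping the product indices straight when merging the $\lambda_i$ and $\lambda_i^*$ factors. One should also note that each per-bin sum over $N_i^j$ is an absolutely convergent exponential series, so the rearrangements of sums and products above are legitimate; this is immediate and needs no separate argument. The substantive issue — why this `average-catalogue' object faithfully proxies the expected constraint contours — is addressed in the discussion accompanying the corollary rather than in its proof.
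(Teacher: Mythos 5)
Your proposal is correct and follows essentially the same route as the paper: the first equality is just the definition of the $\varepsilon$-restricted ensemble average with its $1/\varepsilon$ normalization, and the second follows by inserting the factorized fiducial Poisson weight $P(C_j)=\prod_i e^{-\lambda_i^*}(\lambda_i^*)^{N_i^j}/N_i^j!$ and substituting $N_i^j=\lambda_i^*+\delta_i^j$. The remark about absolute convergence is unnecessary (the sum over $j$ is restricted by the step function and the identity is purely formal bookkeeping), but it is harmless.
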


\begin{defn}
Let
\begin{equation}
\mathcal{C}_{\pm} \equiv \left\{ \left\{\delta_i\right\}_{i=1}^N \left|
 \delta_i \in \{ \lceil\lambda_i^*\rceil-\lambda_i^*,
 \lfloor\lambda_i^*\rfloor-\lambda_i^* \}\right.
 \forall i \right\}\,,
\end{equation}
the set of catalogues consisting of the $2^{N}$ catalogues between the most likely
catalogue (for which $\delta_i = \lfloor\lambda_i^*\rfloor-\lambda_i^*
\,\forall i$) to the catalogue with probability $P_{\varepsilon}$ (for
which $\delta_i = \lceil\lambda_i^*\rceil-\lambda_i^*\,\forall
i$). Here, $\lceil\cdot\rceil$ and $\lfloor\cdot\rfloor$ are the
ceiling and floor operators respectively.
\end{defn}
\begin{rem}
The choice of this set of catalogues will be convenient and is suitable to define a smoothed ML estimate.
\end{rem}

We now come to the central theorem:
\begin{thm}
For the catalogue set $\mathcal{C}_{\pm}$,
\begin{equation}
\left\langle{\cal L}\right\rangle_{\varepsilon} = \sum_i\left(\lceil\lambda_i^*\rceil - \frac{1}{2} \right)\ln
\lambda_i -\lambda_i + \mathcal{O}\left(\delta^3\right) + {\rm const.}
\end{equation}
\end{thm}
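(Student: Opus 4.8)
The plan is to start from the closed form for $\left\langle{\cal L}\right\rangle_{\varepsilon}$ in the Corollary --- in which the fiducial Poisson weight $P(C_j)$ and the likelihood of $C_j$ have already been fused into the symmetric kernel $(\lambda_i\lambda_i^*)^{N_i}/(N_i!)^2$ --- and to evaluate the catalogue sum explicitly with the ensemble restricted to $\mathcal{C}_{\pm}$. Two observations do the work. First, $\Theta(P(C_j)-P_{\varepsilon})\equiv 1$ on $\mathcal{C}_{\pm}$: in each bin $\lfloor\lambda_i^*\rfloor$ is the mode of the Poisson$(\lambda_i^*)$ distribution (since $p(k{+}1)/p(k)=\lambda_i^*/(k{+}1)<1$ at $k=\lfloor\lambda_i^*\rfloor$ for non-integer $\lambda_i^*$), so replacing any bin value $\lceil\lambda_i^*\rceil$ by $\lfloor\lambda_i^*\rfloor$ only raises $P(C_j)$, and $\mathcal{C}_{\pm}$ terminates precisely at the all-ceiling catalogue with probability $P_{\varepsilon}$. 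Second, the defining constraint of $\mathcal{C}_{\pm}$ is a per-bin product constraint (each $\delta_i$ chosen independently from a two-element set), so by distributivity the sum over the $2^N$ catalogues factorizes bin by bin, giving
\begin{equation}
\left\langle{\cal L}\right\rangle_{\varepsilon}
= \frac{e^{-\sum_i(\lambda_i+\lambda_i^*)}}{\varepsilon}\prod_i\left[\frac{(\lambda_i\lambda_i^*)^{\lceil\lambda_i^*\rceil}}{(\lceil\lambda_i^*\rceil!)^2}+\frac{(\lambda_i\lambda_i^*)^{\lfloor\lambda_i^*\rfloor}}{(\lfloor\lambda_i^*\rfloor!)^2}\right]\,.
\end{equation}

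Next I would take the logarithm (the theorem's left-hand side being additive, it is to be read as $\ln\left\langle{\cal L}\right\rangle_{\varepsilon}$, with every piece depending only on $\lambda_i^*$ or on $\varepsilon$ absorbed into ${\rm const.}$). Writing $\lceil\lambda_i^*\rceil=\lfloor\lambda_i^*\rfloor+1$ and pulling the geometric-mean power $(\lambda_i\lambda_i^*)^{\lfloor\lambda_i^*\rfloor+1/2}$ out of the bracket splits each bin contribution into $(\lfloor\lambda_i^*\rfloor+\tfrac12)\ln(\lambda_i\lambda_i^*)$, a $\lambda_i^*$-only constant, and a residual $\ln\!\left(\sqrt{\lambda_i\lambda_i^*}/\lceil\lambda_i^*\rceil+\lceil\lambda_i^*\rceil/\sqrt{\lambda_i\lambda_i^*}\right)$. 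Since $\lfloor\lambda_i^*\rfloor+\tfrac12=\lceil\lambda_i^*\rceil-\tfrac12$ and $\ln(\lambda_i\lambda_i^*)=\ln\lambda_i+{\rm const.}$, the first piece is exactly $\sum_i(\lceil\lambda_i^*\rceil-\tfrac12)\ln\lambda_i$; combined with the $-\sum_i\lambda_i$ from the prefactor this reproduces the claimed leading expression, and everything that remains is the sum of residuals.

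It then remains to argue that $R\equiv\sum_i\ln\!\left(y_i+1/y_i\right)$, with $y_i\equiv\sqrt{\lambda_i\lambda_i^*}/\lceil\lambda_i^*\rceil$, is ${\rm const.}+\mathcal{O}(\delta^3)$. At the fiducial model $y_i=\lambda_i^*/\lceil\lambda_i^*\rceil$ differs from $1$ only by $O(1/\lambda_i^*)$, and over the region of parameter space where the contours live $\lambda_i-\lambda_i^*$ is small relative to $\lambda_i^*$, so $y_i-1$ is small. Using $g(y)=\ln(y+1/y)$ with $g(1)=\ln 2$, $g'(1)=0$, one has $R=N\ln 2+\tfrac12\sum_i(y_i-1)^2+\dots$; I would then expand $y_i-1$ in the joint smallness parameter $\delta$ that bundles the parameter-space deviation $\lambda_i-\lambda_i^*$ together with the inverse-count smallness $1/\lambda_i^*$, and collect: a genuinely constant $N\ln 2$ aside, every surviving term is of order $\delta^3$ or deeper, which is the asserted ${\rm const.}+\mathcal{O}(\delta^3)$.

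The main obstacle is precisely this last step: the residual does carry pieces that are formally linear and quadratic in $\lambda_i-\lambda_i^*$, but with coefficients suppressed by powers of $\lambda_i^*$, so one must pin down what ``$\delta$'' denotes --- a deviation measured in units for which the inverse expected counts count as small --- before the $\mathcal{O}(\delta^3)$ bookkeeping is honest; a crude expansion treating $\lambda_i-\lambda_i^*$ and $1/\lambda_i^*$ as both $O(1)$ would leave an unwanted $\tfrac18(y_i-1)^2$-type term. A secondary point is the non-generic case of integer $\lambda_i^*$, where $\lceil\lambda_i^*\rceil=\lfloor\lambda_i^*\rfloor$, the bin contributes a single catalogue rather than two, and the coefficient of $\ln\lambda_i$ degenerates from $\lceil\lambda_i^*\rceil-\tfrac12$ to $\lambda_i^*$; this should be flagged as an edge case within the $\mathcal{O}(\delta)$ slack of the statement. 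Finally I would verify that fixing $P_{\varepsilon}$ by $\sum_j P(C_j)\Theta(P(C_j)-P_{\varepsilon})=\varepsilon$ is consistent with $\mathcal{C}_{\pm}$ ending at the all-ceiling catalogue, so that $\ln\varepsilon$ is genuinely independent of the $\lambda_i$ and legitimately belongs to ${\rm const.}$
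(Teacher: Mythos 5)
Your route is genuinely different from the paper's, and it is the cleaner of the two. The paper keeps the sum over all $2^N$ catalogues, defines $\widehat{\Sigma}=\sum_j\prod_i(\lambda_i\lambda_i^*)^{\delta_i^j}/[(\lambda_i^*+\delta_i^j)!]^2$, Taylor-expands $\ln\widehat{\Sigma}$ to second order in the $\delta_i^j$ (legitimate since $|\delta_i^j|<1$ on $\mathcal{C}_{\pm}$), and then evaluates the first and second moments of the $\delta$'s over the catalogue set; the first-order term yields the $(\Delta_i^*-\tfrac12)\,\ell(\lambda_i,\lambda_i^*)$ correction and the second-order term is claimed to cancel identically. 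Your factorization of the catalogue sum into a product of two-term bin factors is exact, and pulling out the geometric-mean power $(\lambda_i\lambda_i^*)^{\lceil\lambda_i^*\rceil-1/2}$ reproduces the leading expression with no expansion at all, so the entire approximation is isolated in the residual $\sum_i\ln(y_i+1/y_i)$. That is a better-organized derivation, and it buys you something the paper's does not: an exact handle on the error term.

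The obstacle you flag at the end is not a weakness of your bookkeeping — it is real, and it exposes a slip in the paper's own proof. In the paper's notation $\delta$ means $\delta_i^j=N_i^j-\lambda_i^*$ and nothing else, so your residual $\ln(y_i+1/y_i)=\ln 2+\ln\cosh(x_i/2)=\ln 2+x_i^2/8+\mathcal{O}(x_i^4)$, with $x_i=\ln(\lambda_i\lambda_i^*)-2\ln\lceil\lambda_i^*\rceil$, contains a genuinely second-order, $\lambda_i$-dependent piece and is not $\mathcal{O}(\delta^3)$. The paper believes this piece cancels because its evaluation of $\sum_j\delta_i^j\delta_k^j$ averages over all four sign combinations, which is valid only for $i\neq k$; on the diagonal the two cross combinations cannot occur and one has
\begin{equation*}
\sum_j\bigl(\delta_i^j\bigr)^2=2^{N}\Bigl[(\Delta_i^*)^2-\Delta_i^*+\tfrac12\Bigr]\,,
\end{equation*}
not $2^{N}[(\Delta_i^*)^2-\Delta_i^*+\tfrac14]$. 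The bracket in the paper's second-order term is therefore $\tfrac14\tilde{\delta}_{ik}$ rather than zero, leaving behind $\tfrac18\sum_i\ell(\lambda_i,\lambda_i^*)^2$ (the paper's second-derivative formula also drops a $-2\Psi'(1+\lambda_i^*)$ diagonal piece, but that one is an honest constant). This leftover is precisely your $x_i^2/8$ to the relevant order. So the theorem holds only with the weaker error budget you propose: since $\ell(\lambda_i,\lambda_i^*)=\ln(\lambda_i/\lambda_i^*)+\mathcal{O}(1/\lambda_i^*)$, the residual is $\mathcal{O}(\sum_i 1/\lambda_i^*)$ over the region where the likelihood has support and is harmless for the application, but it is second order in the $\delta$ the paper actually defines. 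Your instinct to redefine the smallness parameter so that $|\lambda_i-\lambda_i^*|/\lambda_i^*$ and $1/\lambda_i^*$ are both counted as small is the honest way to state the result, and your flagging of the integer-$\lambda_i^*$ edge case is likewise correct.
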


\begin{proof}
The probability level
$\varepsilon$ for the catalogue set $\mathcal{C}_{\pm}$ can be estimated through
\begin{equation}
\prod_i \frac{\left(\lambda_i^*\right)^{\lceil\lambda_i^*\rceil}}
{\lceil\lambda_i^*\rceil!}
  \leq \frac{\varepsilon}{2^Ne^{-\sum_i \lambda_i^*}} \leq
\prod_{i}\frac{\left(\lambda_i^*\right)^{\lfloor\lambda_i^*\rfloor}}
{\lfloor\lambda_i^*\rfloor!}\,.
\end{equation}
We approximate
\begin{equation}
\varepsilon \approx 2^Ne^{-\sum_i
\lambda_i^*}\prod_{i}\frac{\left(\lambda_i^*\right)^{\lambda_i^*}}
{\Gamma\left(1+\lambda_i^*\right)}\,,
\end{equation}
where we have used the gamma function as a continuation of the
factorial, effectively extending the Poisson distribution to the gamma
distribution for non-integer values of $N_i$, something we will use
throughout.  We can now write
\begin{eqnarray}
\left\langle{\cal L}\right\rangle_{\varepsilon}\, & = &
2^{-N}e^{-\sum_i \lambda_i}
\sum_j \prod_{i}\frac{(\lambda_i)^{\lambda_i^*+\delta_i^j}
(\lambda_i^*)^{\delta_i^j}}{\left[(\lambda_i^*+\delta_i^j)!\right]^2}
\Gamma\left(1+\lambda_i^*\right)\,,
\end{eqnarray}
where the catalogues (indexed by $j$) are now restricted to those in
$\mathcal{C}_{\pm}$.  To proceed, we first take the logarithm of the
likelihood to separate out the catalogue-set-dependent normalization,
which is of no consequence for our discussion.  We can thus write
\begin{eqnarray}
\ln \left\langle{\cal L}\right\rangle_{\varepsilon}\, & = &
-N\ln 2
+\sum_i\left[-\lambda_i + \lambda_i^*\ln \lambda_i +
\ln\Gamma\left(1+\lambda_i^*\right) \right] +\ln \widehat{\Sigma} \,,
\end{eqnarray}
where we have defined
\begin{equation}
\widehat{\Sigma} \equiv \sum_j
\prod_{i}\frac{\left(\lambda_i\lambda_i^*\right)^{\delta_i^j}}
{\left[\left(\lambda_i^*+\delta_i^j\right)!\right]^2}\,.
\end{equation}
Taylor expanding in $\delta_i^j$ (since $|\delta_i^j|<1$ for our
catalogues) we find
\begin{eqnarray}
\nonumber
\ln \left\langle{\cal L}\right\rangle_{\varepsilon}\, & = &
-N\ln 2
+\sum_i\left[\lambda_i^*\ln \lambda_i -\lambda_i +
\ln\Gamma\left(1+\lambda_i^*\right)\right]
+\ln \left.\widehat{\Sigma}\right|_{\delta=0}
+\sum_{i,j}
\left.\left(\frac{1}{\widehat{\Sigma}}\frac{d\widehat{\Sigma}}{d\delta_i^j}
\right)\right|_{\delta=0} \delta_i^j + \\
& &  \frac{1}{2}\sum_{i,j,k,l} \left.\left[\frac{1}{\widehat{\Sigma}}
\left(\frac{d^2\widehat{\Sigma}}{d\delta_i^j d\delta_k^l } -
\frac{1}{\widehat{\Sigma}}
\frac{d\widehat{\Sigma}}{d\delta_i^j}\frac{d\widehat{\Sigma}}{d\delta_k^l}
\right)
 \right]\right|_{\delta=0} \delta_i^j\delta_k^l +
\mathcal{O}\left(\delta^3\right)\,,
\end{eqnarray}
where ``$\delta=0$'' denotes $\delta_i^j=0\,\forall i,j$. Inserting
$\widehat{\Sigma}$ and the derivatives
\begin{eqnarray}
\frac{d\widehat{\Sigma}}{d\delta_i^j} & = & \ell(\lambda_i,
\lambda_i^*)
\prod_{k}\frac{(\lambda_k\lambda_k^*)^{\delta_k^j}}
{\left[(\lambda_k^*+\delta_k^j)!\right]^2}
\,, \\
\frac{d^2\widehat{\Sigma}}{d\delta_i^j d\delta_k^l } & = &
\ell(\lambda_i, \lambda_i^*) \ell(\lambda_k, \lambda_k^*)
\prod_{m}\frac{(\lambda_m\lambda_m^*)^{\delta_m^j}}{\left[
(\lambda_m^*+\delta_m^j)!\right]^2}
 \tilde{\delta}_{jl}\,,
\end{eqnarray}
where $\ell\left(\lambda,\lambda^*\right) \equiv \ln \left(
\lambda\lambda^* \right) - 2\Psi \left( 1+\lambda^* \right)$ (the
digamma function $\Psi$ coming from the factorial as gamma function),
we obtain
\begin{eqnarray}
\nonumber
\ln \left\langle{\cal L}\right\rangle_{\varepsilon}\, & = &
\sum_i\left(\lambda_i^*\ln \lambda_i -\lambda_i\right)
+\sum_{i} \ell(\lambda_i, \lambda_i^*) 2^{-N}\sum_j \delta_i^j +
\\ & & \frac{1}{2}\sum_{i,j,k,l} 2^{-N}
\left[\ell(\lambda_i, \lambda_i^*)\ell(\lambda_k, \lambda_k^*)
\tilde{\delta}_{jl}
          - 2^{-N}\ell(\lambda_i, \lambda_i^*)\ell(\lambda_k, \lambda_k^*)
\right]
\delta_i^j\delta_k^l + \mathcal{O}\left(\delta^3\right)\\
\nonumber
& = &
\sum_i\left(\lambda_i^*\ln \lambda_i -\lambda_i\right)
+\sum_{i} \ell(\lambda_i, \lambda_i^*) 2^{-N}\sum_j \delta_i^j +
\\
\label{lnavl}
& &
\frac{1}{2}\sum_{i,k} \ell(\lambda_i, \lambda_i^*)\ell(\lambda_k,
\lambda_k^*) \left[ 2^{-N} \sum_{j}\delta_i^j\delta_k^j -
 2^{-2N} \sum_{j,l} \delta_i^j\delta_k^l \right] +
\mathcal{O}\left(\delta^3\right)\,,
\end{eqnarray}
where $\tilde{\delta}_{ij}$ is the Kronecker delta.  We can evaluate
the $\delta$-sums using our knowledge of the set of catalogues
$\mathcal{C}_{\pm}$:
\begin{eqnarray}
\label{deltasumsing}
\sum_j {\delta_i^j} & = & 2^{N-1}\left(\lceil\lambda_i^*\rceil +
\lfloor\lambda_i^*\rfloor-2\lambda_i^*\right) =
2^N\left(\Delta_i^* - \frac{1}{2} \right)\,, \\
\label{deltasumdoub}
\sum_{j,l} {\delta_i^j\delta_k^l} & = & 2^{2N}\left(\Delta_i^* -
\frac{1}{2} \right)
\left(\Delta_k^* - \frac{1}{2} \right) = 2^{2N}\left[
\Delta_i^*\Delta_k^* - \frac{1}{2}\left(\Delta_i^*+\Delta_k^*\right) +
\frac{1}{4}
\right] \,,
\\
\label{deltasumdoubhalf}
\sum_{j} {\delta_i^j\delta_k^j} & = &
\frac{2^N}{4}\left[
\Delta_i^*\Delta_k^* + \Delta_i^*\left(\Delta_k^*-1\right)
+\left(\Delta_i^*-1\right)\Delta_k^*+\left(\Delta_i^*-1\right)
\left(\Delta_k^*-1\right)
\right] = 2^N\left[\Delta_i^*\Delta_k^* -\frac{1}{2}
\left(\Delta_i^*+\Delta_k^*\right)
+\frac{1}{4}\right]\,,
\end{eqnarray}
where we have defined $\Delta_{i}^* \equiv \lceil\lambda_i^*\rceil -
\lambda_i^*$ and excluded the possibility that
$\lceil\lambda_k^*\rceil = \lfloor\lambda_k^*\rfloor =
\lambda_k^*$. Inserting (\ref{deltasumdoub}) and
(\ref{deltasumdoubhalf}) in (\ref{lnavl}) we find that the
second-order term is zero due to cancellation between its two
constituent terms. Hence, also inserting (\ref{deltasumsing}), we
finally arrive at
\begin{eqnarray}
\ln \left\langle{\cal L}\right\rangle_{\varepsilon}\, & = &
\sum_i\left[\lambda_i^*\ln \lambda_i -\lambda_i+
\ell(\lambda_i, \lambda_i^*)\left(\Delta_i^* - \frac{1}{2}
\right)\right]  + \mathcal{O}\left(\delta^3\right)\\
\label{lnleps}
 & = &
\sum_i\left[\left(\lceil\lambda_i^*\rceil - \frac{1}{2} \right)\ln
\lambda_i -\lambda_i+
\left(\Delta_i^* - \frac{1}{2} \right)\left(\ln \lambda_i^* -
2\Psi(1+\lambda_i^*)\right)\right]  +
\mathcal{O}\left(\delta^3\right)\,.
\end{eqnarray}
\end{proof}

The theorem states that a good approximation to $\left\langle{\cal
L}\right\rangle_{\varepsilon}$ is given by using $N_i =
\lceil\lambda_i^*\rceil - 1/2$ in a single-catalogue likelihood $\cal
L$.  This expression, however, does give rise to an offset in the
best-fitting values away from the true means, associated with shot noise. As we
are using the catalogue construction as a way of defining a meaningful
expected likelihood which is not just an arbitrary point estimate, we
are not really interested in this offset (and would like to separate it from sources of bias); rather the variance is what
concerns us. Therefore, we propose using the very similar expression
\begin{eqnarray}
\label{meanlnl}
\left\langle \ln{\cal L}\right\rangle\, & = &
\sum_i\left(\lambda_i^*\ln \lambda_i -\lambda_i\right) + {\rm const.}
\end{eqnarray}
The best-fitting values for $\lambda_i$ of this expression are equal to
the true means $\lambda_i^*$.  However, how do the standard deviations
compare?  The standard deviations are given by
\begin{equation}
\sigma_{\varepsilon,i} = \sqrt{\lceil\lambda_i^*\rceil - 1/2}\,, \qquad
\sigma_{{\rm mean},i} = \sqrt{\lambda_i^*}\,,
\end{equation}
where $\sigma_{\varepsilon,i}$ is the standard deviation of
equation~(\ref{lnleps}) and $\sigma_{{\rm mean},i}$ the standard deviation
of equation~(\ref{meanlnl}). Upper and lower limits for their ratio can
then be given as
\begin{eqnarray}
\frac{1}{\sqrt{1+1/2\lambda_i^*}} < \frac{\sigma_{{\rm
mean},i}}{\sigma_{\varepsilon,i}} <
\frac{1}{\sqrt{1-1/2\lambda_i^*}}\,.
\end{eqnarray}
It is clear that for $\lambda_i^*<1$ the relative error will become
large as $\lambda_i^*$ decreases. Again, this is due to shot
noise. One could always make bins large enough that at least a few
elements fall in each bin, ensuring only moderate relative errors in
the standard deviations. Such a binning might however not be optimal
or even close to, and thus reflect the underlying distribution poorly.
It appears that no general conclusion can be drawn here. However, if
we specify a dependence $\lambda_i=\lambda_i^*(\theta/\theta^*)^{a_i}$
for the $\lambda_i$'s on some parameter $\theta$, as is typically the
case and certainly here, we can write the following:
\begin{eqnarray}
 \left\langle\ln{\cal L}\right\rangle\, =
\sum_i\left(\lambda_i^*\ln \lambda_i -\lambda_i\right) + {\rm const.}
 =  \ln \theta \sum_i a_i \lambda_i^*  - \sum_i
\lambda_i^*\left(\frac{\theta}{\theta^*}\right)^{a_i} + {\rm const.}
\end{eqnarray}
\begin{eqnarray}
\ln \left\langle{\cal L}\right\rangle_{\varepsilon}\,  =
\sum_i\left[\left(\lceil\lambda_i^*\rceil-\frac{1}{2}\right)\ln
\lambda_i -\lambda_i\right] + {\rm const.}
 =  \ln \theta\sum_i
a_i\left(\lceil\lambda_i^*\rceil-\frac{1}{2}\right) - \sum_i
\lambda_i^*\left(\frac{\theta}{\theta^*}\right)^{a_i} + {\rm const.}
\end{eqnarray}
Clearly, the only difference between $\ln \left\langle{\cal
L}\right\rangle_{\varepsilon}$ and $\left\langle\ln{\cal
L}\right\rangle$ comes from the difference in the first sum. Naively,
we would not expect this to differ much between the two cases,
particularly for a binning that represents the distribution well. What
would be the expected value? Consider the following quantity:
\begin{eqnarray}
s_{\rm rel} \equiv \frac{\sum_i a_i\lambda_i^*}{\sum_i
a_i\left(\lceil\lambda_i^*\rceil-\frac{1}{2}\right)} =
\frac{\sum_i a_i \lambda_i^*}{\sum_i a_i \left(\lambda_i^* +
(\lceil\lambda_i^*\rceil-\lambda_i^*)-\frac{1}{2}\right)} \,.
\end{eqnarray}
One would generally expect that $(\lceil\lambda^*\rceil-\lambda^*) \in
U(0,1)$ or at least a similarly symmetric distribution across the
bins, so that $\left\langle
\lceil\lambda^*\rceil-\lambda^*\right\rangle = 1/2$. We thus expect
\begin{eqnarray}
\left\langle s_{\rm rel} \right\rangle =
\frac{\sum_i a_i \lambda_i^*}{\left\langle\sum_i
a_i\left(\lceil\lambda_i^*\rceil-\frac{1}{2}\right)\right\rangle} =
\frac{\sum_i a_i \lambda_i^*}{\sum_i a_i \lambda_i^*} = 1 \,.
\end{eqnarray}
For typical XCS catalogues, even if we assign uncorrelated random
exponents $a_i$, the probability distribution for $s_{\rm rel}$ is
quite generally very sharply peaked at or close to $s_{\rm rel}=1$. An
example is shown in Fig.~\ref{fig:srel}, for which $a_i \in U(-5,5)$.
\begin{figure}
\begin{center}
\includegraphics[width=0.5\linewidth]{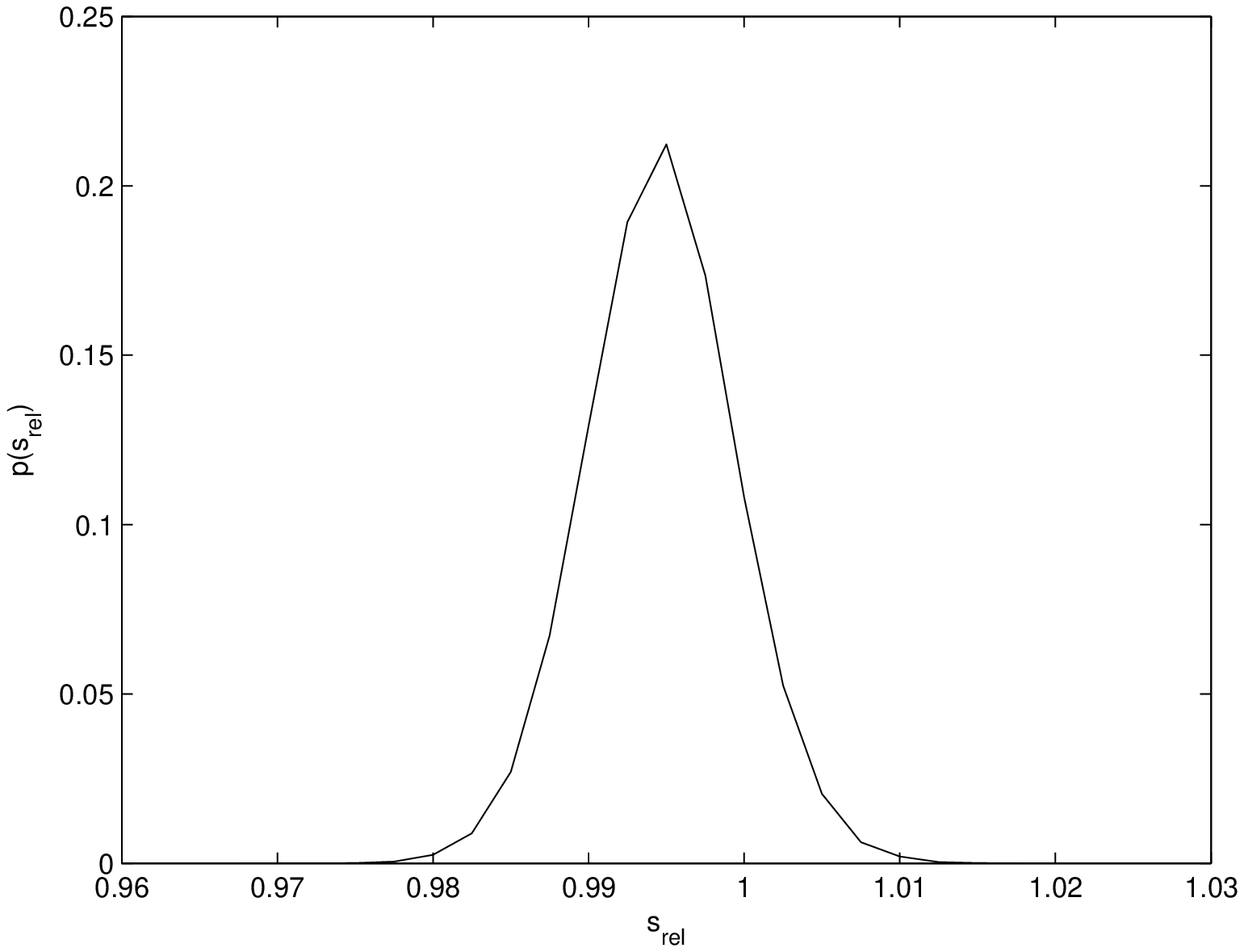}
\caption{The probability density function for $s_{\rm rel}$ for a
typical XCS catalogue with $a_i \in U(-5,5)$.} \label{fig:srel}
\end{center}
\end{figure}
Furthermore, finding typical $a_i$'s for the various XCS models, we
find that $s_{\rm rel} = 1 + \mathcal{O}(10^{-2})$.

In conclusion, the likelihood $\left\langle\ln{\cal
L}\right\rangle$ of the average catalogue is a good approximation to
the average likelihood $\ln \left\langle{\cal
L}\right\rangle_{\varepsilon}$ of our set of catalogues
$\mathcal{C}_{\pm}$, and can also generally be expected to be a good
approximation in other similar applications. We have confirmed
this by explicitly comparing to the likelihoods for a Poisson sample
of catalogues, as shown in Fig.~\ref{fig:contoursrand} in the main
text.

\bsp
\end{document}